\title{Monotone-Value Neural Networks: Exploiting Preference Monotonicity in Combinatorial Assignment\footnotemark[1]}
\author{
Jakob Weissteiner$^{1,3}$\footnotemark[2]
\and
Jakob Heiss$^{2,3}$\footnotemark[2]\and
Julien Siems$^1$\footnotemark[2]\And
Sven Seuken$^{1,3}$
\affiliations
$^1$University of Zurich\\
$^2$ETH Zurich\\
$^3$ETH AI Center\\
\emails
weissteiner@ifi.uzh.ch,
jakob.heiss@math.ethz.ch,
juliensiems@gmail.com,
seuken@ifi.uzh.ch
}
\begin{document}
\maketitle

\begin{abstract}
    Many important resource allocation problems involve the \emph{combinatorial assignment} of items, e.g., auctions or course allocation.
    Because the bundle space grows exponentially in the number of items, preference elicitation is a key challenge in these domains. Recently, researchers have proposed ML-based mechanisms that outperform traditional mechanisms while reducing preference elicitation costs for agents. However, one major shortcoming of the ML algorithms that were used is their disregard of important prior knowledge about agents' preferences. To address this, we introduce \emph{monotone-value neural networks (MVNNs)}, which are designed to capture combinatorial valuations, while enforcing \emph{monotonicity} and \emph{normality}. On a technical level, we prove that our MVNNs are universal in the class of monotone and normalized value functions, and we provide a mixed-integer linear program (MILP) formulation to make solving MVNN-based winner determination problems (WDPs) practically feasible. We evaluate our MVNNs experimentally in spectrum auction domains. Our results show that MVNNs improve the prediction performance, they yield state-of-the-art allocative efficiency in the auction, and they also reduce the run-time of the WDPs. Our code is available on GitHub: \url{https://github.com/marketdesignresearch/MVNN}.
\end{abstract}
\renewcommand{\thefootnote}{\fnsymbol{footnote}}
\footnotetext[1]{This paper is the slightly updated version of \cited{weissteiner2022monotone} published at IJCAI'22 including the appendix.}
\footnotetext[2]{These authors contributed equally to this paper.}
\renewcommand{\thefootnote}{\arabic{footnote}}

\section{Introduction}\label{sec:Introduction}
Many important economic problems involve the \emph{combinatorial assignment} of multiple indivisible items to multiple agents. In domains \textit{with money}, prominent examples include \emph{combinatorial auctions (CAs)} and \emph{combinatorial exchanges (CEs)}. In CAs, heterogeneous items are allocated amongst a set of bidders, e.g., for the sale of spectrum licenses \cite{cramton2013spectrumauctions}. In CEs, a set of items is allocated between multiple agents who can be sellers \emph{and} buyers at the same time, e.g., for the reallocation of catch shares \cite{bichler2019designing}. In domains \emph{without money}, a popular example is \emph{combinatorial course allocation}, where course seats are allocated to students at large business schools \cite{budish2011combinatorial}.

What all of these domains have in common is that the agents can report their values on \textit{bundles} of items rather than only on individual items. This allows them to express more complex preferences, i.e., an agent's valuation of a bundle is not simply the sum of each individual item's value, but it can be more (\textit{complementarity}) or less (\textit{substitutability}). However, since the bundle space grows exponentially in the number of items, agents cannot report values for all bundles in settings with more than a modest number of items. Thus, parsimonious \emph{preference elicitation} is key for the design of practical combinatorial assignment mechanisms. 

In this paper, we present a new machine learning approach that exploits prior (structural) knowledge about agents' preferences and can be integrated well into iterative market mechanisms. Our contribution applies to any combinatorial assignment problem. However, we present our algorithms in the context of a CA specifically, to simplify the notation and because there exist well-studied preference generators for CAs that we can use for our experimental evaluation.

For CAs with general valuations, \cited{nisan2006communication} have shown that exponential communication in the number of items is needed in the worst case to find an optimal allocation of items to bidders, i.e., to ensure full efficiency. Thus, for general valuations, practical CAs cannot provide efficiency guarantees in large domains. In practice, \textit{iterative combinatorial auctions (ICAs)} are employed, where the auctioneer interacts with bidders over multiple rounds, eliciting a \textit{limited} amount of information, aiming to find a highly efficient allocation.
ICAs are widely used, e.g., for the sale of licenses to build offshore wind farms \cite{ausubel2011auction}. The provision of spectrum licenses via the \emph{combinatorial clock auction (CCA) \cite{ausubel2006clock}} has generated more than \$20 billion in total revenue \cite{ausubel2017practical}. Therefore, increasing the efficiency of such real-world ICAs by only 1\% point translates into monetary gains of hundreds of millions of dollars.

\subsection{ML-based Auction Design}\label{subsec:Machine Learning-based CAs}
In recent years, researchers have successfully integrated machine learning (ML) algorithms into the design of CAs to improve their performance. 
\cited{dutting2019optimal} and \cited{rahme2020permutation} used neural networks (NNs) to learn entire auction mechanisms from data, following the automated mechanism design paradigm. \cited{brero2019fast} studied a Bayesian ICA using probabilistic price updates to achieve faster convergence to an efficient allocation.

Most related to this paper is the work by \citeauthor{brero2018combinatorial} \shortcite{brero2018combinatorial,brero2021workingpaper}, who developed a value-query-based ICA that achieves even higher efficiency than the widely used CCA.
In follow-up work, \cited{weissteiner2020deep} extended their work by integrating neural networks (NN) in their mechanisms and could further increase the efficiency. Finally, \cited{weissteiner2022fourier} used Fourier transforms (FTs) to leverage different notions of sparsity of value functions in preference elicitation. However, despite these advances, it remains a challenging problem to find the efficient allocation while keeping the elicitation cost for bidders low. Even state-of-the-art approaches suffer from significant efficiency losses, highlighting the need for better preference elicitation algorithms.

We show in this paper that these limitations can be partially explained due to the usage of poor, non-informative ML-algorithms, which either do not include important prior domain knowledge or make too restrictive assumptions about the bidders' value functions. \citeauthor{brero2018combinatorial} \shortcite{brero2018combinatorial,brero2021workingpaper} used support vector regressions (SVRs) with quadratic kernels which can only learn up to two way interactions between items and do not account for an important monotonicity property of bidders' value functions. While the fully-connected feed-forward NNs used by \cited{weissteiner2020deep} are more expressive, they also do not account for this monotonicity property. In particular when operating with only a small number of data points (which is the standard in market mechanisms, because preference elicitation is costly), this can cause significant efficiency losses.

Over the last decade, major successes in ML were made by specialized NN architectures (e.g., Convolutional Neural Networks) that incorporate domain-specific prior knowledge to improve generalization \cite{bronstein2017geometric}. With this paper, we follow the same paradigm by incorporating prior knowledge about monotone preferences into an NN architecture to improve generalization, which is key for a well-functioning preference elicitation algorithm.

Several other approaches for incorporating monotonicity into NNs have previously been proposed. However, for these architectures, it is not known how the NN-based winner determination problem (WDP) could be solved quickly or they have other limitations.
\cited{sill1998monotonic} proposes only a shallow architecture which violates the normalization property. 
\cited{you2017deep} propose a complicated non-standard architecture, where no computationally feasible MILP formulation of the corresponding WDP is known.
\cited{wehenkel2019unconstrained} implement monotonicity by representing the target function as an integral of an NN and 
thus the WDP would result in a computationally infeasible MILP.
\cited{liu2020certified} train NNs with successively higher regularization until a MILP based verification procedure guarantees monotonicity. The repeated retraining and verification lead to high computational cost.

In contrast, our approach is particularly well suited for combinatorial assignment, because (i) our NN-based WDP can be formulated as a succinct MILP and thus solved quickly in practice and (ii) we propose a generic fully-connected feed-forward architecture with arbitrary number of hidden layers which can be trained efficiently.

\subsection{Our Contribution}\label{subsec:Contribution}
We make the  following contributions:
\begin{enumerate}[leftmargin=*,topsep=0pt,partopsep=0pt, parsep=0pt]
    \item We introduce \emph{monotone-value neural networks (MVNNs)}, a new class of NNs. For our MVNNs, we use as activation function \emph{bounded ReLU (bReLU)} and enforce constraints on the parameters such that they are normalized and fulfill a monotonicity property (\Cref{sec:Utility Networks}). These MVNNs are specifically suited to model monotone (combinatorial) value functions in economic settings.
    \item On a technical level, we provide the following two theorems (\Cref{subsec:Theoretical Analysis and MIP-Formulation}): First, we prove that MVNNs are universal in the class of monotone and normalized combinatorial value functions, i.e., one can represent \emph{any} value function with arbitrarily complex substitutabilities and complementarities exactly as an MVNN. Second, we show how to formulate the MVNN-based WDP as a MILP, which is key to calculate optimal allocations in practice.
    \item We experimentally evaluate the learning performance of MVNNs vs. NNs in four different spectrum CA domains and show that MVNNs are significantly better at modelling bidders' combinatorial value functions (\Cref{sec:Prediction Performance of MVNNs}).    
    \item Finally, we experimentally investigate the performance of MVNNs vs. plain NNs when integrated into an existing ML-based ICA (MLCA) and compare them also to the FT-based method by \cited{weissteiner2022fourier}. We show that MVNNs lead to substantially smaller efficiency losses than existing state-of-the-art mechanisms (\Cref{sec:MVNN-based Iterative CA}).
\end{enumerate}

\section{Preliminaries}\label{sec:Preliminaries}
In this section, we present our formal model and review the ML-based ICA by \cited{brero2021workingpaper}.

\subsection{Formal Model for ICAs}\label{subsec:Formal Model for ICAs}
We consider a CA with $n$ bidders and $m$ indivisible items. Let $N=\{1,\ldots,n\}$ and $M=\{1,\ldots,m\}$ denote the set of bidders and items, respectively. We denote with $x\in \X=\{0,1\}^m$ a bundle of items represented as an indicator vector, where $x_{j}=1$ iff item $j \in M$ is contained in $x$. Bidders' true preferences over bundles are represented by their (private) value functions $v_i: \X\to \R_+,\,\, i \in N$, i.e., $v_i(x)$ represents bidder $i$'s true value for bundle $x$.

By $a=(a_1,\ldots,a_n) \in \X^n$ we denote an allocation of bundles to bidders, where $a_i$ is the bundle bidder $i$ obtains. We denote the set of \emph{feasible} allocations by $\F=\left\{a \in \X^n:\sum_{i \in N}a_{ij} \le 1, \,\,\forall j \in M\right\}$. We assume that bidders' have quasilinear utilities $u_i$, i.e., for payments $p\in \R^n_+,\,  u_i(a,p) = v_i(a_i) - p_i$. This implies that the (true) \emph{social welfare} $V(a)$ of an allocation $a$ is equal to the sum of all bidders' values, i.e., 
$V(a)=\sum_{i\in N} u_i(a_i,p_i) + u_{\text{auctioneer}}(p)=\sum_{i\in N} v_i(a_i)-p_i +\sum_{i\in N} p_i=\sum_{i\in N} v_i(a_i)$. We let $a^* \in \argmax_{a \in {\F}}V(a)$ be a social-welfare maximizing, i.e., \textit{efficient}, allocation. The \emph{efficiency} of any allocation $a \in \F$ is measured as $V(a)/V(a^*)$.

An ICA \textit{mechanism} defines how the bidders interact with the auctioneer and how the final allocation and payments are determined. We denote a bidder's (possibly untruthful) reported value function by $\hvi{}:\X\to\R_+$. In this paper, we consider ICAs that ask bidders to iteratively report their values $\hvi{x}$ for particular bundles $x$ selected by the mechanism. A set of $L\in \N$ reported bundle-value pairs of bidder $i$ is denoted as ${R_i=\left\{\left(x^{(l)},\hvi{x^{(l)}}\right)\right\}_{l=1}^L,\,x^{(l)}\in \X}$. Let $R=(R_1,\ldots,R_n)$ denote the tuple of reported bundle-value pairs obtained from all bidders. We define the \textit{reported social welfare} of an allocation $a$ given $R$ as $\hV{a|R}:=\sum_{i \in N:\, \left(a_i,\hvi{a_i}\right)\in R_i}\hvi{a_i},$ where $\left(a_i,\hvi{a_i}\right)\in R_i$ ensures that only values for reported bundles contribute. The final optimal allocation $a^*_{R}\in \F$ and payments $p(R)\in \R^n_+$ are computed based on the elicited reports $R$ \emph{only}. More formally, the optimal allocation $a^*_{R}\in \F$ given the reports $R$ is defined as 
\begin{align}\label{WDPFiniteReports}
a^*_{R} \in \argmax_{a \in {\F}}\hV{a|R}.
\end{align}

As the auctioneer can generally only query each bidder $i$ a limited number of bundles $|R_i| \leq \Qmax$ (e.g., $\Qmax=100$), the mechanism needs a sophisticated preference elicitation algorithm, with the goal of finding a highly efficient final allocation $a^*_{R}$ with a limited number of value queries.

\subsection{A Machine Learning-powered ICA}\label{subsec:A Machine Learning powered ICA}
We now review the \textit{machine learning-powered combinatorial auction (MLCA)} by \cited{brero2021workingpaper}. Interested readers are referred to \Appendixref{sec:appendix_A Machine Learning powered ICA}{Appendix A}, where we present MLCA in detail.

MLCA proceeds in rounds until a maximum number of value queries per bidder $\Qmax$ is reached. In each round, a generic ML algorithm $\mathcal{A}_i$ is trained for every bidder $i\in N$ on the bidder's reports $R_i$. Next, MLCA generates new value queries $\qnew=(\qnew_i)_{i=1}^{n}$ with $\qnew_i\in \X \setminus R_i$ by solving a ML-based WDP $\qnew \in \argmax\limits_{a \in {\F}}\sum\limits_{i \in N} \mathcal{A}_i(a_i)$. 

The idea is the following: if $\mathcal{A}_i$ are good surrogate models of the bidders' true value functions then $\qnew$ should be a good proxy of the efficient allocation $a^*$ and thus provide the auctioneer with valuable information. Additionally, in a real-world MLCA, bidders' are always allowed to report values of further bundles that they deem potentially useful (``push''-bids)
. This mitigates the risk of bidders not getting asked the right queries.
In our experiments, MLCA achieves already state-of-the-art results without making use of any ``push''-bids (mathematically additional ``push''-bids can only improve the results further).


At the end of each round, MLCA receives reports $\Rnew$ from all bidders for the newly generated queries $\qnew$, and updates the overall elicited reports $R$. When $\Qmax$ is reached, MLCA computes an allocation $a^*_R$ that maximizes the \emph{reported} social welfare (see \Cref{WDPFiniteReports}) and determines VCG payments $p(R)$ based on the reported values (see \Appendixref[Appendix ]{def:vcg_payments}{Definition~B.1}).

\begin{remark}[IR, No-Deficit, and Incentives of MLCA]
\cited{brero2021workingpaper} showed that MLCA satisfies \emph{individual rationality (IR)} and \emph{no-deficit}, with any ML algorithm $\mathcal{A}_i$. Furthermore, they studied the incentive properties of MLCA; this is important, given that opportunities for manipulations might lower efficiency. Like all deployed spectrum auctions (including the CCA \cite{ausubel2017practical}) MLCA is not strategyproof. However, \cited{brero2021workingpaper} argued that it has good incentives in practice; and given two additional assumptions, bidding truthfully is an ex-post Nash equilibrium in MLCA. Their analyses apply to MLCA using any ML algorithm, and therefore also to an MVNN-based MLCA. We present a more detailed summary of their incentive analysis in \Appendixref{sec:appendix_Incentives of MLCA}{Appendix B}.
\end{remark}

\section{Monotone-Value Neural Networks}\label{sec:Utility Networks}

In combinatorial assignment problems, value functions are used to model each agent's (reported) value for a bundle of items ($\hvi{}: \{0,1\}^{m}\to \mathbb{R}_+$). However, while the bundle space grows exponentially with the number of items, the agents' value functions often exhibit useful structure that can be exploited. A common assumption is monotonicity:
\begin{enumerate}[label=D\arabic*,align=left, leftmargin=*,topsep=2pt]
    \item[\textbf{(M)}]\textbf{Monotonicity}~(\emph{``additional items increase value''}):\\ For $A,B \in 2^M$: if $A\subseteq B$ it holds that $\hvi{A}\le \hvi{B}$.\footnote{We slightly abused the notation here by using sets instead of their corresponding indicator vectors as arguments of $\hvi{}$.}
\end{enumerate}
This property is satisfied in many market domains. For example, in many CAs, bidders can freely dispose of unwanted items; in combinatorial course allocation, students can just drop courses they have been assigned. However, prior work on ML-based market design \cite{weissteiner2020deep,weissteiner2022fourier,brero2021workingpaper} has not taken this property into account, which negatively affects the performance (see \Cref{sec:Prediction Performance of MVNNs,sec:MVNN-based Iterative CA}).

For ease of exposition, we additionally assume that the value functions are normalized:  
\begin{enumerate}[label=D\arabic*,align=left, leftmargin=*,topsep=2pt]
    \item[\textbf{(N)}]\textbf{Normalization}~(\emph{''no value for empty bundle''}):\\ $\hat{v}_i(\emptyset)=\hat{v}_i((0,\ldots,0)):=0$
\end{enumerate}
Note that this property is not required by our method and can be easily adapted to be any other fixed value, or to be a learned parameter. In the following, we denote with
\begin{align}\label{eq:Vmon}
    \Vmon:=\{\hvi{}:\X \to \Rp|\, \text{satisfy \textbf{(N)} and \textbf{(M)}}\}
\end{align}
the set of all value functions, that satisfy the \emph{normalization} and \emph{monotonicity} property. Next, we introduce \textit{Monotone-Value Neural Networks (MVNNs)} and show that they span the entire set $\Vmon$. Thus, MVNNs are specifically suited to all applications with monotone value functions.

\begin{definition}[MVNN]\label{def:MVNN}
		An MVNN $\MVNNi{}:\X \to \mathbb{R}_+$  for bidder $i\in N$ is defined as
		\begin{equation}\label{eq:MVNN}
		\MVNNi{x} =W^{i,K_i}\phiu_{0,t}\left(\ldots\phiu_{0,t}(W^{i,1}x+b^{i,1})\ldots\right),
		\end{equation}
		\begin{itemize}[leftmargin=*,topsep=0pt,partopsep=0pt, parsep=0pt]
		\item $K_i+1\in\mathbb{N}$ is the number of layers ($K_i-1$ hidden layers),
		\item $\phiu_{0,t}{}$ is the \emph{bounded ReLU (bReLU)}\footnote{bReLUs have been widely used in practice, e.g., to enhance training stability in visual pattern recognition \cite{liew2016bounded}.} activation function with cutoff $t>0$:
		\begin{align}\label{itm:MVNNactivation}
		\phiu_{0,t}{}(z):=\min(t, \max(0, z))
		\end{align}
		\item $W^i:=(W^{i,k})_{k=1}^{K_i}$ with $W^{i,k}\ge0$ and $b^i:=(b^{i,k})_{k=1}^{K_i}$ with $b^{i,k}\le0$ denote a tuple of non-negative weights and non-positive biases of dimensions $d^{i,k}\times d^{i,k-1}$ and $d^{i,k}$, whose parameters are stored in $\theta=(W^i,b^i)$.\footnote{We apply a linear readout map to the last hidden layer, i.e, no $\phiu_{0,t}$ and $b^{i,K_i}:=0$. By setting $b^{i,K_i}\neq0$ with \emph{trainable=False}, the MVNN can model any other value than zero in the normalization property. By not restricting $b^{i,k}\le0$ and setting $b^{i,K_i}\neq0$ with \emph{trainable=True} one can also learn the value for the empty bundle.}
		\end{itemize}
\end{definition}
For implementation details of MVNNs we refer to \Appendixref{subsec:appendix_Implementation Details_MVNNs}{Appendix~C.6}. Unless explicitly stated, we consider from now on a cutoff of $t=1$ for the bReLU, i.e., $\phiu(z):=\phiu_{0,1}(z)=\min(1, \max(0, z))$.
Next, we discuss the choice of bReLU.

\paragraph{Choice of bReLU} (i) The constraints on the weights and biases enforce monotonicity of MVNNs (in fact for \emph{any} monotone activation).
(ii) For universality (see \Cref{thm:universality}) we need a \emph{bounded} monotone non-constant
activation, i.e., with ReLUs and our constraints one cannot express substitutabilities. (iii) for the MILP (see \Cref{theo:util_mip}), we need a \emph{piecewise linear} activation (e.g.,
with sigmoids we could not formulate a MILP). 
Taking all together, bReLU is the simplest
bounded, monotone, non-constant, piecewise-linear activation (see \Appendixref[Appendix ]{rem:thmUniversality,rem:WhybReLU}{Remarks~C.2 and~C.3} for a detailed discussion).

\begin{remark}
For applications where value functions are expected to be ''almost`` (but not completely) monotone, one can adapt MVNNs to only have \emph{soft monotonicity constraints} by implementing the constraints on the weights and biases via regularization, e.g., $\sum_{i,k,j,l}\max(0,-W^{i,k}_{j,l})$.
This results in soft-MVNNs that can model non-monotone changes in some items if the data evidence is strong.
\end{remark}

\subsection{Theoretical Analysis and MILP-Formulation}\label{subsec:Theoretical Analysis and MIP-Formulation}
Next, we provide a theoretical analysis of MVNNs and present a MILP formulation of the MVNN-based WDP.

\begin{lemma}\label{lem:normalized and monoton}
    Let $\MVNNi{}:\X\to \Rp$ be an MVNN (\Cref{def:MVNN}). Then it holds that 
    $\MVNNi[(W^{i},b^{i})]{}\in\Vmon$ for all $W^{i}\ge0$ and $b^{i}\le0$.
\end{lemma}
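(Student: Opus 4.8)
The plan is to derive both properties directly from the sign constraints $W^{i,k}\ge 0$ and $b^{i,k}\le 0$, using only two elementary facts about the bReLU: it is non-decreasing on $\mathbb{R}$, and $\phiu_{0,t}(z)=0$ for every $z\le 0$.

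For \textbf{(N)}, I would evaluate $\MVNNi{\cdot}$ at the all-zeros vector and propagate forward layer by layer. The pre-activation of the first hidden layer is $b^{i,1}\le 0$ (since $W^{i,1}$ is applied to the zero vector), so the bReLU outputs the zero vector; inductively on the layer index $k$, if the output of layer $k-1$ is the zero vector then the pre-activation of layer $k$ equals $b^{i,k}\le 0$ and the bReLU again returns the zero vector. Hence every hidden layer outputs zero, and the final linear readout yields $\MVNNi{(0,\dots,0)}=0$ (recall $b^{i,K_i}=0$), which is exactly the normalization property.

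For \textbf{(M)}, the idea is to observe that $\MVNNi{\cdot}$ is a composition of \emph{coordinate-wise non-decreasing} maps and is therefore itself coordinate-wise non-decreasing on $\mathbb{R}^m$. Concretely: (i) any affine map $y\mapsto Wy+b$ with $W\ge 0$ is coordinate-wise non-decreasing, since each output coordinate is a non-negative combination of the $y_j$ plus a constant; (ii) applying $\phiu_{0,t}$ coordinate-wise is coordinate-wise non-decreasing because $\phiu_{0,t}$ is non-decreasing on $\mathbb{R}$; (iii) a composition of coordinate-wise non-decreasing maps is coordinate-wise non-decreasing. By \Cref{eq:MVNN}, $\MVNNi{\cdot}$ is obtained by alternating maps of type (i) (the affine maps $W^{i,k}(\cdot)+b^{i,k}$, which have $W^{i,k}\ge 0$) and type (ii), ending with the affine readout $W^{i,K_i}(\cdot)$ with non-negative weights; hence it is coordinate-wise non-decreasing. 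Then, for $A\subseteq B$ in $2^M$, the corresponding indicator vectors $x^A,x^B\in\X$ satisfy $x^A\le x^B$ component-wise, so $\MVNNi{A}=\MVNNi{x^A}\le\MVNNi{x^B}=\MVNNi{B}$. I would also note in passing that non-negativity of the hidden activations together with $W^{i,K_i}\ge 0$ gives $\MVNNi{x}\ge 0$ for all $x\in\X$, so that $\MVNNi{\cdot}$ genuinely lands in $\Rp$ and thus in $\Vmon$.

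I do not expect a real obstacle here; the argument is essentially bookkeeping. The only points needing care are (a) handling the readout layer consistently in both parts — it carries no bReLU and has zero bias, which is precisely why it preserves both the zero at the origin and monotonicity — and (b) stating the notion of ``coordinate-wise non-decreasing map'' cleanly enough that the composition step (iii) is unambiguous when the intermediate widths $d^{i,k}$ differ.
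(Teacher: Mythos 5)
Your proof is correct and follows essentially the same route as the paper's: normalization comes from $\phiu_{0,t}(z)=0$ for $z\le 0$ together with the non-positive biases, and monotonicity from the non-negative weights combined with the fact that $\phiu_{0,t}$ is non-decreasing (the paper simply states these two observations without the layer-by-layer bookkeeping you spell out). No gaps.
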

We provide the proof for \Cref{lem:normalized and monoton} in \Appendixref{subsec:appendix_proofLemmaUniversality}{Appendix~C.1}. Next, we state our main theorem about MVNNs.

\begin{theorem}[Universality]\label{thm:universality} \emph{Any} value function $\hvi{}:\X\to\Rp$ that satisfies \textbf{(N)} and \textbf{(M)} can be represented exactly as an MVNN~$\MVNNi{}$ from \Cref{def:MVNN}, i.e.,
{
\begin{align}
\Vmon=\left\{\MVNNi[(W^i,b^i)]{}: W^{i}\ge0, b^{i}\le0 \right\}.
\end{align}}
\end{theorem}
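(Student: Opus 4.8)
The plan is to establish the non‑trivial inclusion $\Vmon\subseteq\{\MVNNi[(W^i,b^i)]{}:W^{i}\ge0,\,b^{i}\le0\}$, since the reverse inclusion is exactly \Cref{lem:normalized and monoton}. So I would fix an arbitrary $\hvi{}\in\Vmon$ and construct a concrete MVNN with two hidden layers that agrees with $\hvi{}$ on every bundle of the finite domain $\X$.

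First I would rewrite $\hvi{}$ via a ``layer‑cake'' decomposition. Let $0=v_0<v_1<\dots<v_r$ be the distinct values attained by $\hvi{}$ (with $v_0=0$ by \textbf{(N)}), and put $U_s:=\{x\in\X:\hvi{x}\ge v_s\}$. By monotonicity \textbf{(M)} each $U_s$ is an up‑set (upward closed in the subset order), and telescoping gives $\hvi{x}=\sum_{s=1}^{r}(v_s-v_{s-1})\,\mathbf{1}[x\in U_s]$ with all coefficients $v_s-v_{s-1}>0$. Hence it suffices to produce, for each $s$, an MVNN sub‑block computing $\mathbf{1}[x\in U_s]$ and then combine these blocks with a non‑negative linear readout.

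Then I would build the two hidden layers explicitly. In the first hidden layer I place, for every non‑empty bundle $b\in\X$, one bReLU neuron with weight vector $b$ and bias $-(|b|-1)\le0$; its pre‑activation $\sum_{j:\,b_j=1}x_j-(|b|-1)$ equals $1$ when $x\supseteq b$ and is $\le0$ otherwise, so the neuron outputs exactly $\mathbf{1}[x\supseteq b]$, and the MVNN sign constraints hold. In the second hidden layer I place, for each level $s$, one bReLU neuron with weight $1$ on the first‑layer neurons indexed by the minimal elements $\min(U_s)$ of $U_s$ and $0$ elsewhere, and bias $0$: since in a finite poset a point lies in an up‑set iff it dominates one of that set's minimal elements, the pre‑activation $\sum_{b\in\min(U_s)}\mathbf{1}[x\supseteq b]$ is a non‑negative integer that is $\ge1$ precisely when $x\in U_s$, and $\phiu$ clips it to $\{0,1\}$; thus this neuron outputs $\mathbf{1}[x\in U_s]$. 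Finally the linear readout assigns weight $v_s-v_{s-1}\ge0$ to the $s$‑th second‑layer neuron (no activation, zero bias), which by the telescoping identity reconstructs $\hvi{x}$; normalization is automatically recovered because all first‑layer neurons vanish at $(0,\dots,0)$.

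The step I expect to be the conceptual crux — and would justify most carefully — is \emph{why a single hidden layer does not suffice}, and hence why the second layer is essential. The naive one‑hidden‑layer attempt writes $\hvi{x}=\sum_{b}m_b\,\mathbf{1}[x\supseteq b]$ via the Möbius (inclusion–exclusion) transform, but the coefficients $m_b$ can be \emph{negative} even for monotone $\hvi{}$, which is incompatible with the non‑negativity of MVNN readout weights. The up‑set decomposition removes this obstruction (all coefficients become positive increments $v_s-v_{s-1}$), at the price of having to realize each $\mathbf{1}[x\in U_s]$, which is a Boolean OR over the minimal elements of $U_s$ and in general not a threshold function of $x$; the second bReLU layer performs exactly this OR by clipping. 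Verifying that claim (and that $\min(U_s)$ determines $U_s$) is the main technical point; checking the weight/bias signs, the linear‑readout convention, and the evaluation at $(0,\dots,0)$ are routine. For a general cutoff $t>0$ the identical construction works after scaling the first‑layer weights and biases by $t$ and using readout weights $(v_s-v_{s-1})/t$, so no generality is lost.
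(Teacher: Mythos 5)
Your proposal is correct: both inclusions are handled properly (the easy one via \Cref{lem:normalized and monoton}, the hard one by an explicit two-hidden-layer construction with non-negative weights and non-positive biases), and each step checks out — the first-layer neuron with weights $b$ and bias $-(|b|-1)$ does output $\mathbf{1}[b\subseteq x]$ under $\phiu_{0,1}$, the second-layer neuron does compute the OR over the minimal elements of the up-set $U_s$, and the layer-cake coefficients $v_s-v_{s-1}$ are non-negative, so the readout is admissible and reproduces $\hvi{x}$ on all of $\X$ (the degenerate case $\hvi{}\equiv 0$ is trivially handled by a zero readout). However, your decomposition is genuinely different from the paper's. The paper sorts \emph{all} $2^m$ bundles by value, writes $\hvi{x}=\sum_{l}(w_{l+1}-w_l)\,\mathbf{1}[\forall j\le l:\,x\not\subseteq x_j]$, realizes each first-layer indicator $\mathbf{1}[x\not\subseteq x_j]$ with the \emph{complement} weight vector $1-x_j$ and zero bias, and implements the conjunction in the second layer via a lower-triangular all-ones matrix with biases $-(l-1)$; your construction instead telescopes over the $r$ distinct value levels, uses subset-indicator neurons $\mathbf{1}[b\subseteq x]$ in the first layer, and implements a disjunction (monotone-DNF over minimal true points) in the second. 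The two are dual in flavor (AND over low-valued bundles vs.\ OR over minimal elements of level sets), and both yield width $O(2^m)$ in the worst case, but yours buys a potentially much narrower second layer (width $r$, the number of distinct values) and a first layer that can be pruned to $\bigcup_s \min(U_s)$, plus a cleaner link to monotone Boolean function theory, while the paper's version has the advantage of a completely uniform weight pattern that transfers verbatim to the interpolation result (\Cref{app:cor:interpolation}) fitting $q$ queries with an $[m,q,q,1]$ network. Your side remark on why a single hidden layer fails (negative M\"obius coefficients, OR of several threshold terms not being a single threshold) is plausible motivation but is not needed for, and not part of, the proof.
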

We present a constructive proof for \Cref{thm:universality} in \Appendixref{subsec:appendix_proofTheoremUniversality}{Appendix~C.3}. In the proof, we consider an arbitrary $(\hvi{x})_{x\in \X} \in \Vmon$ for which we construct a two hidden layer MVNN $\MVNNi{}$ of dimensions $[m,2^m-1,2^m-1,1]$ with parameters $\theta =(W^i_{\hvi{}},b^i_{\hvi{}})$ such that $\MVNNi{x}=\hvi{x}\, \forall x\in \X$.

Note that for $q$ queries, it is always possible to construct MVNN of size $[m,q,q,1]$ that perfectly fits through the $q$ data-points (see 
\Appendixref[Appendix ]{app:cor:interpolation}{Corollary~C.1}). Thus, the worst-case required architecture size grows only linearly with the number of queries. In our experiments we already achieve very good performance with even smaller architectures. 
\AppendixrefSuffix{app:example_mvnns}{Example~C.1}{ in the Appendix} nicely illustrates how exactly MVNNs capture complementarities, substitutabilities and independent items.

A key step in combinatorial assignment mechanisms is finding the social welfare-maximizing allocation, i.e., solving the \emph{Winner Determination Problem} (WDP). To use MVNNs in such mechanisms, we need to be able to solve MVNN-based WDPs in a practically feasible amount of time. To this end, we present a MILP formulation of the MVNN-based WDP which can be (approximately) solved in practice for reasonably-sized NNs (see \Cref{subsec:mip_runtime}). The key idea is to rewrite the bReLU $\phiu(z)$ as $-\max(-1,- \max(0, z))$ and encode for each bidder $i$, hidden layer $k$ and neuron $j$ both $\max(\cdot,\cdot)$ operators with two binary decision variables $y^{i,k}_j, \mu^{i,k}_j$. First, we show how to encode one \emph{single} hidden layer of an MVNN as multiple linear constraints. We provide the proof in \Appendixref{subsec:appendix_proofLemmaMIP}{Appendix~C.4}.
\begin{lemma}\label{lem:util_layer}
    Fix bidder $i\in N$, let $k\in \{1,\ldots,K_i-1\}$ and denote the pre-activated output of the $k$\textsuperscript{th} layer as $o^{i, k}:=W^{i, k}z^{i, k-1} + b^{i, k}$ with $W^{i, k} \in \mathbb{R}^{d^{i,k} \times d^{i,k-1}}, b^{i, k} \in \mathbb{R}^{d^{i,k}}$. Then the output of the $k$\textsuperscript{th} layer  $z^{i, k} := \phiu(o^{i, k}) = \min(1, \max(0, o^{i, k})) = -\max(-1, - \eta^{i, k})$, with $\eta^{i, k}:=\max(0, o^{i, k})$ can be equivalently expressed by the following linear constraints:
    \begin{align}
        & o^{i, k}\le \eta^{i, k}\le o^{i, k} + y^{i, k}\cdot L_1^{i,k}\label{eq:(i)}\\
        & 0\le \eta^{i, k} \le (1-y^{i, k}) \cdot L_2^{i,k}\label{eq:(ii)}\\
        & \eta^{i, k} -  \mu^{i,k}\cdot L_3^{i,k}\le z^{i, k}\le \eta^{i, k}\label{eq:(iii)}\\
        & 1 -  (1-\mu^{i, k}) \cdot L_4^{i,k} \le z^{i, k} \le 1\label{eq:Lemma_cutoff_dependent_ct}\\
        & y^{i,k}\in \{0,1\}^{d^{i,k}}, \quad \mu^{i,k}\in \{0,1\}^{d^{i,k}},
    \end{align}
    where $L_1^{i,k}, L_2^{i,k}, L_3^{i,k}, L_4^{i,k} \in \Rp$ are large enough constants for the respective \emph {big-M} constraints.\footnote{To account for a general cutoff $t\neq 1$ in the bReLU, one needs to adjust \eqref{eq:Lemma_cutoff_dependent_ct} by replacing the left- and rightmost $1$ with $t$.}
\end{lemma}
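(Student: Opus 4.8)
The plan is to check that, for a fixed bidder $i$, layer $k$ and neuron $j$, the constraints \eqref{eq:(i)}--\eqref{eq:Lemma_cutoff_dependent_ct} together with $y^{i,k}\in\{0,1\}^{d^{i,k}}$, $\mu^{i,k}\in\{0,1\}^{d^{i,k}}$ cut out precisely the graph of $z\mapsto\phiu(z)$: after projecting out the binaries, the feasible set in the $(o^{i,k},\eta^{i,k},z^{i,k})$-coordinates is exactly $\{(o,\eta,z): \eta=\max(0,o),\ z=\min(1,\eta)\}$ (recall $\min(1,\max(0,o))=-\max(-1,-\eta)$ with $\eta=\max(0,o)$, since $-\max(-a,-b)=\min(a,b)$). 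Because the constraints decouple across bidders, layers and neurons, it suffices to argue for a single scalar triple $(o,\eta,z)$ and a single pair of binaries $(y,\mu)$, and to handle the block given by \eqref{eq:(i)}--\eqref{eq:(ii)} and the block given by \eqref{eq:(iii)}--\eqref{eq:Lemma_cutoff_dependent_ct} separately; the first enforces the inner ReLU $\eta=\max(0,o)$ and the second the outer clip $z=\min(1,\eta)$.

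For the first block I would split on the sign of $o$. If $o>0$, then \eqref{eq:(i)} gives $\eta\ge o>0$, so $y=1$ is infeasible (it would force $\eta\le 0$ via \eqref{eq:(ii)}); hence $y=0$, \eqref{eq:(i)} collapses to $\eta=o$, and \eqref{eq:(ii)} reduces to $0\le\eta\le L_2^{i,k}$, which holds as soon as $L_2^{i,k}$ dominates the largest value $o$ can take. If $o\le 0$, then for $o<0$ the choice $y=0$ is infeasible (it would require $0\le\eta\le o<0$), so $y=1$, \eqref{eq:(ii)} forces $\eta=0$, and \eqref{eq:(i)} reduces to $o\le 0\le o+L_1^{i,k}$, which holds once $L_1^{i,k}\ge -o=|o|$; for $o=0$ both values of $y$ yield $\eta=0$. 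In every case the feasible $\eta$ equals $\max(0,o)$, and conversely $\eta=\max(0,o)$ is feasible with the indicated $y$. The second block is identical up to shifting the thresholds from $0$ to $1$: splitting on whether $\eta<1$, $\eta=1$, or $\eta>1$, one sees that $\mu=1$ is forced when $\eta>1$ (then $z=1$ by \eqref{eq:Lemma_cutoff_dependent_ct}), $\mu=0$ is forced when $\eta<1$ (then $z=\eta$ by \eqref{eq:(iii)}), and at $\eta=1$ either value works, so the feasible $z$ equals $\min(1,\eta)$, provided $L_3^{i,k}\ge\eta-1$ and $L_4^{i,k}\ge 1$. I would also note that at the boundary points $o=0$ and $\eta=1$ the binary certificate is not unique, which does not affect the claimed equivalence on the $(o,\eta,z)$-coordinates.

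The only step that requires more than routine case-checking is pinning down what ``large enough'' means for the four big-M constants, so that the dominated slack constraints above are inactive on the whole feasible region. Here I would use $x\in\X=\{0,1\}^m$ and the fact that every hidden activation of an MVNN lies in $[0,1]$ (by \Cref{lem:normalized and monoton}, or directly from $\phiu(\cdot)\in[0,1]$), so that $z^{i,k-1}\in[0,1]^{d^{i,k-1}}$ and hence, componentwise, $b^{i,k}_j\le o^{i,k}_j\le\sum_l W^{i,k}_{jl}+b^{i,k}_j$ (using $W^{i,k}\ge 0$). Consequently one may take, e.g., $L_1^{i,k}\ge\max_j(-b^{i,k}_j)$, $L_2^{i,k}\ge\max_j(\sum_l W^{i,k}_{jl}+b^{i,k}_j)$, $L_3^{i,k}\ge\max_j(\sum_l W^{i,k}_{jl}+b^{i,k}_j)-1$, and $L_4^{i,k}\ge 1$ (or any larger values, which is all the statement asserts); for a general cutoff $t\neq 1$ the last bound becomes $L_4^{i,k}\ge t$ and the two ``$1$''s in \eqref{eq:Lemma_cutoff_dependent_ct} are replaced by $t$, as in the footnote. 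I do not expect a genuine obstacle beyond carefully bookkeeping these constants.
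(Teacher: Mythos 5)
Your proof is correct and follows essentially the same route as the paper's: a neuron-wise case analysis showing that the big-M constraints force $\eta^{i,k}_j=\max(0,o^{i,k}_j)$ and $z^{i,k}_j=\min(1,\eta^{i,k}_j)$ (the paper organizes this as three cases on $o^{i,k}_j\in(-\infty,0]$, $(0,1]$, $(1,\infty)$ rather than your two blocks, and is slightly less explicit about the boundary cases $o=0$, $\eta=1$ where the binaries are not unique). Your explicit choices of the big-M constants are the same as the interval-arithmetic bounds the paper derives separately in its appendix, so no gap there either.
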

Finally, we formulate the MVNN-based WDP as a MILP.
\begin{theorem}[MILP]\label{theo:util_mip}
The MVNN-based WDP $\max\limits_{a\in \F}\sum_{i \in N}\MVNNi[(W^i,b^i)]{a_i}$ can be equivalently formulated as the following MILP\footnote{To account for a general cutoff $t\neq 1$ in the bReLU, one needs to adjust \eqref{eq:MIP_cutoff_dependent_ct} by replacing the left- and rightmost $1$ with $t$.}:
    \begin{align}
        &\max\limits_{a\in \F, z^{i,k},\mu^{i,k},\eta^{i,k},y^{i,k}}\left\{\sum_{i \in N} W^{i, K_{i}} z^{i, K_{i}-1}\right\}\\
        &\hspace{-1cm}\text{s.t. for } i\in N \text{ and } k \in \{1,\ldots,K_i-1\}\notag\\
        &z^{i,0}=a_i\\
        &W^{i,k}z^{i,k-1}+b^{i,k}\le \eta^{i,k}\\
        &\eta^{i,k}\le W^{i,k}z^{i,k-1}+b^{i,k} + y^{i,k}\cdot L_1^{i,k} \\
        &0\le \eta^{i,k} \le (1-y^{i,k}) \cdot L_2^{i,k} \\
        &\eta^{i,k} -  \mu^{i,k}\cdot L_3^{i,k}\le z^{i,k}\le \eta^{i,k}\\
        &1 -  (1-\mu^{i,k}) \cdot L_4^{i,k} \le z^{i,k} \le 1 \label{eq:MIP_cutoff_dependent_ct}\\
        &y^{i,k}\in \{0,1\}^{d^{i,k}}, \qquad \mu^{i,k}\in \{0,1\}^{d^{i,k}}
    \end{align}
\end{theorem}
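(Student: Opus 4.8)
The plan is to chain the single-hidden-layer encoding of \Cref{lem:util_layer} across all hidden layers $k\in\{1,\ldots,K_i-1\}$ and all bidders $i\in N$, and then to argue that adjoining these linear constraints (together with the auxiliary variables $z^{i,k},\eta^{i,k},y^{i,k},\mu^{i,k}$) changes neither the optimal value nor the set of optimal allocations of the WDP $\max_{a\in\F}\sum_{i\in N}\MVNNi{a_i}$.

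First I would fix an arbitrary feasible allocation $a\in\F$. Since $a_i\in\X=\{0,1\}^m$, setting $z^{i,0}:=a_i$ is well-defined, and for $k=1,\ldots,K_i-1$ the true layer-$k$ activation of $\MVNNi{}$ evaluated at $a_i$ is $z^{i,k}=\phiu(o^{i,k})$ with $o^{i,k}=W^{i,k}z^{i,k-1}+b^{i,k}$. I would then proceed by induction on $k$: the base case $k=0$ holds by the constraint $z^{i,0}=a_i$; for the inductive step, assuming $z^{i,k-1}$ equals the true $(k-1)$-th hidden activation, \Cref{lem:util_layer} shows that the constraints \eqref{eq:(i)}--\eqref{eq:Lemma_cutoff_dependent_ct} (with the big-M constants instantiated for layer $k$) are both satisfiable and force $z^{i,k}=\phiu(o^{i,k})$ for \emph{every} feasible choice of $(\eta^{i,k},y^{i,k},\mu^{i,k})$; hence $z^{i,k}$ equals the true $k$-th hidden activation. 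Note that finite big-M constants $L_1^{i,k},\ldots,L_4^{i,k}$ exist because $a_i\in\{0,1\}^m$ and the parameters $(W^i,b^i)$ are fixed, so every pre-activation $o^{i,k}$ lies in an interval that can be bounded a priori (using in particular the monotone structure $W^{i,k}\ge0$, $b^{i,k}\le0$). Consequently $z^{i,K_i-1}$ equals the output of the last hidden layer of $\MVNNi{}$ at $a_i$, and by the linear-readout part of \Cref{def:MVNN} (no activation on the final layer, $b^{i,K_i}=0$) we obtain $W^{i,K_i}z^{i,K_i-1}=\MVNNi{a_i}$.

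Summing over $i\in N$, this yields two directions. On one hand, for every $a\in\F$ there exists a feasible point of the MILP (choose the binaries as in \Cref{lem:util_layer} layerwise) whose objective equals $\sum_{i\in N}\MVNNi{a_i}$. On the other hand, every feasible point of the MILP projects onto some $a\in\F$ — the variables $z^{i,0}=a_i$ being forced into $\{0,1\}^m$ through the membership $a\in\F$ — and, by the ``equivalently expressed'' direction of \Cref{lem:util_layer} applied inductively over the layers, its objective is necessarily $\sum_{i\in N}\MVNNi{a_i}$. Hence the feasible region of the MILP, projected onto $a$, is exactly $\F$, and the objective is constant on each fiber over a fixed $a$ and equal to $\sum_{i\in N}\MVNNi{a_i}$. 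Therefore the MILP optimum equals $\max_{a\in\F}\sum_{i\in N}\MVNNi{a_i}$, and the maximizing allocations coincide, which is the claimed equivalence.

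The main obstacle I expect is bookkeeping rather than conceptual: one must be careful that the statement is an equivalence of the two \emph{optimization problems}, not merely of two ways of evaluating a fixed network, so both the ``$\exists$ feasible MILP point with matching objective'' direction and the ``every feasible MILP point has matching objective'' direction are needed; the latter hinges on the fact that \Cref{lem:util_layer} pins down $z^{i,k}$ uniquely as a function of $z^{i,k-1}$ even though the binaries $y^{i,k},\mu^{i,k}$ need not be unique when $o^{i,k}$ hits a breakpoint of the bReLU (value $0$ or $1$). A secondary point to make rigorous is the existence of the finite big-M constants, which I would justify from the boundedness of $\X$ together with the fixed, non-negative weights of the MVNN.
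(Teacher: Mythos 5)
Your proposal is correct and follows essentially the same route as the paper, whose proof consists of the single observation that one iteratively applies \Cref{lem:util_layer} for each bidder and each hidden layer. Your additional care — the layerwise induction, checking both directions of the equivalence, noting that $z^{i,k}$ is forced even when the binaries are non-unique at the bReLU breakpoints, and justifying finite big-M constants from the boundedness of $\X$ and the fixed weights — simply makes explicit what the paper leaves implicit.
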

\begin{proof}
The proof follows by iteratively applying \Cref{lem:util_layer} for each bidder and all her respective hidden MVNN layers.
\end{proof}
\begin{fact}
One can significantly reduce the solve time for the MILP by tightening the bounds of each neuron. In \Appendixref{app:sec_util:ia_bounds}{Appendix~C.5}, we present bound tightening via \emph{interval arithmetic (IA)} \cite{tjeng2018evaluating} for MVNNs.
For a plain ReLU NN, these IA bounds are not tight and calculating tighter bounds in a computationally efficient manner is very challenging.
In contrast, the MVNN-IA bounds are always \emph{perfectly tight}, because of their encoded monotonicity property. The upper and lower bound of a neuron is the value the neuron outputs for the input $(1,\dots,1)$ and $(0,\dots,0)$. This is a big advantage of MVNNs compared to plain (ReLU) NNs.
\end{fact}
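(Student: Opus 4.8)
The plan is to exploit that every MVNN neuron is a \emph{coordinatewise non-decreasing} function of the input. Fix a bidder $i$ and regard each layer output $z^{i,k}=z^{i,k}(x)$ as a function of $x\in\X$. Since every weight matrix satisfies $W^{i,k}\ge0$, the affine pre-activation $o^{i,k}=W^{i,k}z^{i,k-1}+b^{i,k}$ is non-decreasing in each coordinate of $z^{i,k-1}$, and $\phiu$ is non-decreasing; hence each $z^{i,k}(x)$ is a composition of coordinatewise non-decreasing maps and is itself coordinatewise non-decreasing in $x$ (this is exactly the monotonicity already established in \Cref{lem:normalized and monoton}). Consequently, as $x$ ranges over the box $[0,1]^m$ used by interval arithmetic (IA), the minimum of each neuron is attained at the all-zeros input $(0,\ldots,0)$ and its maximum at the all-ones input $(1,\ldots,1)$, both of which lie in $\X$. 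The goal is to show that the IA-propagated bounds coincide \emph{exactly} with the neuron values at these two corners, which then yields perfect tightness.

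\paragraph{Induction over layers.}
I would argue by induction on $k$ that, componentwise, the IA lower bound satisfies $\underline{z}^{i,k}=z^{i,k}\big((0,\ldots,0)\big)$ and the IA upper bound satisfies $\overline{z}^{i,k}=z^{i,k}\big((1,\ldots,1)\big)$. The base case $k=0$ is the input layer, where $z^{i,0}=x$ ranges over $[0,1]^m$ with lower bound $(0,\ldots,0)$ and upper bound $(1,\ldots,1)$, trivially attained. For the inductive step, because $W^{i,k}\ge0$ there is no sign-induced mixing: IA computes the lower bound by feeding in $\underline{z}^{i,k-1}$ and the upper bound by feeding in $\overline{z}^{i,k-1}$, giving $\underline{o}^{i,k}=W^{i,k}\underline{z}^{i,k-1}+b^{i,k}$ and $\overline{o}^{i,k}=W^{i,k}\overline{z}^{i,k-1}+b^{i,k}$. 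By the induction hypothesis $\underline{z}^{i,k-1}=z^{i,k-1}((0,\ldots,0))$, so $\underline{o}^{i,k}$ is the \emph{actual} pre-activation at $x=(0,\ldots,0)$, and analogously $\overline{o}^{i,k}$ at $x=(1,\ldots,1)$. Applying the monotone activation, IA sets $\underline{z}^{i,k}=\phiu(\underline{o}^{i,k})$ and $\overline{z}^{i,k}=\phiu(\overline{o}^{i,k})$, which are precisely $z^{i,k}((0,\ldots,0))$ and $z^{i,k}((1,\ldots,1))$, closing the induction.

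\paragraph{From corner values to perfect tightness.}
IA returns, by construction, a valid over-approximation of the exact reachable interval, so $\underline{z}^{i,k}\le \min_{x\in[0,1]^m} z^{i,k}(x)$ and $\max_{x\in[0,1]^m} z^{i,k}(x)\le \overline{z}^{i,k}$ for every neuron. By the induction these IA endpoints are realized by the feasible inputs $(0,\ldots,0)$ and $(1,\ldots,1)$, and by the monotonicity noted above these are exactly the inputs where the minimum and maximum are attained. Sandwiching forces $\underline{z}^{i,k}=\min_x z^{i,k}(x)$ and $\overline{z}^{i,k}=\max_x z^{i,k}(x)$, so the IA interval equals the exact range and is perfectly tight, with the bounds given by the network output at $(1,\ldots,1)$ and $(0,\ldots,0)$ as claimed.

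\paragraph{Main obstacle.}
The only place where the MVNN structure is essential is the \emph{simultaneity} in the inductive step: the lower bounds of \emph{all} neurons of a layer must be attained at the \emph{same} input $(0,\ldots,0)$, and likewise the upper bounds at $(1,\ldots,1)$. This is what prevents the over-approximation error from accumulating across layers, and it rests entirely on $W^{i,k}\ge0$, which aligns the per-coordinate worst cases at a common corner. For a plain ReLU network with mixed-sign weights this alignment breaks: the worst case of a pre-activation may require some previous-layer neurons at their lower bound and others at their upper bound, points that are generally not jointly reachable, so IA strictly over-approximates already after the first hidden layer. Making this contrast precise is the substantive content; the remaining steps are the routine monotone propagation above.
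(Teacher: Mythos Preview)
Your argument is correct and follows the same underlying idea as the paper: the non-negative weights ensure the IA propagation formulas coincide with evaluating the network at the two corner inputs, and monotonicity guarantees these corners realize the exact extrema. The paper's Appendix~C.5 does not give a formal proof of tightness; it simply writes out the IA formulas (noting in particular that $L(z^k)=\boldsymbol{0}$ since $b^{i,k}\le 0$) and then asserts the tightness as a Fact with a one-line appeal to monotonicity, so your inductive argument is in fact more detailed than what appears in the paper.
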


\begin{remark}[MVNNs as Preference Generator]
To experimentally evaluate a new mechanism one needs to generate many different possible value functions. Preference generators like the \textit{Spectrum Auction Test Suite}~\cite{weiss2017sats} are carefully designed to capture the essential properties of spectrum auctions, but they are not available for every application. 
Instead of using such a domain-specific preference generator, one can also use MVNNs with randomly initialized weights to generate possible value functions. 
An advantage of random MVNNs is that they are universal (see \Cref{thm:universality}) and hence come with a diversity rich enough to sample any possible monotone value function with arbitrarily complex substitutabilities and complementarities (the distribution of supplements and complements can be controlled via the cutoff $t$, where the smaller/larger $t$ the more substitutabilities/complementarities). Future work could investigate this distribution, i.e., how representative it is for real-world valuations.
These types of generative test beds become increasingly important to avoid overfitting on specific simulation engines and/or real data sets \cite{osband2021epistemic}.
\end{remark}

\section{Prediction Performance of MVNNs}\label{sec:Prediction Performance of MVNNs}
In this section, we show that in all considered CA domains, MVNNs are significantly better at capturing bidders' complex value functions than plain (ReLU) NNs, which allows them to extrapolate much better in the bundle space.

\subsection{Experimental Setup - Prediction Performance}\label{subsec:Experimental Setup - Prediction Performance}
\paragraph{CA Domains} In our experiments we use simulated data from the Spectrum Auction Test Suite (SATS) version 0.7.0 \cite{weiss2017sats}. We consider the following four domains:
\begin{itemize}[leftmargin=*,topsep=0pt,partopsep=0pt, parsep=0pt]
\item \textbf{Global Synergy Value Model (GSVM)} \cite{goeree2010hierarchical} has $18$ items, $6$ \emph{regional} and $1$ \emph{national bidder}.
\item \textbf{Local Synergy Value Model (LSVM)} \cite{scheffel2012impact} has $18$ items, $5$ \emph{regional} and $1$ \emph{national bidder}. Complementarities arise from spatial proximity of items.
\item \textbf{Single-Region Value Model (SRVM)} \cite{weiss2017sats} has $29$ items and $7$ bidders (categorized as  \emph{local}, \emph{high frequency} \emph{regional}, or \emph{national}) and models large UK 4G spectrum auctions.
\item \textbf{Multi-Region Value Model (MRVM)} \cite{weiss2017sats} has $98$ items and $10$ bidders (\emph{local}, \emph{regional}, or \emph{national}) and models large Canadian 4G spectrum auctions.
\end{itemize}

When simulating bidders, we follow prior work (e.g., \cite{brero2021workingpaper}) and assume truthful bidding (i.e., $\hat{v}_i=v_i$). Details on how we collect the data and the train/val/test split can be found in \Appendixref{subec:appendix_pred_perf_data_gen}{Appendix~D.1}.

\begin{table}[b!]
	\robustify\bfseries
	\centering
	\begin{sc}
	\resizebox{1\columnwidth}{!}{
	\setlength\tabcolsep{4pt}
		\begin{tabular}{lllllll}
	\toprule
		 &     &        & \multicolumn{2}{c}{$\boldsymbol{R^2}$ \textuparrow} &  \multicolumn{2}{c}{\textbf{ Kt \textuparrow}} \\
		 \cmidrule(l{2pt}r{2pt}){4-5}
		 \cmidrule(l{2pt}r{2pt}){6-7}
  \textbf{Domain} &   \textbf{T} & \textbf{Bidder} &    \multicolumn{1}{c}{MVNN}    &     \multicolumn{1}{c}{NN}     &  \multicolumn{1}{c}{MVNN} & \multicolumn{1}{c}{NN}\\
    \cmidrule(l{2pt}r{2pt}){1-3}
	\cmidrule(l{2pt}r{2pt}){4-5}
	\cmidrule(l{2pt}r{2pt}){6-7}
	GSVM & 10  & Nat &  \ccell0.686  $\pm$\scriptsize 0.061  &  0.534 $\pm$\scriptsize 0.040&  \ccell0.668 $\pm$\scriptsize 0.027&  0.583 $\pm$\scriptsize 0.021  \\
		 &     & Reg  &  \ccell0.618 $\pm$\scriptsize 0.068 &  \ccell0.504 $\pm$\scriptsize 0.062&  \ccell0.633 $\pm$\scriptsize 0.038&  0.557 $\pm$\scriptsize 0.033  \\
		 & 20  & Nat &  \ccell0.923 $\pm$\scriptsize 0.016 &  0.818 $\pm$\scriptsize 0.032 &  \ccell0.849 $\pm$\scriptsize 0.017&  0.752 $\pm$\scriptsize 0.029  \\
		 &     & Reg &  \ccell0.940 $\pm$\scriptsize 0.018 & 0.880 $\pm$\scriptsize 0.022 &  \ccell0.882 $\pm$\scriptsize 0.020&  0.815 $\pm$\scriptsize 0.021  \\
		 & 50  & Nat &  \ccell0.992 $\pm$\scriptsize 0.001 & 0.988 $\pm$\scriptsize 0.001 &  \ccell0.962 $\pm$\scriptsize 0.003&  0.953 $\pm$\scriptsize 0.003  \\
		 &     & Reg &  \ccell0.997 $\pm$\scriptsize 0.001 & 0.988 $\pm$\scriptsize 0.001 &  \ccell0.974 $\pm$\scriptsize 0.002&  0.953 $\pm$\scriptsize 0.003  \\
    \cmidrule(l{2pt}r{2pt}){1-3}
	\cmidrule(l{2pt}r{2pt}){4-5}
	\cmidrule(l{2pt}r{2pt}){6-7}
	LSVM & 10  & Nat &  \ccell0.248 $\pm$\scriptsize 0.069 &  0.137 $\pm$\scriptsize 0.031 &  \ccell0.693 $\pm$\scriptsize 0.011&  \ccell0.710 $\pm$\scriptsize 0.023 \\
		 &     & Reg &  \ccell0.563 $\pm$\scriptsize 0.049 &  0.348 $\pm$\scriptsize 0.067 &  \ccell0.605 $\pm$\scriptsize 0.031 &  0.504 $\pm$\scriptsize 0.025 \\
		 & 50  & Nat &  \ccell0.616 $\pm$\scriptsize 0.020 &  0.199 $\pm$\scriptsize 0.031 &  \ccell0.753 $\pm$\scriptsize 0.009 &  0.678 $\pm$\scriptsize 0.035 \\
		 &     & Reg &  \ccell0.921 $\pm$\scriptsize 0.015 &  0.872 $\pm$\scriptsize 0.012 &  \ccell0.860 $\pm$\scriptsize 0.017 &  0.812 $\pm$\scriptsize 0.013 \\
		 & 100 & Nat &  \ccell0.677 $\pm$\scriptsize 0.014 &  0.396 $\pm$\scriptsize 0.033 &  \ccell0.813 $\pm$\scriptsize 0.005 &  0.706 $\pm$\scriptsize 0.018 \\
		 &     & Reg &  \ccell0.965 $\pm$\scriptsize 0.010 &  0.936 $\pm$\scriptsize 0.010 &  \ccell0.918 $\pm$\scriptsize 0.015 &  0.857 $\pm$\scriptsize 0.012 \\
		 \cmidrule(l{2pt}r{2pt}){1-3}
	\cmidrule(l{2pt}r{2pt}){4-5}
	\cmidrule(l{2pt}r{2pt}){6-7}
	SRVM & 10  & H.F. &  \ccell0.538 $\pm$\scriptsize 0.044 &  -2.123  $\pm$\scriptsize 0.268 &  \ccell0.626 $\pm$\scriptsize 0.020 &  \ccell0.607 $\pm$\scriptsize 0.012 \\
		 &     & Lo &  \ccell0.381 $\pm$\scriptsize 0.045 &  0.267 $\pm$\scriptsize 0.042 &  \ccell0.559 $\pm$\scriptsize 0.030 &  0.489 $\pm$\scriptsize 0.032 \\
		 &     & Nat &  \ccell0.389 $\pm$\scriptsize 0.063 &  \ccell0.341 $\pm$\scriptsize 0.038 &  \ccell0.560 $\pm$\scriptsize 0.026 &  \ccell0.535 $\pm$\scriptsize 0.012 \\
		 &     & Reg &  \ccell0.422 $\pm$\scriptsize 0.051 &  \ccell0.372 $\pm$\scriptsize 0.036 &  \ccell0.562 $\pm$\scriptsize 0.023 &  \ccell0.544 $\pm$\scriptsize 0.014 \\
		 & 50  & H.F. &  \ccell0.860 $\pm$\scriptsize 0.015 &  0.773 $\pm$\scriptsize 0.034 &  \ccell0.853 $\pm$\scriptsize 0.013 &  0.803 $\pm$\scriptsize 0.020 \\
		 &     & Lo &  \ccell0.895 $\pm$\scriptsize 0.020 & 0.588 $\pm$\scriptsize 0.031 &  \ccell0.902 $\pm$\scriptsize 0.000 &  0.771 $\pm$\scriptsize 0.030 \\
		 &     & Nat &  \ccell0.988 $\pm$\scriptsize 0.004 & 0.828 $\pm$\scriptsize 0.015 &  \ccell0.918 $\pm$\scriptsize 0.005&  0.801 $\pm$\scriptsize 0.009  \\
		 &     & Reg &  \ccell0.989 $\pm$\scriptsize 0.004 & 0.872 $\pm$\scriptsize 0.047 &  \ccell0.931 $\pm$\scriptsize 0.004&  0.823 $\pm$\scriptsize 0.022  \\
		 & 100 & H.F. &  \ccell0.911 $\pm$\scriptsize 0.008 & 0.849 $\pm$\scriptsize 0.011 &  \ccell0.908 $\pm$\scriptsize 0.006&  0.896 $\pm$\scriptsize 0.006  \\
		 &     & Lo&  \ccell0.948 $\pm$\scriptsize 0.014  & 0.723 $\pm$\scriptsize 0.005 &  \ccell0.903 $\pm$\scriptsize 0.000&  0.900 $\pm$\scriptsize 0.002  \\
		 &     & Nat &  \ccell0.998 $\pm$\scriptsize  0.000 & 0.913 $\pm$\scriptsize 0.008 &  \ccell0.952 $\pm$\scriptsize 0.003&  0.841 $\pm$\scriptsize 0.008  \\
		 &     & Reg &  \ccell0.996 $\pm$\scriptsize 0.001 &  \ccell0.945 $\pm$\scriptsize 0.004 &  \ccell0.948 $\pm$\scriptsize 0.021&  0.895 $\pm$\scriptsize 0.012 \\
		 \cmidrule(l{2pt}r{2pt}){1-3}
	\cmidrule(l{2pt}r{2pt}){4-5}
	\cmidrule(l{2pt}r{2pt}){6-7}
 	MRVM & 10  & Lo &  \ccell-0.055 $\pm$\scriptsize 0.058 & \ccell-0.018 $\pm$\scriptsize 0.050 &  \ccell0.262 $\pm$\scriptsize 0.017&  0.200 $\pm$\scriptsize 0.015  \\
         &     & Nat &  \ccell0.182 $\pm$\scriptsize 0.045 &  \ccell-0.556 $\pm$\scriptsize 1.355 &  0.365 $\pm$\scriptsize 0.018 &  \ccell0.414 $\pm$\scriptsize 0.008 \\
         &     & Reg &  \ccell0.036 $\pm$\scriptsize 0.085 & \ccell-0.048 $\pm$\scriptsize 0.092 &  \ccell0.322 $\pm$\scriptsize 0.022&  0.255 $\pm$\scriptsize 0.038  \\
         & 100 & Local &  \ccell0.831 $\pm$\scriptsize 0.023 & 0.493 $\pm$\scriptsize 0.027  &  \ccell0.786 $\pm$\scriptsize 0.019 &  0.545 $\pm$\scriptsize 0.012\\
         &     & Nat &  \ccell0.778 $\pm$\scriptsize 0.022 &  0.560 $\pm$\scriptsize 0.019 &  \ccell0.726 $\pm$\scriptsize 0.014 &  0.581 $\pm$\scriptsize 0.010 \\
         &     & Reg &  \ccell0.832 $\pm$\scriptsize 0.028 &  0.447 $\pm$\scriptsize 0.053 &  \ccell0.779 $\pm$\scriptsize 0.027 &  0.572 $\pm$\scriptsize 0.018 \\
         & 300 & Local &  \ccell0.944 $\pm$\scriptsize 0.006 &  0.871 $\pm$\scriptsize 0.009 &  \ccell0.883 $\pm$\scriptsize 0.010 &  0.819 $\pm$\scriptsize 0.009 \\
         &     & Nat &  \ccell0.868 $\pm$\scriptsize 0.016 &  \ccell0.855 $\pm$\scriptsize 0.029 &  \ccell0.814 $\pm$\scriptsize 0.009&  \ccell0.808 $\pm$\scriptsize 0.013  \\
         &     & Reg &  \ccell0.917 $\pm$\scriptsize 0.017 &  0.847 $\pm$\scriptsize 0.010 &  \ccell0.851 $\pm$\scriptsize 0.019 &  0.809 $\pm$\scriptsize 0.012 \\
	\bottomrule
	\end{tabular}}
	\vskip -0.2cm
	\caption{Prediction performance measured via R-squared ($\boldsymbol{R^2}$) and Kendall tau (\textbf{\textsc{Kt}}) with a $95\%$-CI in four SATS domains with corresponding bidder types: high frequency (\textsc{H.F.}), local (\textsc{Lo}), regional (\textsc{Reg}) and national (\textsc{Nat}), averaged over 30 auction instances. Both MVNNs and plain NNs are trained on $T$ and evaluated on $209\,715 - T$ random bundles. Winners are marked in grey.}
	\label{tab:full_pred_performance_table}
	\end{sc}
\end{table}
\paragraph{HPO} To efficiently optimize the hyperparameters and fairly compare MVNNs and plain NNs for best generalization across different instances of each SATS domain, we frame the \emph{hyperparameter optimization (HPO)} problem as an algorithm configuration problem and use the well-established \emph{sequential model-based algorithm configuration (SMAC)}~\cite{hutter2011sequential}. SMAC quickly discards hyperparameters which already perform poorly on a few SATS instances and proposes more promising ones via Bayesian optimization. It is flexible enough for the parameterization of NNs as it naturally handles a mixture of categorical, integer and float hyperparameters. Further details on the setting including hyperparameter ranges can be found in \Appendixref{subec:appendix_pred_perf_hpo}{Appendix~D.2}.
\begin{table*}[ht]
	\robustify\bfseries
	\centering
	\begin{sc}
	\resizebox{1\textwidth}{!}{
	\small
    \begin{tabular}{crcccccc}
    \toprule
         &  & \multicolumn{4}{c}{\textbf{Efficiency Loss in \%\,\,\textdownarrow}} & \multicolumn{2}{c}{\textbf{T-Test for Efficiency:}}\\
        \cmidrule(l{2pt}r{2pt}){3-6}
        \cmidrule(l{2pt}r{2pt}){7-8}
    \textbf{Domain} &$\boldsymbol{\Qmax}$ &    \multicolumn{1}{c}{MVNN}  &   \multicolumn{1}{c}{NN}  & \multicolumn{1}{c}{FT}& \multicolumn{1}{c}{RS}  & $\mathcal{H}_0:\mu_{\text{NN}}\le\mu_{\text{MVNN}}$ & $\mathcal{H}_0:\mu_{\text{FT}}\le\mu_{\text{MVNN}}$\\
        \cmidrule(l{2pt}r{2pt}){1-2}
        \cmidrule(l{2pt}r{2pt}){3-6}
        \cmidrule(l{2pt}r{2pt}){7-8}
    GSVM  & 100& \ccell00.00 $\pm$\scriptsize\, 0.00 & \ccell00.00 $\pm$\scriptsize\, 0.00 & 01.77$\pm$\scriptsize\, 0.96&30.34 $\pm$\scriptsize\, 1.61 & &$p_{\text{val}}=3\mathrm{e}{-6}$\\
    
    LSVM & 100&   \ccell00.70 $\pm$\scriptsize\, 0.40 & 02.91 $\pm$\scriptsize\, 1.44 & 01.54$\pm$\scriptsize\,0.65&31.73 $\pm$\scriptsize\, 2.15 &  $p_{\text{val}}=2\mathrm{e}{-03}$&$p_{\text{val}}=5\mathrm{e}{-3}$\\
    
    SRVM& 100&   \ccell00.23 $\pm$\scriptsize\, 0.06  & 01.13 $\pm$\scriptsize\, 0.22 &00.72$\pm$\scriptsize\,0.16 &28.56 $\pm$\scriptsize\, 1.74 & $p_{\text{val}}=5\mathrm{e}{-10}$&$p_{\text{val}}=2\mathrm{e}{-8}$\\
    
    MRVM& 100&  \ccell08.16 $\pm$\scriptsize\, 0.41 & 09.05 $\pm$\scriptsize\, 0.53 & 10.37$\pm$\scriptsize\, 0.57 &48.79 $\pm$\scriptsize\, 1.13 & $p_{\text{val}}=9\mathrm{e}{-03}$&$p_{\text{val}}=1\mathrm{e}{-7}$\\
    
    \bottomrule
    \end{tabular}
}
    \end{sc}
    \vskip -0.2cm
    \caption{Efficiency loss of MVNN vs plain NNs, the Fourier Transform (FT) benchmark and random search (RS). Shown are averages and a 95\% CI on a test set of $50$ CA instances. Winners based on a (paired) t-test  with significance level of 1\% are marked in grey.}
    \label{tab:efficiency_loss_mlca}
\end{table*}
\subsection{Prediction Performance Results}\label{subsec:Results - Prediction Performance}
For ease of exposition, we only present our results for the \textsc{MVNN-ReLU-Projected} implementation of our MVNNs (termed MVNN in the following). Results for other MVNN implementations can be found in the \Appendixref{subec:appendix_pred_perf_detailed_results}{Appendix~D.3}.
In Table~\ref{tab:full_pred_performance_table}, we compare the prediction performance of the winning models that the HPO found for different amounts of training data (T) on the test data. We see that, compared to plain NNs, MVNNs provide both a significantly better fit in terms of R-squared $\boldsymbol{R^2}$ as well as a better Kendall Tau rank correlation \textbf{\textsc{Kt}} (i.e., a better ordinal ranking of the predicted test bundle values). Thus, enforcing the monotonicity property in MVNNs significantly improves the learning performance.

Figure~\ref{fig:pred_perf_scatter} illustrates our findings by providing a visual comparison of the prediction performance for the highly non-unimodal SRVM.\footnote{We provide corresponding plots for the other domains and bidder types in \Appendixref{subec:appendix_pred_perf_detailed_results}{Appendix~D.3}; the results are qualitatively similar.} We see that the MVNN fits the training data exactly (blue crosses), although the HPO only optimized generalization performance on the validation data. This is a strong indication that MVNNs correspond to a more realistic prior, since for a realistic prior, it is optimal to exactly fit the training data in noiseless settings \cite[Proposition D.2.a]{heiss2022nomu}. In contrast, the HPO has selected hyperparameters for the plain NNs that result in a worse fit of the training data (otherwise generalization to unseen data would be even worse). Moreover, the plain NNs show a particularly bad fit on the less frequent lower and higher valued bundles.

\begin{figure}[t!]
    \centering
    \includegraphics[width=1\columnwidth]{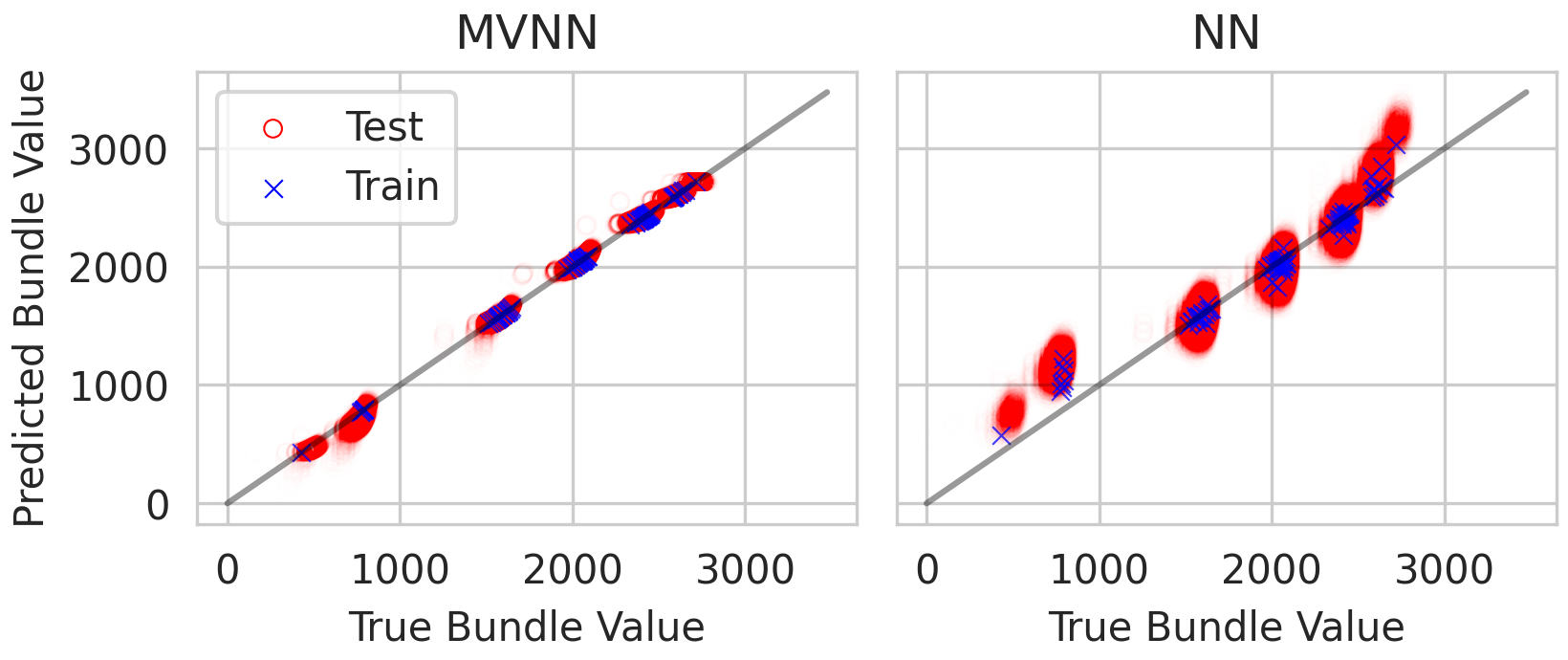}
    \vskip -0.2cm
    \caption{Prediction performance of MVNNs vs plain NNs in SRVM (national bidder). The identity is shown in grey.
    }
    \label{fig:pred_perf_scatter}
\end{figure}

\section{MVNN-powered Iterative CA}\label{sec:MVNN-based Iterative CA}
To evaluate the performance of MVNNs when used inside a combinatorial market mechanism, we have integrated MVNNs into  MLCA (see \Cref{subsec:A Machine Learning powered ICA}), yielding an \emph{MVNN-powered Iterative CA}. In this section, we compare the economic efficiency of our MVNN-based MLCA against the previously proposed NN-based MLCA. For solving the MVNN-based WDPs in MLCA, we use our MILP from Theorem~\ref{theo:util_mip}.

\subsection{Experimental Setup - MLCA}\label{subsec:Experimental Setup - MLCA}
To generate synthetic CA instances, we use the same four SATS domains as in \Cref{sec:Prediction Performance of MVNNs}. SATS also gives us access to the true optimal allocation $a^*$, which we use to measure the \emph{efficiency loss}, i.e., $1-V(a^*_R)/V(a^*)$ and \emph{relative revenue} $\sum_{i\in N}p(R)_i/V(a^*)$ of an allocation $a^*_R\in \mathcal{F}$ and payments $p(R) \in \mathbb{R}^n_+$ determined by MLCA when eliciting reports $R$. 
Due to the long run-time of a single evaluation of MLCA, we perform a restricted HPO, which, for each domain, only uses the winners of the prediction performance experiment (\Cref{tab:full_pred_performance_table}) for the three amounts of training data $T$. This is a reasonable choice, because in the prediction performance we optimize the generalization performance for bidders' value functions that is also key in MLCA.

For each domain, we use $\Qinit=40$ initial random queries and set the query budget to $\Qmax=100$. We terminate MLCA in an intermediate iteration if it already found an efficient allocation (i.e., with 0 efficiency loss).

\subsection{Efficiency Results}\label{subsec:Results - MLCA}
In Table~\ref{tab:efficiency_loss_mlca}, we present the efficiency results of MVNN-based MLCA and NN-based MLCA, averaged over 50 auction instances. We focus on efficiency rather than revenue, since spectrum auctions are government-run auctions with a mandate to maximize efficiency and not revenue \cite{cramton2013spectrumauctions}. In \Appendixref{subsec:appendix_detailed_results_MLCA,subsec:appendix_detailed_revenue_analysis}{Appendices~E.2 and~E.3}, we also present and discuss detailed revenue results.

For each domain, we present results corresponding to the best MVNNs and NNs amongst the three incumbents obtained from the prediction performance experiments. We present results for all three incumbents in \Appendixref{subsec:appendix_detailed_results_MLCA}{Appendix~E.2}. Overall, we see that the better prediction performance of MVNNs (Table~\ref{tab:full_pred_performance_table}) translates to smaller efficiency losses in MLCA. In LSVM and SRVM, MVNNs significantly outperform NNs and have a more than four times lower efficiency loss. In MRVM, MVNN's average efficiency loss is approximately 1\% point smaller than the NN's loss. Given that in the 2014 Canadian 4G auction the total revenue was on the order of 5 billion USD \cite{ausubel2017practical}, an efficiency loss decrease of 1\% point in MRVM can translate to welfare gains on the order of 50 million USD. Finally, in GSVM, the simplest domain, where bidders' value functions have at most two-way interactions between items, both MVNNs and plain NNs incur no efficiency loss. As further baselines, we evaluate the Fourier transform (FT) auction \cite{weissteiner2022fourier} using their proposed optimal hyperparameters and random search (RS). We do not compare to SVRs \cite{brero2021workingpaper} since they were already outperformed by plain NNs in \cite{weissteiner2020deep}.
We observe that RS incurs efficiency losses of 30--50\% illustrating the need for smart preference elicitation. Moreover, we see that MVNNs also significantly outperform FTs in all domains.

In Figure~\ref{fig:efficiency_path_plot_summary}, we present the efficiency loss path of MVNNs vs NNs (i.e., the regret curve) corresponding to \Cref{tab:efficiency_loss_mlca}. We see that in LSVM and SRVM, MVNNs lead to a smaller (average) efficiency loss for \emph{every} number of queries. In MRVM, the same holds true for $50$ and more queries. In GSVM, both networks have no efficiency loss in every instance after only $56$ queries. Since a single query can be very costly in real-world CAs, it makes sense to ask few queries. Figure~\ref{fig:efficiency_path_plot_summary} shows that MVNNs consistently outperform plain NNs also in settings with a small number of queries (i.e., reduced $\Qmax$)
\begin{figure}[t!]
    \centering
     \includegraphics[width=1\columnwidth]{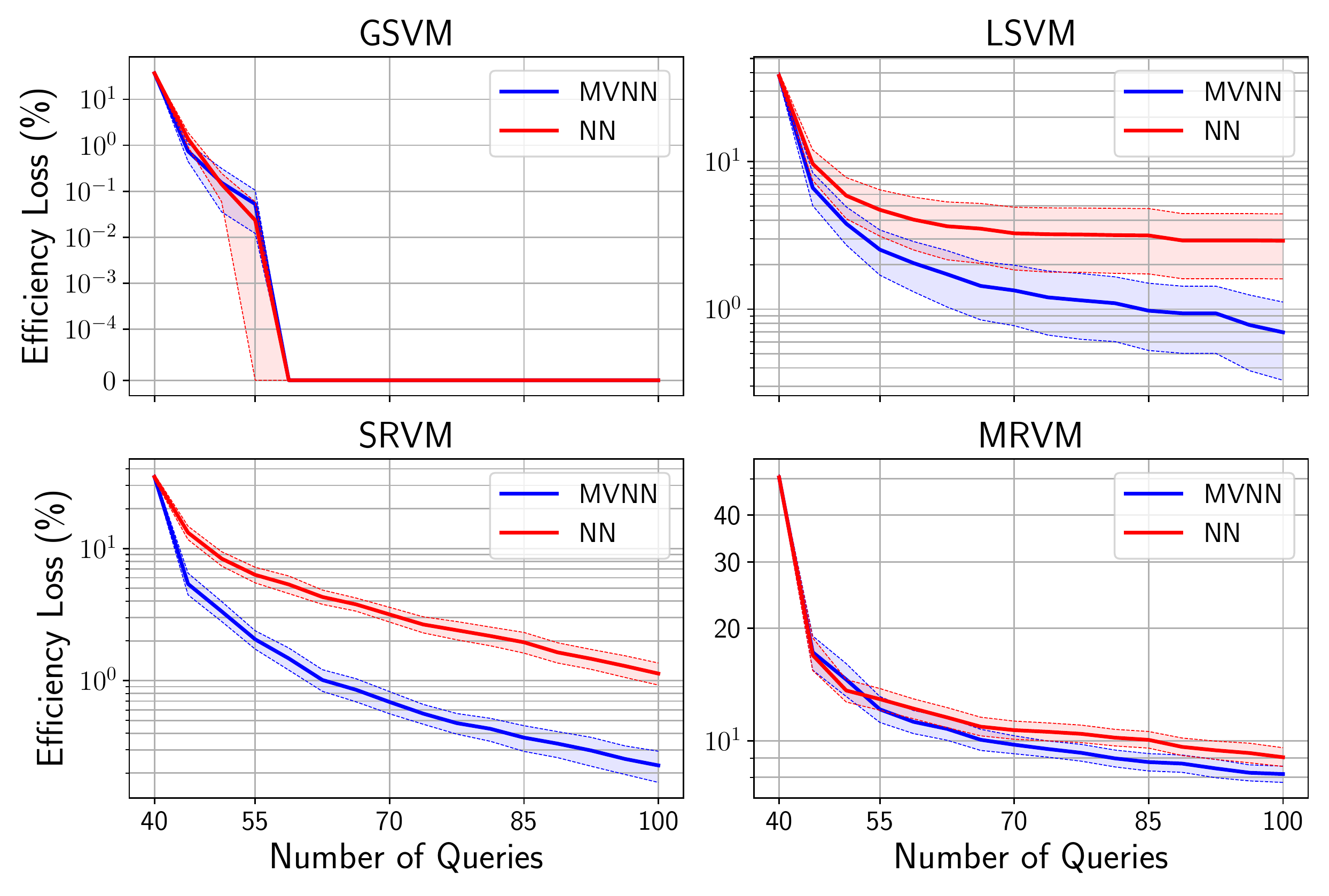}
        \vskip -0.2cm
    \caption{Efficiency loss paths of MLCA with MVNNs vs plain NNs. Shown are averages with 95\% CIs over 50 auction instances.}
    \label{fig:efficiency_path_plot_summary}
\end{figure}

\subsection{MILP Runtime Analysis}\label{subsec:mip_runtime} When integrating MVNNs into MLCA or another iterative combinatorial assignment mechanism, one needs to solve the MVNN-based WDP multiple times in one full run.
Thus, the key computational challenge when integrating MVNNs in such mechanisms is to make solving the MVNN-based WDP practically feasible. In Theorem~\ref{theo:util_mip}, we have shown how to encode the MVNN-based WDP as a succinct MILP, which can be (approximately) solved in practice for reasonably-sized NNs. However, due to the bReLU, i.e., the two cutoffs at $0$ and $t>0$, the MVNN-based MILP has twice the number of binary variables ($y^{i,k}$ and $\mu^{i,k}$) than the MILP encoding of a plain NN with ReLUs \cite{weissteiner2020deep}. 

Figure~\ref{fig:mip_runtime_analysis} presents MILP runtimes of MVNNs vs plain ReLU NNs for selected architectures. We observe two effects: First, even though the MVNN-based MILPs have twice the number of binary variables, they can be solved faster than the plain NN-based MILPs. Second, the deeper the architecture or the more neurons, the larger this difference becomes. One hypothesis to explain these effects is that (i) MVNNs are more regular functions than plain NNs (due to their monotonicity property) and (ii) the constraints on their parameters yield structure that might be exploited by the MILP solver.
\begin{figure}[t!]
    \centering
    \includegraphics[width=1\columnwidth]{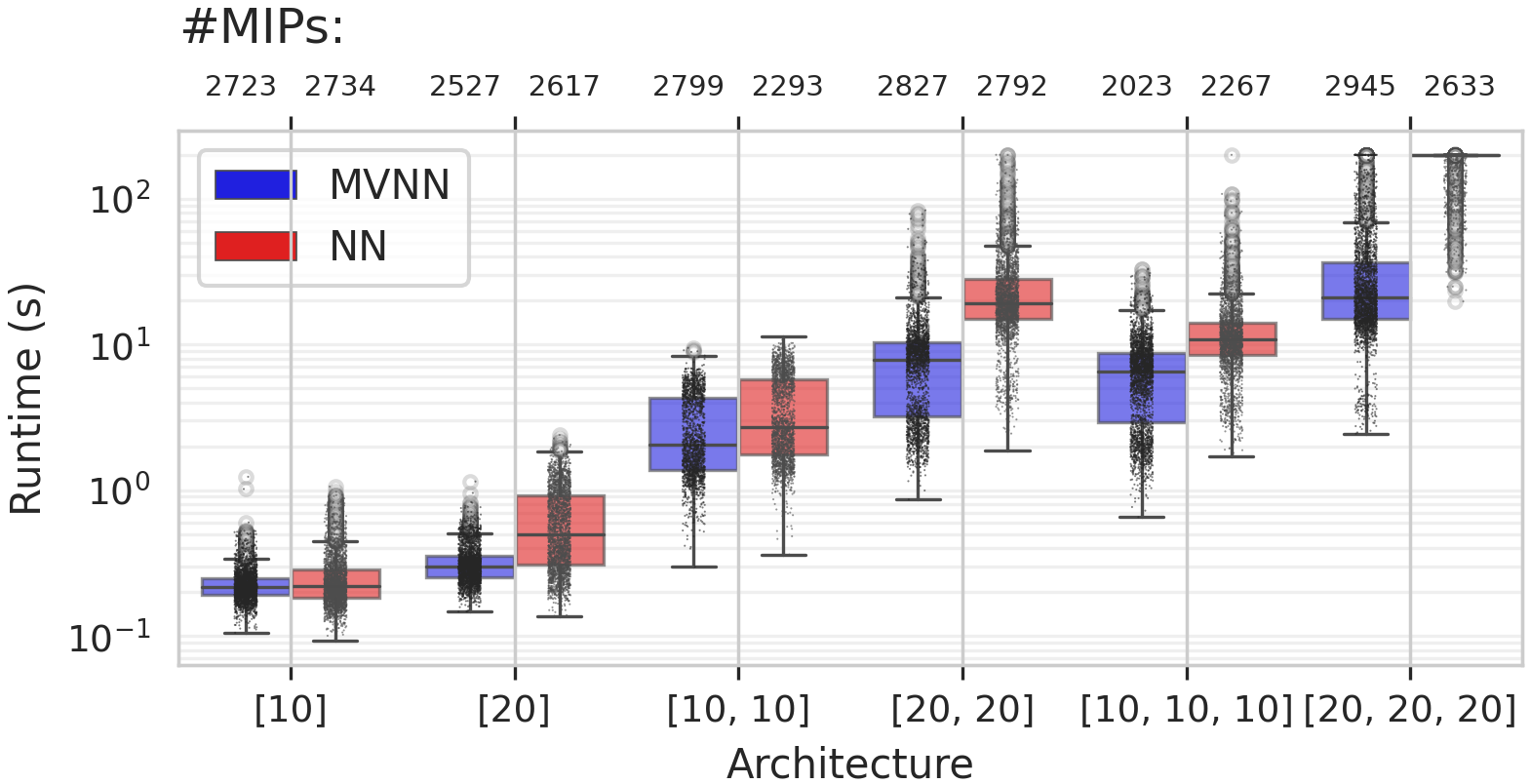}
    \vskip -0.2cm
    \caption{MILP runtime (200s time limit) of MVNNs vs plain NNs in MLCA on $10$ LSVM instances for a selection of architectures.
    }
    \label{fig:mip_runtime_analysis}
\end{figure}

\section{Conclusion}\label{sec:Conclusion}
In this paper, we have introduced MVNNs, a new class of NNs that is specifically designed to model normalized and monotone value functions in combinatorial assignment problems. We have experimentally evaluated the performance of MVNNs in four combinatorial spectrum auction domains and shown that MVNNs outperform plain NNs with respect to prediction performance, economic efficiency, and runtime. Overall, our experiments suggest that MVNNs are the best currently available model for preference elicitation in combinatorial assignment (also compared to FTs and SVRs).
Thus, incorporating important structural knowledge in the ML algorithm plays an important role in combinatorial assignment.

MVNNs enable us to incorporate an informative \textit{prior} into a market mechanism. Future work could use such informative priors and enhance existing mechanisms (e.g., MLCA) by also using the \textit{posterior} estimates in a more principled way than just the mean prediction. For example, one could frame an ICA as a (combinatorial) Bayesian optimization task and integrate a well-defined notion of posterior uncertainty to foster exploration \cite{heiss2022nomu}.\footnote{After the publication of the present paper, \cited{weissteiner2023bayesian}  introduced uncertainty-based exploration for ML-powered ICAs using the method by \cited{heiss2022nomu}.} Finally, it would be interesting to also evaluate the performance of MVNNs in other combinatorial assignment problems such as course allocation.\footnote{After the publication of the present paper, \cited{soumalias2023machine} used MVNNs to learn students' preferences in the course allocation domain.}

\section*{Acknowledgments}
We thank the anonymous reviewers for helpful comments. This paper is part of a project that has received funding from the European Research Council (ERC)
under the European Union’s Horizon 2020 research and innovation program (Grant agreement No. 805542).

\bibliographystyle{named}
\bibliography{references_arXiv}

\clearpage
\appendix
\section*{\centering Appendix}
\counterwithin{definition}{section}
\counterwithin{corollary}{section}
\counterwithin{problem}{section}
\counterwithin{example}{section}
\counterwithin{remark}{section}
\counterwithin{fact}{section}
\section{A Machine Learning-powered ICA}\label{sec:appendix_A Machine Learning powered ICA}
In this section, we present in detail the \textit{machine learning-powered combinatorial auction (MLCA)} by \cited{brero2021workingpaper}.

At the core of MLCA is a \textit{query module} (Algorithm~\ref{alg:QueryModule}), which, for each bidder $i\in I\subseteq N$, determines a new value query $q_i$. First, in the \textit{estimation step} (Line 1), an ML algorithm $\mathcal{A}_i$ is used to learn bidder $i$'s valuation from reports $R_i$. Next, in the \textit{optimization step} (Line 2), an \textit{ML-based WDP} is solved to find a candidate $q$ of value queries. In principle, any ML algorithm $\mathcal{A}_i$ that allows for solving the corresponding ML-based WDP in a fast way could be used. Finally, if $q_i$ has already been queried before (Line 4), another, more restricted NN-based WDP (Line 6) is solved and $q_i$ is updated correspondingly. This ensures that all final queries $q$ are new.
\setlength{\textfloatsep}{5pt}
\begin{algorithm}[h!]
        \DontPrintSemicolon
        \SetKwInOut{inputs}{Inputs}
        \inputs{~Index set of bidders $I$ and reported values $R$}
    \lForEach(\Comment*[f]{\color{blue}Estimation step}){$i \in I$}{
    \hspace{-0.07cm}Fit $\mathcal{A}_i$ on $R_i$: $\mathcal{A}_i[R_i]$
    }
    Solve $q \in \argmax\limits_{a \in {\F}}\sum\limits_{i \in I} \mathcal{A}_i[R_i](a_i)$ \hspace{-0.03cm}\Comment*[r]{\color{blue}Optimization step}
    \ForEach{$i \in I$}{
        \If(\Comment*[f]{\color{blue} Bundle already queried}){$(q_i,\hvi{q_i})\in R_i$}{ 
        Define $\pF=  \{a\in \F : a_i \neq x, \forall (x,\hvi{x})\in R_i\}$\;
        Re-solve $\pq \in \argmax_{a \in \pF}\sum_{l \in I} \mathcal{A}_l[R_l](a_l)$\;
        Update $q_i = \pqi\;$
        }
    }
    \Return{Profile of new queries $q=(q_1,\ldots,q_n)$}
    \caption{\textsc{NextQueries}$(I,R)$\, {\scriptsize (Brero et al. 2021)}}
    \label{alg:QueryModule}
\end{algorithm}

In Algorithm~\ref{MLCA}, we present \textsc{Mlca}. In the following, let $R_{-i}=(R_1,\ldots,R_{i-1},R_{i+1},\ldots, R_n)$. \textsc{Mlca} proceeds in rounds until a maximum number of queries per bidder $\Qmax$ is reached. In each round, it calls Algorithm \ref{alg:QueryModule}  $(\Qround-1)n+1$ times: for each bidder $i\in N$, $\Qround-1$ times excluding a different bidder $j\neq i$ (Lines 5--10,  sampled \textit{marginal economies}) and once including all bidders (Line 11, \textit{main economy}). In total each bidder is queried $\Qround$ bundles per round in \textsc{MLCA}. At the end of each round, the mechanism receives reports $\Rnew$ from all bidders for the newly generated queries $\qnew$, and updates the overall elicited reports $R$ (Lines 12--14). In Lines 16--17, \textsc{Mlca} computes an allocation $a^*_R$ that maximizes the \emph{reported} social welfare (see \Cref{WDPFiniteReports}) and determines VCG payments $p(R)$ based on the reported values $R$ (see \Appendixref[Appendix ]{def:vcg_payments}{Definition~B.1}).
\setlength{\textfloatsep}{5pt}
\begin{algorithm}[t!]
        \DontPrintSemicolon
        \SetKwInOut{parameters}{Params}
        \parameters{$\Qinit,\Qmax,\Qround$ {initial, max and \#queries/round}}
    \ForEach{$i \in N$}{Receive reports $R_i$ for $\Qinit$ randomly drawn bundles}
    \For(\Comment*[f]{\hspace{-0.05cm}\color{blue}Round iterator}){$k=1,...,\floor{(\Qmax-\Qinit)/\Qround}$}{
        \ForEach(\Comment*[f]{\color{blue}Marginal economy queries}){$i \in N$}{
            {Draw uniformly without replacement $(\Qround\hspace{-0.1cm}-\hspace{-0.1cm}1)$ bidders from $N\setminus\{i\}$ and store them in $\tilde{N}$}\;
            \ForEach{$j \in \tilde{N}$}{
            $\qnew=\qnew\cup$ \textit{NextQueries$(N\setminus\{j\},R_{-j})$}
            }
        }
        $\qnew=$ \textit{NextQueries$(N,R)$} \Comment*[r]{\color{blue}Main economy queries}
        \ForEach{$i \in N$}{
         Receive reports $\Rnewi$ for $\qnew_i$, set $R_i=R_i\cup\Rnewi$
        }
    }
    Given elicited reports $R$ compute $a^*_{R}$ as in \Cref{WDPFiniteReports}\;
    Given elicited reports $R$ compute VCG-payments $p(R)$\;
    \Return{Final allocation $a^*_{R}$ and payments $p(R)$}
    \caption{\small \textsc{Mlca}($\Qinit,\Qmax,\Qround$)\, {\scriptsize (Brero et al. 2021)}}
    \label{MLCA}
\end{algorithm}

\section{Incentives of MLCA}\label{sec:appendix_Incentives of MLCA}
In this section, we briefly review the key arguments by \cited{brero2021workingpaper} why MLCA has good incentives in practice. First, we define VCG-payments given bidder's reports.

\begin{definition}{\textsc{(VCG Payments from Reports)}}\label{def:vcg_payments}
Let $R=(R_1,\ldots,R_n)$ denote an elicited set of reported bundle-value pairs from each bidder obtained from \textsc{Mlca} (\Cref{MLCA}) and let $R_{-i}\coloneqq(R_1,\ldots,R_{i-1},R_{i+1},\ldots,R_n)$. We then calculate the VCG payments $p(R)=(p(R)_1\ldots,p(R)_n) \in \R_+^n$ as follows:
\begin{align}\label{VCGPayments}
&p(R)_i \coloneqq \hspace{-0.2cm}\sum_{j \in N \setminus \{i\}} \hvj{}{}\left(\left(a^*_{R_{-i}}\right)_j\right) - \hspace{-0.2cm}\sum_{j \in N \setminus \{i\}}\hvj{}{}\left(\left(a^*_{R}\right)_j\right).
\end{align}
where $a^*_{R_{-i}}$ maximizes the reported social welfare when excluding bidder $i$, i.e.,
\begin{align}
&a^*_{R_{-i}}\in \argmax_{a \in \F} \hV{a|R_{-i}} = \argmax_{a \in \F}\hspace{-0.4cm}\sum_{\substack{j \in N\setminus\{i\}:\\ \left(a_j,\hvj{}(a_j)\right)\in R_j}}\hspace{-0.4cm}\hvj{}(a_j),
\end{align}
and $a^*_R$ is a reported-social-welfare-maximizing allocation (including all bidders), i.e,
\begin{align}
&a^*_{R}\in \argmax_{a \in \F} \hV{a|R} = \argmax_{a \in \F} \hspace{-1.3cm}\sum_{\hspace{1cm}i \in N:\, \left(a_i,\hvi{}(a_i)\right)\in R_i}\hspace{-1.3cm}\hvi{}(a_i).
\end{align}
\end{definition}

Therefore, when using VCG, bidder $i$'s utility is:
{\small\begin{align*}
u_i \hspace{-0.05cm}=&v_i((a_R^*)_i)-p(R)_i\\
=&\hspace{-0.05cm}  \underbrace{v_i((a_R^*)_i) + \hspace{-0.3cm}\sum_{j \in N \setminus \{i\}}\hspace{-0.25cm}\hvj{}((a^*_{R})_j)}_{\textrm{\scriptsize (a) Reported SW of main economy}} - \hspace{-0.1cm} \hspace{-0.1cm}\underbrace{\sum_{j \in N \setminus \{i\}}\hspace{-0.25cm} \hvj{}((a^*_{R_{-i}})_j).}_\textrm{{\scriptsize (b) Reported SW of marginal economy}}
\end{align*}}
Any beneficial misreport must increase the difference (a) $-$ (b). 

MLCA has two features that mitigate manipulations. First, MLCA explicitly queries each bidder's marginal economy (\Cref{MLCA}, Line 5), which implies that (b) is practically independent of bidder $i$'s bid (Section 7.3 in \cite{brero2021workingpaper} provides experimental support for this). Second, MLCA enables bidders to ``push'' information to the auction which they deem useful. This mitigates certain manipulations that target (a), as it allows bidders to increase (a) with truthful information. \cited{brero2021workingpaper} argue that any remaining manipulation would be implausible as it would require almost complete information.

If we are willing to make two assumptions, we also obtain a theoretical incentive guarantee. Assumption 1 requires that, if all bidders bid truthfully, then MLCA finds an efficient allocation (we show in \Cref{subsec:appendix_detailed_results_MLCA} that in two of our domains: GSVM, LSVM, we indeed find the efficient allocation in the majority of cases). Assumption 2 requires that, for all bidders $i$, if all other bidders report truthfully, then the social welfare of bidder $i$'s marginal economy is independent of his value reports. If both assumptions hold, then bidding truthfully is an ex-post Nash equilibrium in MLCA.

\section{Monotone-Value Neural Networks}\label{sec:appendix_MVNNs}
In this section, we provide proofs for all mathematical claims made in \Cref{subsec:Theoretical Analysis and MIP-Formulation} (\Cref{subsec:appendix_proofLemmaUniversality}, \ref{subsec:appendix_proofTheoremUniversality} and \ref{subsec:appendix_proofLemmaMIP}), provide a toy example on MVNNs (\Cref{subsec:appendix_example_mvnns}) and give an overview of bounds tightening via \emph{interval arithmetic} for the MVNN-based MILP (\Cref{app:sec_util:ia_bounds}).
\subsection{Proof of \Cref{lem:normalized and monoton}}\label{subsec:appendix_proofLemmaUniversality}
\setcounter{lemma}{0}
\begin{lemma}\label{app_lem:normalized and monoton}
    Let $\MVNNi{}:\X\to \Rp$ be an MVNN from \Cref{def:MVNN}. Then it holds that  $\MVNNi[(W^{i},b^{i})]{}\in\Vmon$ for all $W^{i}\ge0$ and $b^{i}\le0$.
\end{lemma}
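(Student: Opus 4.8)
The plan is to verify the two defining properties of $\Vmon$ — normalization \textbf{(N)} and monotonicity \textbf{(M)} — separately, in both cases by a short induction over the $K_i$ layers of the network, and to check non-negativity of the output along the way. The only structural facts I would isolate first are: (i) the bReLU $\phiu_{0,t}$ is non-decreasing, satisfies $\phiu_{0,t}(z)=0$ for every $z\le 0$, and takes values in $[0,t]$; (ii) left-multiplication by a matrix $W^{i,k}\ge 0$ is order-preserving, i.e., $z\le z'$ componentwise implies $W^{i,k}z\le W^{i,k}z'$; and (iii) translation by a fixed vector preserves the componentwise order. Everything then reduces to chaining these facts through the layers.

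For \textbf{(N)}, I would feed the zero bundle $x=(0,\ldots,0)$ through the network and show by induction that every hidden layer outputs the zero vector: the first pre-activation is $W^{i,1}\cdot 0 + b^{i,1}=b^{i,1}\le 0$, which $\phiu_{0,t}$ clamps to $0$ by (i); and if layer $k-1$ outputs $0$, then layer $k$'s pre-activation is $b^{i,k}\le 0$, again clamped to $0$. Since the readout layer applies no activation and has $b^{i,K_i}=0$, the output on $x=0$ is $W^{i,K_i}\cdot 0=0$, giving \textbf{(N)}. (The footnote in \Cref{def:MVNN} about a fixed nonzero $b^{i,K_i}$ follows from the identical computation.) For \textbf{(M)}, I would take $A\subseteq B\subseteq M$ with indicator vectors $x^A\le x^B$, write $z^{i,k}(x)$ for the output of layer $k$ on input $x$, and prove $z^{i,k}(x^A)\le z^{i,k}(x^B)$ by induction on $k$: the base case is $z^{i,0}(x^A)=x^A\le x^B=z^{i,0}(x^B)$, and the inductive step applies (ii) and (iii) to the affine map and then the monotone map $\phiu_{0,t}$ componentwise via (i). Applying the non-negative readout matrix once more (fact (ii)) gives $\MVNNi{x^A}\le \MVNNi{x^B}$, i.e., \textbf{(M)}. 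Non-negativity of $\MVNNi{}$ is immediate since all hidden activations lie in $[0,t]$ and $W^{i,K_i}\ge 0$, so $\MVNNi[(W^i,b^i)]{}\in\Vmon$.

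There is no genuine obstacle; the argument is routine once the three facts above are stated. The only point worth flagging — and it motivates the design choices in \Cref{def:MVNN} — is that monotonicity uses nothing about the bReLU beyond being non-decreasing (any monotone activation would do and the sign of the biases is irrelevant), whereas normalization crucially relies on the constraint $b^{i,k}\le 0$ together with $\phiu_{0,t}$ vanishing on the non-positive reals.
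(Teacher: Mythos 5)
Your proof is correct and follows essentially the same route as the paper's: monotonicity from the non-negative weights together with the monotone activation, and normalization from the non-positive biases together with $\phiu_{0,t}$ vanishing on the non-positive reals. You merely spell out the layerwise induction that the paper leaves implicit, and your closing observation about which property uses which assumption matches the paper's own remark on the role of the activation function.
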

\begin{proof}\

\begin{enumerate}
    \item \textbf{Monotonicity (M)}:\\
    This property immediately follows, since (component wise) the weights $W^{i,k}\ge0$ for all $k\in \{1\ldots,K_i\}$ and $\phiu_{0,t}(z)$ is monotonically increasing.
    \item \textbf{Normalization (N)}:\\
    Since $\phiu_{0,t}(z)=0$ for $z\le0$ and the biases fulfill (component wise) $b^{i,k}\le0$, we can conclude that $\MVNNi{}((\underbrace{0,\ldots,0}_{m\text{-times}}))=0$.
\end{enumerate}
\end{proof}

\begin{remark}[The role of the activation-function in \Cref{app_lem:normalized and monoton}]\label{rem:leMonotoneANdNormalized}
For \Cref{app_lem:normalized and monoton} it would be sufficient to only assume that the activation function is monotonically increasing to get (M) and maps negative numbers to zero to get (N). So ReLU would also be a valid activation function for \Cref{app_lem:normalized and monoton}, but not for \Cref{app:thm:Universality} (see \Cref{rem:thmUniversality}).
\end{remark}

\subsection{Example MVNNs}\label{subsec:appendix_example_mvnns}
The following example illustrates how we can capture complementarities, substitutabilities and independent items via an MVNN.
\begin{example}[MVNN]\label{app:example_mvnns}
Consider the set of items $M = \{x_1, x_2, x_3\}$ and the associated (reported) value function $\hvi{}$ shown in \Cref{tab:example} (where we use $001$ as a shorthand notation for $(0,0,1)$):

\begin{table}[h]
\centering
\begin{sc}
\resizebox{0.98\columnwidth}{!}{
	\begin{tabular}{
			c
			S[table-format=1.0]
			S[table-format=1.0]
			S[table-format=1.0]
			S[table-format=1.0]
			S[table-format=1.0]
			S[table-format=1.0]
			S[table-format=1.0]
			S[table-format=1.0]
			}
		\toprule
		& {000} & {100} & {010} & {001}& {110} & {101} & {011} & {111}\\
		\cmidrule(lr){2-2}
		\cmidrule(lr){3-5}
        \cmidrule(lr){6-8}
        \cmidrule(lr){9-9}
		$\hvi{}$   & {0}& {\vphantom{-}1} & {\vphantom{-}1}  & {\vphantom{-}1}  & {1} & {3} & {2} & {\vphantom{-}4}\\
		\bottomrule
	\end{tabular}
}
\vskip -0.2cm
\caption{Example on flexibility of MVNNs.}
\label{tab:example}
\end{sc}
\end{table}

In this example, $x_1$ and $x_2$ are \emph{substitutes}, i.e., $2=\hvi{\{x_1\}}+\hvi{\{x_2\}}>\hvi{\{x_1,x_2\}}=1$; $x_1$ and $x_3$ are \emph{complements}, i.e., $2=\hvi{\{x_1\}}+\hvi{\{x_3\}}<\hvi{\{x_1,x_3\}}=3$; and $x_2$ and $x_3$ are \emph{independent}, i.e., $2=\hvi{\{x_2\}}+\hvi{\{x_3\}}=\hvi{\{x_2,x_3\}}=2$. This reported value function can be exactly captured by an MVNN $\MVNNi{x}$ in the following way:

\begin{figure}[h!]
\begin{center}
        \resizebox{0.7\columnwidth}{!}{
                \begin{tikzpicture}
                [cnode/.style={draw=black,fill=#1,minimum width=3mm,circle}]
                \node[cnode=gray,label=180:$x_{1}$](x1) at (0,-1) {};
                \node[cnode=gray,label=180:$x_{2}$](x2) at (0,-2.5) {};
                \node[cnode=gray,label=180:$x_{3}$](x3) at (0,-4) {};
                \node[cnode=gray,label={90:$0$},label={10:$\phiu{}$}] (h1) at (3,-1.75) {};
                \node[cnode=gray,label={90:$-1$},label={-10:$\phiu{}$}] (h2) at (3,-3.25) {};
                \node[cnode=gray,label={90:$0$}](out) at (6,-2.5) {};
                
                \draw (x1)--(h1) node[midway,above] {$1$};
                \draw (x2)--(h1);
                \node at (1.25,-1.65) {$0.5$};
                \draw (x3)--(h1); 
                \node at (.5,-3.4) {$1$};
                
                \draw (x1)--(h2);
                \node at (.9,-2.1) {$1$};
                \draw (x2)--(h2);
                \node at (1,-2.55) {$0.25$};
                \draw (x3)--(h2) node[midway,above] {$1$};
                
                \draw (h1)--(out) node[midway,above] {$1$};
                \draw (h2)--(out) node[midway,above] {$4$};
                
                \node at (1.5,-0.5) {$W^{i,1}(\cdot)+b^{i,1}$};
                \node at (4.5,-0.5) {$W^{i,2}(\cdot)$};
                \end{tikzpicture}
        }
        \end{center}
        \vskip -0.2cm
        \caption{MVNN $\protect\MVNNi{}$ with $\protect\MVNNi{x}=\protect\hvi{x}\quad \forall x\in \X$.}
        \label{fig:mvnn_tik}
\end{figure}
Biases are marked at the top of each neuron and weights are marked above the corresponding connections. A missing connection denotes a weight of 0. We see that the second kink (at $t=1$) of the \emph{bReLU} together with the $0$ bias of the top neuron in the hidden layer implements the substitutability between $x_1$ and $x_2$. Furthermore, the complementarity between $x_1$ and $x_3$ is implemented by the bottom neuron in the hidden layer via the negative bias $-1$ and the first kink (at $0$) of the \emph{bReLU}. 
\end{example}

\subsection{Proof of \Cref{thm:universality}}\label{subsec:appendix_proofTheoremUniversality}
\setcounter{theorem}{0}
\begin{theorem}[Universality]\label{app:thm:Universality}\emph{Any} value function $\hvi{}:\X\to\Rp$ that satisfies \textbf{(N)} and \textbf{(M)} can be represented exactly as an MVNN~$\MVNNi{}$ from \Cref{def:MVNN}, i.e.,
{
\begin{align}
\Vmon=\left\{\MVNNi[(W^i,b^i)]{}: W^{i}\ge0, b^{i}\le0\} \right\}.
\end{align}}
\end{theorem}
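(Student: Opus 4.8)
The plan is to give a constructive proof. By Lemma~\ref{lem:normalized and monoton} we already have the inclusion $\left\{\MVNNi[(W^i,b^i)]{}: W^{i}\ge0, b^{i}\le0\right\}\subseteq\Vmon$, so the work is to show the reverse inclusion: given an arbitrary $(\hvi{x})_{x\in\X}\in\Vmon$, exhibit non-negative weights and non-positive biases realizing it. I would build a two-hidden-layer MVNN of shape $[m,2^m-1,2^m-1,1]$, with one hidden neuron in each of the two hidden layers dedicated to each non-empty bundle $y\in\X\setminus\{0\}$.

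The key idea is a telescoping/inclusion-exclusion style construction. For the first hidden layer, associate to each non-empty $y\in\X$ a neuron whose job is to output the indicator ``$x \supseteq y$'': set the incoming weights to be $1$ on exactly the coordinates $j$ with $y_j=1$ and $0$ elsewhere, and set the bias to $-(|y|-1)$ where $|y|=\sum_j y_j$. Then the pre-activation on input $x$ is $|x\cap y| - |y| + 1$, which is $\le 0$ unless $x\supseteq y$, in which case it equals $1$; after applying $\phiu_{0,1}$ this neuron outputs exactly $\mathbf{1}[x\supseteq y]$. The weights are non-negative and the bias is non-positive as required. The second hidden layer is used only to be able to realize arbitrary coefficients: for each $y$, the corresponding second-layer neuron reads off the first-layer neuron for $y$ with weight $1$ (bias $0$), so it still outputs $\mathbf{1}[x\supseteq y]$ after $\phiu_{0,1}$ (the value is always in $\{0,1\}\subseteq[0,1]$, so the cutoff is harmless). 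Finally, the linear readout assigns to the neuron for $y$ a weight $c_y$, so that $\MVNNi{x} = \sum_{\emptyset\neq y\subseteq M} c_y\,\mathbf{1}[x\supseteq y]$.

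It then remains to choose the $c_y$ so that $\sum_{\emptyset\neq y\subseteq x} c_y = \hvi{x}$ for all $x$, and — crucially — so that all $c_y\ge 0$ (the readout weights must be non-negative). This is exactly the Möbius inversion on the Boolean lattice: defining $c_y := \sum_{z\subseteq y}(-1)^{|y|-|z|}\hvi{z}$ inverts the summation identity, so the fit is automatic. The non-negativity $c_y\ge 0$ is precisely where monotonicity~\textbf{(M)} enters: $c_y$ is an alternating sum that can be grouped into a nonneg combination of differences $\hvi{B}-\hvi{B\setminus\{j\}}$ (a discrete mixed partial derivative), each of which is $\ge 0$ by~\textbf{(M)}. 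Normalization~\textbf{(N)} guarantees $\hvi{0}=0$ so the empty bundle contributes nothing and we really only need neurons for $y\neq\emptyset$. I expect this non-negativity argument — verifying that the Möbius coefficients of a monotone function are non-negative, and organizing it cleanly as an induction on $|y|$ or via a telescoping grouping — to be the main obstacle; the rest (checking the indicator computation, checking the sign constraints on weights/biases, checking that $\phiu_{0,1}$ acts as the identity on the $\{0,1\}$-valued signals) is routine. One should also double-check the degenerate case $|y|=1$, where the bias is $0$ and the neuron simply copies coordinate $j$ of $x$.
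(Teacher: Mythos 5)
There is a genuine gap, and it sits exactly at the step you flagged as the main obstacle: the claim that monotonicity forces the M\"obius coefficients $c_y$ to be non-negative is false. For $|y|\ge 2$, the coefficient $c_y=\sum_{z\subseteq y}(-1)^{|y|-|z|}\hvi{z}$ is a \emph{higher-order} mixed difference, not a non-negative combination of first-order differences $\hvi{B}-\hvi{B\setminus\{j\}}$; property \textbf{(M)} only controls first-order differences. Concretely, take two perfectly substitutable items (this is precisely the substitutes pattern in the paper's own toy example, Example~C.1): $\hvi{\emptyset}=0$ and $\hvi{\{1\}}=\hvi{\{2\}}=\hvi{\{1,2\}}=1$. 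This function lies in $\Vmon$, yet $c_{\{1,2\}}=1-1-1+0=-1<0$. Since in your construction the readout weights are exactly the $c_y$ and must be non-negative, your architecture (first layer computing the principal-up-set indicators $\mathbf{1}[x\supseteq y]$, second layer a pass-through) can only realize set functions with a non-negative M\"obius transform, i.e., totally monotone (``belief''-type) functions --- a strictly smaller cone than $\Vmon$ which excludes all substitutabilities. So this is not merely an unverified step; the reverse inclusion cannot be completed along this route, because the family of features you feed into the non-negative readout is too poor.

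The paper's proof avoids this by using a different non-negative decomposition. It sorts all $2^m$ values $w_1\le\dots\le w_{2^m}$ (with $w_1=\hvi{\emptyset}=0$) and writes $\hvi{x}=\sum_{l=1}^{2^m-1}(w_{l+1}-w_l)\,\mathbf{1}\left[\forall j\le l:\ x\not\subseteq x_j\right]$, i.e., a layer-cake over indicators of \emph{general} up-sets (complements of unions of principal down-sets), not principal up-sets. The readout coefficients $w_{l+1}-w_l$ are non-negative automatically from the sorting, and monotonicity is used only to verify the telescoping identity (if $x\subseteq x_j$ for some earlier-sorted $x_j$, then \textbf{(M)} forces $w_j=\hvi{x}$, so the truncated telescope still returns the right value). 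The two bReLU layers are then used as in your sketch, but to build these richer indicators: the first layer computes $\phiu_{0,1}\left(\left<1-x_j,x\right>\right)=\mathbf{1}[x\not\subseteq x_j]$, and the second layer takes an AND of the first $l$ of them via a sum with bias $-(l-1)$ followed by $\phiu_{0,1}$. If you want to salvage your structure, you would need the hidden layers to produce indicators of arbitrary up-sets rather than only the sets $\{x:x\supseteq y\}$; with principal up-sets and non-negative coefficients alone you remain confined to the belief-function cone. Your first-layer gadget and the observation that $\phiu_{0,1}$ is harmless on $\{0,1\}$-valued signals are fine and reusable, but the decomposition itself must change.
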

\begin{proof}\label{appproof:thm:Universality}\
\begin{enumerate}
    \item {\small $\Vmon \supseteq \left\{\MVNNi[(W^i,b^i)]{}: W^{i,k}\ge0, b^{i,k}\le0\,\, \forall k \in \{1,\ldots,K_i\} \right\}$}\\
    
    This direction follows immediately from \Cref{app_lem:normalized and monoton}.\\
    
    \item {\small$\Vmon \subseteq \left\{\MVNNi[(W^i,b^i)]{}: W^{i,k}\ge0, b^{i,k}\le0\,\, \forall k \in \{1,\ldots,K_i\} \right\}$}\\
    
    Let $(\hvi{x})_{x\in \X} \in \Vmon$.
    For the reverse direction, we give a constructive proof, i.e., we construct an MVNN $\MVNNi{}$ with $\theta=(W^i_{\hvi{}},b^i_{\hvi{}})$ such that $\MVNNi{x}=\hvi{x}$ for all $x\in \X$.
    
    Let $(w_j)_{j=1}^{2^m}$ denote the values corresponding to $(\hvi{x})_{x\in \X}$ sorted in increasing order, i.e, let
    $x_1=(0,\ldots,0)$ with 
    \begin{align}\label{eq:w_1}
    w_1:=\hvi{x_1}=0,
    \end{align}
    let $x_{2^m}=(1,\ldots,1)$ with \begin{align}\label{eq:w_2m}
    w_{2^m}:=\hvi{x_{2^m}},
    \end{align}
    and $x_j,x_l \in \X\setminus\{x_1,x_{2^m}\}$\ for $1< l \le j \le 2^m-1$ with
    \begin{align}\label{eq:w_remaining}
    w_j:=\hvi{x_j}\, \leq\, w_l:=\hvi{x_l}.
    \end{align}
    In the following, we slightly abuse the notation and write for $x_l,x_j\in \X$ $x_l\subseteq x_j$ iff for the corresponding sets $A_j, A_l\in 2^M$ it holds that $A_j\subseteq A_l$. Furthermore, we denote by $ \left<\cdot,\cdot\right>$ the Euclidean scalar product on $\R^m$. Then, for all $x\in\X$:
    {\small
    \begin{align}
    &\hvi{x}
    =\hspace{-0.05cm}\sum_{l=1}^{2^m-1}\hspace{-0.05cm} \left(w_{l+1}-w_{l}\right)\1{\forall j\in\{1,\dots,l\}\,:\,x\not\subseteq x_{j}}\\
    &=\sum_{l=1}^{2^m-1}\hspace{-0.05cm} \left(w_{l+1}-w_{l}\right)		  \phiu_{0,1}{}\hspace{-0.05cm}\left(\sum_{j=1}^{l}	\phiu_{0,1}{}\left( \left<1-x_{j},x\right>\right)-(l-1)\right)\label{eq:last_term},	      
    \end{align}
    }%
    where the second equality follows since
    {\small
    \begin{align}
        x\not\subseteq x_{j} & \iff \left<1-x_{j},x\right> \ge1\\
        &\iff \phiu_{0,1}{}\left( \left<1-x_{j},x\right>\right)=1,
    \end{align}
    }%
    which implies that
    {\small
    \begin{align}
        \forall j\in\{1,\dots,l\}: x\not\subseteq x_{j} &\iff \sum_{j=1}^{l}\phiu_{0,1}{}\left( \left<1-x_{j},x\right>\right)=l,
    \end{align}
    }%
    and
    {\small
    \begin{align}
        \1{\forall j\in\{1,\dots,l\}\,:\,x\not\subseteq x_{j}}=\phiu_{0,1}{}\left(\sum_{j=1}^{l}\phiu_{0,1}{}\left( \left<1-x_{j},x\right>\right)-(l-1)\right)
    \end{align}
    }%
    
    Finally, \Cref{eq:last_term} can be equivalently written in matrix notation as
    
    \resizebox{!}{0.18\columnwidth}{\parbox{\columnwidth}{
    \begin{align*}
    \hspace{-0.1cm}\underbrace{\begin{bmatrix} w_2 - w_1  \\ w_3 - w_2  \\ \vdots \\ w_{2^m} - w_{2^m - 1} \end{bmatrix}^T}_{(W^{i,3}_{\hvi{}})^T\in \R_{\ge0}^{2^m-1}}\hspace{-0.3cm}\phiu_{0,1}{}\hspace{-0.1cm}\left(\hspace{-0.1cm}W^{i,2}_{\hvi{}} \phiu_{0,1}{}\hspace{-0.1cm}\left(\underbrace{\begin{bmatrix} 1- x_{1} \\1- x_{2} \\ \vdots \\ 1 - x_{2^m- 1}\end{bmatrix}}_{W^{i,1}_{\hvi{}}\in \R_{\ge0}^{(2^m-1)\times m }} \hspace{-0.25cm}x\hspace{-0.05cm}\right) + \underbrace{\begin{bmatrix}  0\\-1\\ \vdots \\ -(2^m - 2)\end{bmatrix}}_{b^{i,2}_{\hvi{}}\in \R_{\le0}^{2^m-1}}\hspace{-0.1cm}\right)
    \end{align*}
    }}%
    
    with $W^{i,2}_{\hvi{}}\in \R_{\ge0}^{(2^m-1)\times (2^m-1)}$ a lower triangular matrix of ones, i.e.,
    {\scriptsize
    \begin{align*}
    W^{i,2}_{\hvi{}}:=\begin{bmatrix}
    1       &  0 &  \ldots &0      \\
    \vdots  &\ddots  & \ddots  &\vdots     \\
    \vdots   &  &  \ddots &0     \\
    1       &   \ldots     & \ldots & 1
    \end{bmatrix}.
    \end{align*}
    }%
    From that, we can see that the last term is indeed an MVNN $\MVNNi{x}=W^{i,3}_{\hvi{}} \phiu_{0,1}{}\left(W^{i,2}_{\hvi{}} \phiu_{0,1}{}\left(W^{i,1}_{\hvi{}} x\right) + b^{i,2}_{\hvi{}}\right)$ with four layers in total (i.e., two hidden layers) and respective dimensions $[m,2^m-1,2^m-1,1]$.
\end{enumerate}
\end{proof}

\begin{remark}[The role of the activation-function in \Cref{app:thm:Universality}]\label{rem:thmUniversality}
If the classical ReLU-activation function was used, we would not get universality, because of the non-negativity constraints of the weights:
A ReLU-network with non-negative weights can only express convex functions.\footnote{A ReLU-MVNN could only express convex functions, since ReLU is convex and non-decreasing, and linear combinations of convex non-decreasing functions are convex and non-decreasing if all the coefficients are non-negative, and \href{https://math.stackexchange.com/a/287725/688715}{compositions of convex non-decreasing functions are convex} (and non-decreasing). Thus any ReLU-MVNN with our weight constraints could never express any non-convex functions. However, agent's value functions are very often non-convex, i.e., they typically admit substitutabilities. E.g.~\Cref{app:example_mvnns} could not be expressed by a ReLU-MVNN.}
This means that it could only express complementarities, but no substitutabilities.
For approximate universality any monotonically increasing, bounded, non-constant activation-function would be sufficient (e.g., bReLU or sigmoid). However, sigmoid would not be a valid activation function for \Cref{app:thm:Universality}, because it would not allow \emph{exact} normalization for a non-zero function. (Firstly, for an activation function such as sigmoid that does not map negative number to zero, networks that fulfil the sign constraints for the weights and biases could be arbitrarily far away from being normalized. Secondly, for universality, one could only approximate the true function up to an arbitrarily small epsilon but never exactly for a normalized non-constant true function.) While these problems of sigmoid-activation-functions for \Cref{app:thm:Universality} might be rather theoretical, what is more important in practice is that non-piecewise-linear activations functions are not MILP-formalizable.
\end{remark}

\begin{remark}[Why bReLU?]\label{rem:WhybReLU}
Every monotonically increasing, non-constant, bounded activation function that maps negative numbers to 0 would be a valid choice, since it would fulfill \Cref{app_lem:normalized and monoton} (see \Cref{rem:leMonotoneANdNormalized}), \Cref{app:thm:Universality} (see \Cref{rem:thmUniversality}) and be MILP-formalizable (cp.
\Cref{theo:util_mip}). The complexity of the MILP grows with the number of kinks in the activation function. In this sense, bReLU is the simplest monotonically increasing, non-constant, bounded activation function that maps negative numbers to 0.\footnote{One can easily see that any bounded piece-wise linear non-constant activation function has to have at least two kinks, so at least 3 linear segments. bReLU has two kinks (one at zero and one at the cut-off $t$), so there is no other suitable activation function with less kinks.} In addition, bReLU shares many properties of ReLU and sigmoid, which are both popular choices for activation-functions.
\Cref{rem:thmUniversality} explains why ReLU or sigmoid would not be valid choices.
\end{remark}

\begin{corollary}\label{app:cor:interpolation}
For every dataset $\left( (x_j),\hvi{x_j}\right)_{j\in\{1\dots,q\}}$ there exist an MVNN of dimensions $[m,q,q,1]$ that perfectly fits the dataset, i.e. $\MVNNi{x_j}=\hvi{x_j} \  \forall j \in \{1\dots,q\}$.
\end{corollary}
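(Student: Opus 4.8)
The plan is to re-run the constructive argument from the proof of \Cref{app:thm:Universality}, but to telescope over only the $q$ data bundles instead of over all $2^m$ bundles, at the cost of one extra neuron. Throughout I assume the data is consistent with some function in $\Vmon$ (the standing assumption here, since $\hvi{}\in\Vmon$); without this, no monotone network could interpolate it. First I would relabel the data points so that their values are sorted, $0\le w_1\le w_2\le\dots\le w_q$ with $w_l:=\hvi{x_l}$, breaking ties arbitrarily.

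The heart of the proof is the identity, valid for every bundle $x\in\{x_1,\dots,x_q\}$,
\begin{align}
\hvi{x}=w_1+\sum_{l=1}^{q-1}(w_{l+1}-w_l)\,\phiu\!\left(\sum_{j=1}^{l}\phiu\!\left(\langle 1-x_j,x\rangle\right)-(l-1)\right),
\end{align}
together with the observation that the leading constant $w_1$ is cheap to produce. For the sum: exactly as in \Cref{app:thm:Universality}, $x\not\subseteq x_j\iff\langle 1-x_j,x\rangle\ge 1\iff\phiu(\langle 1-x_j,x\rangle)=1$, so the inner $\phiu$-expression clamps to $\1{\forall j\le l\,:\,x\not\subseteq x_j}$. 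To verify the identity at $x=x_m$, observe that this indicator equals $1$ for $l<j_0$ and $0$ for $l\ge j_0$, where $j_0$ is the smallest index with $x_m\subseteq x_{j_0}$ (it exists because $j_0\le m$ works); monotonicity together with $j_0\le m$ forces $w_{j_0}=w_m$, so the sum telescopes to $w_{j_0}-w_1=\hvi{x_m}-w_1$. This also disposes of ties automatically. For the constant: the map $x\mapsto\phiu(\langle 1,x\rangle)$ (with $1$ the all-ones vector) equals $1$ on every data point except possibly $(0,\dots,0)$, and if $(0,\dots,0)$ is among the $q$ bundles then $w_1=\hvi{(0,\dots,0)}=0$ by normalization; hence $w_1\,\phiu(\langle 1,x\rangle)=w_1$ on all data points, so one extra neuron computing $\phiu(\langle 1,x\rangle)$, scaled by $w_1\ge 0$ in the readout, supplies the base value.

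It then remains to read the network off the identity. I would put in the first hidden layer the $q-1$ comparison neurons $x\mapsto\phiu(\langle 1-x_j,x\rangle)$, $j=1,\dots,q-1$ (weight rows $1-x_j\in\{0,1\}^m$, zero biases), together with the ``base'' neuron $x\mapsto\phiu(\langle 1,x\rangle)$; in the second hidden layer the $q-1$ neurons $\phiu(\sum_{j\le l}(\cdot)_j-(l-1))$, $l=1,\dots,q-1$ (a row of $l$ ones, bias $-(l-1)\le 0$), plus one neuron passing the base neuron through (its input lies in $[0,1]$, so $\phiu$ acts as the identity there); and as readout the $1\times q$ vector $(w_2-w_1,\dots,w_q-w_{q-1},w_1)$. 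All weights are $\ge 0$ and all biases $\le 0$, so this is a legitimate MVNN of dimensions $[m,q,q,1]$ in the sense of \Cref{def:MVNN}, and by the identity it coincides with $\hvi{}$ on $x_1,\dots,x_q$. The degenerate case $q=1$ is included: there are then no telescoping neurons, and the single base neuron scaled by $w_1=\hvi{x_1}$ already does the job.

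The step requiring the most care is fitting everything into exactly $q$ neurons per layer while respecting the sign constraints: the data-point telescoping naturally yields $q-1$ ``jump'' neurons, and the remaining constant $w_1$ must be generated by just one further neuron \emph{without} a positive bias — hence the detour through the base neuron $\phiu(\langle 1,x\rangle)$ rather than a bias term, and the need to check the $q=1$ and ``$(0,\dots,0)$ was queried'' corner cases separately. The telescoping/tie bookkeeping is the only other subtlety, and it carries over almost verbatim from the proof of \Cref{app:thm:Universality}.
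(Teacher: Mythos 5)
Your construction is correct and is essentially the paper's own proof: the paper also re-runs the constructive argument from the universality theorem on the sorted data values, simply appending the empty bundle $x_0$ with value $0$ as an extra ``$0$th'' data point, whose first-layer neuron $\phiu_{0,1}(\langle 1-x_0,x\rangle)=\phiu_{0,1}(\langle 1,x\rangle)$ is exactly your separate ``base'' neuron and whose $l=1$ telescoping term with coefficient $w_1-w_0=w_1$ is exactly your passthrough neuron. So apart from this bookkeeping difference (and your slightly more explicit treatment of ties and corner cases), the two proofs coincide and both yield an interpolating MVNN of dimensions $[m,q,q,1]$.
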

\begin{proof}
This interpolating MVNN can be explicitly constructed such as the MVNN in the \hyperref[appproof:thm:Universality]{proof} of \Cref{app:thm:Universality} by replacing the collection of all bundles and their values by the dataset $\left( (x_j),\hvi{x_j}\right)_{j\in\{0\dots,q\}}$, where $x_0$ is the empty bundle and $\hvi{x_0}=0$, i.e., \Cref{eq:last_term} has to be replaced by:
{\small
    \begin{align}
    \MVNNi{x}=\sum_{l=1}^{q}\hspace{-0.05cm} \left(w_{l}-w_{l-1}\right)		  \phiu_{0,1}{}\hspace{-0.05cm}\left(\sum_{j=0}^{l-1}	\phiu_{0,1}{}\left( \left<1-x_{j},x\right>\right)-(l-1)\right)\label{eq:last_term_cor:interpolation},
    \end{align}}%
    where we again assume that the bundles are sorted by their values and for $0< l \le j \le q$ we define:
    \begin{align}
    w_j:=\hvi{x_j}\, \leq\, w_l:=\hvi{x_l}.
    \end{align}

\end{proof}

\begin{remark}[Number of neurons needed]
$[m,2^m-1,2^m-1,1]$ is just an upper bound for the size of the network to learn the true value function exactly on $\X$. For NNs on a continuous domain no finite upper bound exist, but they are still widely used in practice. Most value functions could be expressed with significantly fewer nodes.

More importantly, in an ICA (e.g., MLCA) one doesn't need to learn the value function everywhere perfectly but only approximately where the precision mainly matters for relevant bundles. \Cref{app:cor:interpolation} states that for $q$ queries, there is an MVNN with size $[m,q,q,1]$ that can perfectly fit through these $q$ queries. So we have a mathematical guarantee that the number of nodes only grows \emph{linearly} with the number of queries in the worst case. This is the much more relevant growth rate in practice, because we will typically not be able to do $2^m-1$ queries anyway and thus the MILP will also never be exponentially large.
Thus, for MVNNs with size $[m,100,100,1]$, we have a mathematical guarantee that we can always get a perfect monotonic normalized interpolation of all available 100 queries even in the worst cases.
However, our experiments show that for data coming from realistic value functions, significantly smaller architectures are sufficient to (almost) perfectly interpolate the training data most of the time and even often almost perfectly approximate the true value function on $\X$ (see \Cref{tab:full_pred_performance_table_appendix,tab:pred_perf_plots_appendix}) leading to better than state-of-the-art performance in MLCA (see \Cref{tab:efficiency_loss_mlca,tab:efficiency_loss_mlca_appendix}). This allows very fast MILP-solving times in practice (see \Cref{tab:efficiency_loss_mlca_appendix}, where the average run-times of full auctions are given in hours or \Cref{fig:mip_runtime_analysis}, where the average run-times per MILP are given in seconds).
\end{remark}

\subsection{Proof of \Cref{lem:util_layer}}\label{subsec:appendix_proofLemmaMIP}
In this section, we proove \Cref{lem:util_layer} from \Cref{sec:Utility Networks}.
\begin{proof}
For all $j \in \{1,\ldots,d^{i,k}\}$ we distinguish the following three cases:
\begin{align}
     \textrm{\textbf{Case 1:~}}& o^{i,k}_j \in (-\infty, 0] \Rightarrow \phiu(o^{i,k})_j = 0 \notag\\
    & \text{\eqref{eq:(i)}} \And \text{\eqref{eq:(ii)}} \Rightarrow y^{i,k}_j = 1 \Rightarrow \eta^{i, k}_j = 0 \notag \\
    & \text{\eqref{eq:(iii)}} \And \text{\eqref{eq:Lemma_cutoff_dependent_ct}} \Rightarrow \mu^{i,k}_j = 0 \Rightarrow z^{i, k}_j = \eta^{i, k}_j = 0 \notag \\
     \textrm{\textbf{Case 2:~}}& o^{i,k}_j \in (0, 1] \Rightarrow \phiu(o^{i,k})_j = o^{i,k}_j \notag\\
    & \text{\eqref{eq:(i)}} \And \text{\eqref{eq:(ii)}} \Rightarrow y^{i,k}_j = 0 \Rightarrow \eta^{i, k}_j = o^{i,k}_j \notag \\
    & \text{\eqref{eq:(iii)}} \And \text{\eqref{eq:Lemma_cutoff_dependent_ct}} \Rightarrow \mu^{i,k}_j = 0 \Rightarrow z^{i, k}_j = \eta^{i, k}_j = o^{i,k}_j \notag \\
     \textrm{\textbf{Case 3:~}}& o^{i,k}_j \in (1, +\infty] \Rightarrow \phiu(o^{i,k})_j = 1 \notag\\
    & \text{\eqref{eq:(i)}} \And \text{\eqref{eq:(ii)}} \Rightarrow y^{i,k}_j = 0 \Rightarrow \eta^{i, k}_j = o^{i,k} \notag \\
    & \text{\eqref{eq:(iii)}} \And \text{\eqref{eq:Lemma_cutoff_dependent_ct}} \Rightarrow \mu^{i,k}_j = 1 \Rightarrow z^{i, k}_j = 1 \notag
\end{align}
Thus, in total $z^{i, k}=\phiu(o^{i,k})$.
\end{proof}

\subsection{Interval Arithmetic Bounds Tightening for MVNNs}\label{app:sec_util:ia_bounds}

In this section, we consider a bReLU $\phiu_{0,t}$ with cutoff $t>0$ and mark it in red. This helps when implementing the MILP, i.e. it particularly shows more clearly where the cutoff $\cutoff$ propagates in the respective equations. First, we recall \Cref{lem:normalized and monoton} for a general cutoff $\cutoff$, where for the sake of readability we remove the bidder index $i\in N$ from all variables.

\setcounter{lemma}{1}
\begin{lemma}\label{lem:util_layer_appendix_abbreviated}
    Let $k\in \{1,\ldots,K-1\}$ and let the pre-activated output of the $k$\textsuperscript{th} layer be given as $W^{k}z^{k-1} + b^{k}$ with $W^{k} \in \mathbb{R}^{d^{k} \times d^{k-1}}, b^{k} \in \mathbb{R}^{d^{k}}$. Then the output of the $k$\textsuperscript{th} layer  $z^{k} := \phiu_{0,t}(W^{k}z^{k-1} + b^{k}) = \min(\cutoff, \max(0, (W^{k}z^{k-1} + b^{k})) = -\max(-\cutoff, - \eta^{k})$, with $\eta^{k}:=\max(0,W^{k}z^{k-1} + b^{k} )$ can be equivalently expressed by the following linear constraints:
    {\small
    \begin{align}
        \hspace{-0.3cm} W^{k}z^{k-1}+b^{k}&\le \eta^{k}\le W^{k}z^{k-1}+b^{k} + y^{k}L_1^{k}\label{eq:MIP_lemma_1}\\
        0&\le \eta^{k} \le (1-y^{k})L_2^{k}\label{eq:MIP_lemma_2}\\
         \eta^{k} -  \mu^{k}L_3^{k}&\le z^{k}\le \eta^{k}\label{eq:MIP_lemma_3}\\
         \cutoff -  (1-\mu^{k})L_4^{k} &\le z^{k} \le \cutoff\label{eq:MIP_lemma_4}
    \end{align}
    }%
    where $y^{k}\in \{0,1\}^{d^k}$, $\mu^{k}\in \{0,1\}^{d^k}$, and $L_1^{k}, L_2^{k}, L_3^{k}, L_4^{k} \in \Rp$ are large enough constants for the respective \emph {big-M} constraints.
\end{lemma}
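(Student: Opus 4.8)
The plan is to reproduce, \emph{mutatis mutandis}, the proof of \Cref{lem:util_layer}: constraints \eqref{eq:MIP_lemma_1}--\eqref{eq:MIP_lemma_4} differ from those in \Cref{lem:util_layer} only in that the two occurrences of the constant $1$ in the cutoff constraint have been replaced by $t$, so the same case analysis goes through with the roles of $1$ and $t$ interchanged. Concretely, I would fix the (suppressed) bidder, the hidden layer $k$, and a single neuron index $j\in\{1,\dots,d^k\}$, write $o^k_j:=(W^{k}z^{k-1}+b^{k})_j$ for the pre-activation, and reduce the claim to a one-dimensional statement per coordinate: any $(\eta^k_j,z^k_j,y^k_j,\mu^k_j)$ satisfying the coordinate-$j$ components of \eqref{eq:MIP_lemma_1}--\eqref{eq:MIP_lemma_4} must have $\eta^k_j=\max(0,o^k_j)$ and $z^k_j=\phiu_{0,t}(o^k_j)=\min(t,\max(0,o^k_j))$, and conversely these target values are realizable by a suitable assignment of the two binaries. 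The two kinks of $\phiu_{0,t}$ at $0$ and at $t$ split the real line into three regimes for $o^k_j$, and I would dispatch each in turn.

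In the regime $o^k_j\le 0$ (target output $0$): \eqref{eq:MIP_lemma_1}--\eqref{eq:MIP_lemma_2} force $y^k_j=1$ (taking $y^k_j=0$ would demand $0\le\eta^k_j\le o^k_j\le 0$, infeasible unless $o^k_j=0$, in which case both values still give $\eta^k_j=0$), whence \eqref{eq:MIP_lemma_2} pins $\eta^k_j=0$; since $t>0$, \eqref{eq:MIP_lemma_4} rules out $\mu^k_j=1$, so $\mu^k_j=0$ and \eqref{eq:MIP_lemma_3} gives $z^k_j=\eta^k_j=0$. In the regime $0<o^k_j\le t$ (target output $o^k_j$): \eqref{eq:MIP_lemma_1}--\eqref{eq:MIP_lemma_2} force $y^k_j=0$ and hence $\eta^k_j=o^k_j$; \eqref{eq:MIP_lemma_3}--\eqref{eq:MIP_lemma_4} then force $\mu^k_j=0$ (for $\mu^k_j=1$ would require $z^k_j\ge t\ge o^k_j$ and simultaneously $z^k_j\le\eta^k_j=o^k_j$, possible only at the boundary $o^k_j=t$, where it again yields $z^k_j=t=o^k_j$), so $z^k_j=\eta^k_j=o^k_j$. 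In the regime $o^k_j>t$ (target output $t$): again $y^k_j=0$ and $\eta^k_j=o^k_j$, while $\mu^k_j=0$ is impossible since \eqref{eq:MIP_lemma_3} would give $z^k_j\ge o^k_j>t$, contradicting $z^k_j\le t$ from \eqref{eq:MIP_lemma_4}; hence $\mu^k_j=1$ and \eqref{eq:MIP_lemma_4} pins $z^k_j=t$. In all three regimes the identified $(y^k_j,\mu^k_j)$ are, conversely, feasible and produce exactly $(\eta^k_j,z^k_j)=(\max(0,o^k_j),\phiu_{0,t}(o^k_j))$, establishing the equivalence coordinatewise and hence for the whole layer.

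The one point I expect to require care is the meaning of ``large enough'' for the big-$M$ constants $L_1^k,\dots,L_4^k$: each must dominate the largest one-sided slack the associated constraint can attain over the admissible inputs, so that in the regimes where a constraint should be inactive it is genuinely non-binding (e.g., $L_1^k\ge\max_j|o^k_j|$, $L_3^k\ge\max_j o^k_j-t$, $L_4^k\ge t$, etc.). I would justify that finite such constants exist by noting that in the setting of \Cref{theo:util_mip} the inputs $z^{k-1}$ to this layer are bounded — they are either a bundle $z^0=a_i\in\{0,1\}^m$ or hidden activations lying in $[0,t]^{d^{k-1}}$ because $\phiu_{0,t}$ takes values in $[0,t]$ — so $o^k_j$ ranges over a bounded interval computable from $W^k$ and $b^k$, from which explicit admissible $L_r^k$ can be read off. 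A secondary, harmless subtlety is the non-uniqueness of the binaries at the breakpoints $o^k_j\in\{0,t\}$: since equivalence is asserted only at the level of $(\eta^k,z^k)$, every feasible binary assignment at a breakpoint still yields the correct output, as checked in the case analysis. Beyond this bookkeeping there is no hard step — the argument is the same three-way case split as in \Cref{lem:util_layer}.
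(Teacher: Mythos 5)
Your proposal is correct and follows essentially the same argument as the paper: the paper proves the $t=1$ case (Lemma~\ref{lem:util_layer}) by exactly this coordinatewise three-case split over the regimes $o^k_j\le 0$, $0<o^k_j\le 1$, and $o^k_j>1$, forcing the binaries $y^k_j,\mu^k_j$ and hence $\eta^k_j,z^k_j$ in each case, and handles the general cutoff $t$ by the same substitution of $t$ for $1$ that you describe. Your additional remarks on the breakpoint non-uniqueness of the binaries and on explicit admissible big-$M$ constants (which the paper treats separately via interval arithmetic in Appendix~C.5) are harmless refinements, not a different route.
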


\subsubsection{Interval Arithmetic (IA) for Single Neurons}\label{app:subsubsec_Interval Arithmetic (IA) for Single Neurons}

For any  $k\in \{1,\ldots,K-1\}$ let $\Wpre\ge0$ and $\bpre\le0$ be the weights of its affine linear transformation. Furthermore, let $\zpre$ be the output of the previous layer.
and, let 
\begin{align}
\zaft=\phiu_{0,t}\left(\Wpre\zpre+\bpre\right)
\end{align}
be the output of the current layer.

Given already computed IA bounds for $\zpre_i$, i.e., $[L(\zpre_l), U(\zpre_l)]\subseteq[0,\cutoff]$, we can then calculate the IA bounds $L(\zaft_l), U(\zaft_l)$ for $\zaft_l$ such that 
\begin{align}
\zaft \in \prod_{l=1}^{\daft}[L(\zaft_l), U(\zaft_l)]
\end{align}
as follows:

\begin{enumerate}[leftmargin=*,topsep=5pt,partopsep=0pt, parsep=0pt]
    \item \textbf{Upper bound (pre-activated):}
    {\small
    \begin{align}
    U^{\text{pre}}(\zaft)&=\\
    &=\hspace{-1.5cm}\max_{\zpre \in \prod_{l=1}^{\dpre}[L(\zpre_l), U(\zpre_l)]} \left \{ \Wpre\zpre + \bpre \right \}\\
     &= \left(\sum_{l} \Wpre_{j,l}\cdot U(\zpre_l)\right)_{j=1}^{\daft} +  \bpre
    \end{align}
    }
      \item \textbf{Upper bound:}
      {\small
      \begin{align}
      U(\zaft)=\phiu_{0,t}\left(U^{\text{pre}}(\zaft)\right)
      \end{align}
      }
      \item \textbf{Lower bound (pre-activated):}
      {\small
    \begin{align}
    L^{\text{pre}}(\zaft)&=\\
    &=\hspace{-1.5cm}\min_{\zpre \in \prod_{l=1}^{\dpre}[L(\zpre_l), U(\zpre_l)]} \left \{ \Wpre\zpre + \bpre \right \}\\
    &=
      \left(\sum_{l} \Wpre_{j,l}\cdot L(\zpre_l)\right)_{j=1}^{\daft} +  \bpre
      \end{align}
      }
      Note, that the preactivated lower bound $L^\text{pre}(\zaft)$ is always non-positive, i.e. $L^\text{pre}(\zaft)\le0$. This can be seen as follows: start with a $L(z^1)=\boldsymbol{0}_{d^1}$ 
      . Then since all biases are non-positive $L^{\text{pre}}(z^2)\le0 \implies L(z^2)=\boldsymbol{0}_{d^2}$, and so forth. Thus, we get for the lower bounds:
      \item \textbf{Lower bound:}
      {\small
      \begin{align}
      L(\zaft)=\phiu_{0,t}\left(L^{\text{pre}}(\zaft)\right)=\boldsymbol{0}_{d^k}
      \end{align}
      }
\end{enumerate}


\subsubsection{Removing Constraints with IA}\label{app:subsubsec_Removing constraints with IA}
In the following cases, we can remove the constraints and corresponding variables in \Cref{lem:util_layer_appendix_abbreviated}:
\begin{enumerate}[leftmargin=*,topsep=5pt,partopsep=0pt, parsep=0pt]
    \item \textbf{Case:}\\
    If $U^{\text{pre}}(\zaft_l)\le0 \implies \zaft_l=0$ and one can remove the $l$\textsuperscript{th} components from all constraints \eqref{eq:MIP_lemma_1} -- \eqref{eq:MIP_lemma_4} and the corresponding variables for layer $k$.
    \item \textbf{Case:}\\
    If $L^{\text{pre}}(\zaft_l)\in [0,\cutoff)$ and $U^{\text{pre}}(\zaft_l)\in(0,\cutoff] \implies \zaft_l=\left(\Waft\zpre + \baft\right)_l$ and one and can remove the $l$\textsuperscript{th} components from all constraints \eqref{eq:MIP_lemma_1} -- \eqref{eq:MIP_lemma_4} and the corresponding variables for layer $k$.
    \item \textbf{Case:}\\
    If $L^{\text{pre}}(\zaft_l)=0$ and $U^{\text{pre}}(\zaft_l)>\cutoff \implies \eta^k_l=\left(\Waft\zpre + \baft\right)_l$ and one and can remove the $l$\textsuperscript{th} components from all constraints \eqref{eq:MIP_lemma_1} -- \eqref{eq:MIP_lemma_2} and the corresponding variables for layer $k$.
    \item \textbf{Case:}\\
    If $L^{\text{pre}}(\zaft_l)<0$ and $U^{\text{pre}}(\zaft_l)\in(0,\cutoff] \implies \zaft_l=\eta^k_l$ and one and can remove the $l$\textsuperscript{th} components from all constraints \eqref{eq:MIP_lemma_3} -- \eqref{eq:MIP_lemma_4} and the corresponding variables for layer $k$.
    \footnote{In all experiments presented in this paper, we have not taken advantage of Case 3 and 4. We believe that we can further improve the computational performance of MVNN by incorporating these cases too.
    }
     \end{enumerate} 

\begin{fact}
For a plain ReLU NN, the above calculated IA bounds are not tight and calculating tighter bounds in a computationally efficient manner is very challenging and an open research question.
However, for MVNNs, the IA bounds are always \emph{perfectly tight}, because of their encoded monotonicity property. This is a big advantage of MVNN-based MILPs compared to plain (ReLU) NN-based MILPs.
\end{fact}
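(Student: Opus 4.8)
The statement has two halves. Its second half---that for a plain ReLU network the interval-arithmetic (IA) bounds of \Cref{app:subsubsec_Interval Arithmetic (IA) for Single Neurons} can be strictly loose and that computing tight per-neuron bounds is hard in general---is an established observation \cite{tjeng2018evaluating}, so I would only recall a small counterexample (a hidden layer with two perfectly correlated neurons feeding a difference $z_1-z_2$, where IA, treating the box coordinatewise, over-estimates the spread) and cite the hardness result, rather than prove it. The mathematical content to establish is the first half: for an MVNN the IA bounds are \emph{perfectly tight}, meaning that for every hidden neuron the computed lower and upper bounds are exactly the values that neuron attains on the network inputs $(0,\dots,0)$ and $(1,\dots,1)$, and that every value in the resulting interval is attained by some feasible input; equivalently, the IA box is the smallest box containing the reachable set of that layer.

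The plan is a short induction on the layer index $k$. Write $\zeta^{k}(x)$ for the post-activation output of layer $k$ when the MVNN is evaluated at input $x\in[0,1]^{m}$ (the LP-relaxed cube on which the big-$M$ constants must be valid; $\X\subseteq[0,1]^m$). I claim that for every $k$: (a) $\zeta^{k}$ is monotone non-decreasing in each coordinate of $x$; (b) $U(z^{k})=\zeta^{k}((1,\dots,1))$; and (c) $L(z^{k})=\zeta^{k}((0,\dots,0))=\boldsymbol{0}_{d^{k}}$. The base case $k=0$ is immediate since $\zeta^{0}(x)=x$ and the input box is $[0,1]^{m}$. For the step: (a) follows because $W^{k}\ge0$ and $\phiu_{0,t}$ is non-decreasing, so monotonicity carries through the layer (a linear-combination/composition argument as in \Cref{rem:thmUniversality}). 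For (b), non-negativity of $W^{k}$ makes the affine pre-activation coordinatewise non-decreasing, so its maximum over the box $\prod_l[L(z^{k-1}_l),U(z^{k-1}_l)]$ is attained at the upper corner, which by the inductive hypothesis equals $\zeta^{k-1}((1,\dots,1))$; hence $U^{\mathrm{pre}}(z^{k})=W^{k}\zeta^{k-1}((1,\dots,1))+b^{k}$ and applying $\phiu_{0,t}$ gives $U(z^{k})=\zeta^{k}((1,\dots,1))$. For (c), the minimum over the box is attained at the lower corner $L(z^{k-1})=\boldsymbol{0}$ by hypothesis, so $L^{\mathrm{pre}}(z^{k})=b^{k}\le0$, and since $\phiu_{0,t}$ sends non-positive inputs to $0$ we get $L(z^{k})=\phiu_{0,t}(b^{k})=\boldsymbol{0}=\zeta^{k}((0,\dots,0))$---here the non-positivity of the biases is exactly what collapses the lower bound cleanly, mirroring the computation already done in the appendix.

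Tightness then follows: by (a), the range of the $l$-th component of $\zeta^{k}$ over the feasible domain is the interval $[\zeta^{k}_l((0,\dots,0)),\,\zeta^{k}_l((1,\dots,1))]=[L(z^{k}_l),U(z^{k}_l)]$, and both endpoints are achieved at the feasible inputs $(0,\dots,0),(1,\dots,1)\in\X$, so no valid per-neuron bound can be tighter. I expect the only care-points to be: (i) stating everything over $[0,1]^{m}$ rather than just $\{0,1\}^{m}$, so the bounds are valid for the MILP relaxation; (ii) keeping (a)--(c) coupled through the induction, since (b) and (c) each invoke the corresponding corner identity from the previous layer; and (iii) being explicit that ``perfectly tight'' is a per-coordinate claim---it does \emph{not} assert that the joint reachable set fills the box. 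The contrast with ReLU networks, where negative weights simultaneously break the corner-maximization argument and the coordinatewise decoupling, is then the genuinely ``open'' part and is handled by the counterexample plus the cited hardness statement rather than by a theorem.
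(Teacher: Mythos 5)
Your proposal is correct and follows essentially the same route as the paper: the paper's Appendix C.5.1 derivation already shows that, because $W^{k}\ge 0$, the box maximum is attained at the upper corner (so the IA upper bound propagates exactly the network's value at $(1,\dots,1)$) and that non-positive biases force $L^{\text{pre}}\le 0$ and hence $L(z^{k})=\boldsymbol{0}$, which is the value at $(0,\dots,0)$; your layer-wise induction combined with coordinatewise monotonicity merely makes this sketch explicit, and your treatment of the ReLU half (counterexample plus citation) matches the paper's, which likewise only asserts it with a reference.
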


\subsubsection{Interval Arithmetic for the Four Big-M Constraints}\label{app:subsubsec_Interval Arithmetic for the 4 Big-M constraints}

Recall, that $L(\zaft)=\boldsymbol{0}_{\daft}$ for all $k\in \{1,\ldots,K-1\}$. Let $m=:d^0$ denote the number of items to be allocated. 

We present standard IA bounds where one starts for $z^0\in \{0,1\}^m$ with $L(z^0)=\boldsymbol{0}_m$ and $U(z^0)=\boldsymbol{1}_m$ (per definition) and iteratively for $k \in \{1,\ldots,K-1\}$ \emph{propagates} through the network for given bounds on $\zpre$, i.e., $\zpre\in \mathcal{Z}^{k}:=\prod_{l=1}^{\dpre}[0, U(\zpre_l)]\subset[0,\cutoff]^{\dpre}$.

\begin{enumerate}[leftmargin=*,topsep=5pt,partopsep=0pt, parsep=0pt]
\item $L^k_1$ only appears when $y^k = 1$, which implies that $\eta^k = 0$ and $\Wpre\zpre+\bpre \le 0 $. Thus, \Cref{eq:MIP_lemma_1} implies that $L^k_1 + \min_{\zpre\in \mathcal{Z}^{k-1}} \left \{ \Wpre\zpre+\bpre \right \}  \ge 0$ and we get

{\small
\begin{align}
L^k_1 =& \max \left \{0, - \min_{\zpre\in \mathcal{Z}^{k-1}} \left \{ \Wpre\zpre+\bpre  \right \}  \right \}\\
=&\max \left \{  0,  \left(\sum_{l: \Wpre_{j,l} < 0} |\Wpre_{j,l}|\cdot U(\zpre_l) \right)_{j=1}^{\daft} \hspace{-0.6cm}- \bpre \right \}\\
=&\max \left \{  0, - \bpre \right \}
\end{align}
}
where the last equality follows since per definition any MVNN only has positive weights.

\item $L^k_2$ only appears when $y^k = 0$ which implies that $\eta^k = \Wpre\zpre+\bpre \ge  0 $. Thus, \Cref{eq:MIP_lemma_2} implies that $L^k_2 \ge \eta^k = \Wpre\zpre+\bpre$ and we get

{\small
    \begin{align}
    L^k_2 =& \max_{\zpre \in \mathcal{Z}^{k-1}} \left \{ \Wpre\zpre+\bpre \right \}\\
    =&\max \left \{ 0, \left(\sum_{l: \Wpre_{j,l} > 0} \Wpre_{j,l}\cdot U(\zpre_l)\right)_{j=1}^{\daft} + \bpre \right \}\\
    =&\max \left \{ 0, \left(\sum_{l} \Wpre_{j,l} \cdot U(\zpre_l)\right)_{j=1}^{\daft} + \bpre  \right \},
    \end{align}
}%
where the last equality follows since per definition any MVNN only has positive weights.

\item $L^k_3$ only appears when $ \mu^k  = 1$ which implies that $ \zaft = \cutoff$ and $\eta^k = \Wpre \zpre + \bpre  \ge \cutoff $. Thus, \Cref{eq:MIP_lemma_3} implies that $\cutoff = \zaft \ge \eta^k - L^k_3 \iff L^k_3 \ge \eta^k - \cutoff \iff L^k_3 \ge \Wpre \zpre + \bpre - \cutoff$ and we get
{\small
\begin{align}
L^k_3&= \max_{\zpre \in \mathcal{Z}^{k-1}} \left \{ \Wpre \zpre + \bpre - \cutoff \right \}\\
&=\max \left \{ 0, L^k_2 - \cutoff  \right \}
\end{align}
}

\item For $L^k_4$ only appears when $ \mu^k  = 0$. In this case, we get from  \Cref{eq:MIP_lemma_4} that $\cutoff - L^k_4  \le  \zaft$ and we get
{\small
\begin{align}
L^k_4 = \max_{\zaft \in \mathcal{Z}^k}\{ \cutoff - \zaft\} = \cutoff
\end{align}
}
\end{enumerate}

\subsection{Implementation Details - MVNNs}\label{subsec:appendix_Implementation Details_MVNNs}
Recall that the three key building blocks of MVNNs are the bReLU $\phiu{}$, the (element-wise) non-negative weights $W^i$, and the (element-wise) non-positive biases $b^i$. The bReLU $\phiu{}$ can be straightforwardly implemented as a custom activation function in \textsc{PyTorch} 1.8.1. However, the constraints on the weights and biases to be element-wise positive and negative respectively, can be implemented in several different ways. We first describe the method we call \textsc{MVNN-ReLU-Projected}, which we have found experimentally to perform best. Finally, we describe the other options we have explored since they may be of independent interest to some readers.

\paragraph{\textsc{MVNN-ReLU-Projected}} In this implementation, we project $W^i$ and $b^i$ prior to every forward pass to be non-negative and non-positive, respectively, using ReLU $z\mapsto\pm\max(0,\pm z)$. Thus, (in contrast to some of the other methods we explored), gradient descent (GD) updates are performed on the already transformed weights and biases and we \emph{do not differentiate} through the ReLUs. After the last GD step, we apply \emph{post-processing} and project $W^i$ and $b^i$ again via $z\mapsto\pm\max(0,\pm z)$.

\paragraph{\textsc{MVNN-Abs} and \textsc{MVNN-ReLU}} For these methods, we add an additional node to the computational graph and element-wisely transform the weights $W^i$ via the absolute value $z\mapsto |z|$ or ReLU $z\mapsto max(0, z)$ to ensure non-negativity. For the biases $b^i$ we analogously use $z\mapsto -|z|$ or $z\mapsto -max(0, -z)$ to ensure non-positivity. Importantly, we \emph{differentiate} through $z\mapsto\pm|z|$ or $z\mapsto\pm\max(0,\pm z)$ in every gradient descent (GD) step. We refer to these implementation variants as \textsc{MVNN-Abs} and \textsc{MVNN-ReLU}. After the last GD step we apply \emph{post-processing} and project $W^i$ and $b^i$ again via $z\mapsto\pm|z|$ and $z\mapsto\pm\max(0,\pm z)$.

\paragraph{\textsc{MVNN-Abs-Projected}} We do not consider a \textsc{MVNN-Abs-Projected} implementation, since one can prove that in the classic GD algorithm this version is mathematically equivalent to \textsc{MVNN-Abs}.

\section{Details Prediction Performance}\label{sec:appendix_pred_perf}

\subsection{Data Generation - Prediction Performance}\label{subec:appendix_pred_perf_data_gen} 
\paragraph{Train/Val/Test-Split}
For every domain, bidder type and considered amount of training data $T$ we create the data in the following way:
For each seed we let SATS create a value function and uniformly at random select $T$ different bundles from the bidder-specific-feasible (using the SATS method \textit{get\_uniform\_random\_bids}) bundle space $\mathcal{X}$ (training set).
We measure the metrics of a method trained on the training data of a seed based on randomly selected different bundles from the same bundle space~$\mathcal{X}$ (approx.\ $52\,000$ for the HPO seeds (validation set) and approx.\ $210\,000$ for the test seeds (test set)). 
We use HPO seeds 0-20 only for the HPO (validation sets) and test seeds 21-50 only for reporting the values in this paper (test sets).



\subsection{HPO - Prediction Performance}\label{subec:appendix_pred_perf_hpo}
Table~\ref{tab:hpo_space} shows the hyperparameter ranges from our HPO. Cosine Annealing~\cite{Loshchilov2017SGDRSG} was used to decay the learning to zero within the number of training epochs. Occasionally, the neural networks (both MVNNs and plain NNs) diverged during training (determined if \emph{the Pearson correlation coefficient} $r_{xy}<0.9$ on the train set) hence we added $20$ retries to the training to make sure that we obtain a valid network. 

Experimentally, we found that MVNNs need the training data to be normalized to be within [0, 1], as a result of the bReLU activation functions bounded output. On the other hand, plain NNs worked better if the data was normalized to be within [0, 500] for all considered SATS domains. 

The computational budget of the HPO was 12 hours. 
All experiments were conducted on a compute cluster running Debian GNU/Linux 10 (buster) with Intel Xeon E5-2650 v4 2.20GHz processors with 24 cores and 128GB RAM and Intel E5 v2 2.80GHz processors with 20 cores and 128GB RAM and Python 3.7.10.

\begin{table}[t!]
\resizebox{1\linewidth}{!}{
\begin{tabular}{@{}llll@{}}
\toprule
\textbf{Hyperparameters}    & \textbf{Type}        & \textbf{Range}             & \textbf{Scale} \\ \midrule
Optimizer          & Categorical & {[}Adam, SGD{]}   &       \\
Batch Size         & Integer     & {[}1, 4{]}        &       \\
Num. Hidden Layers & Integer     & {[}1, 3{]}        &       \\
Total Num. Neurons & Integer     & {[}1, 64{]}       &       \\
L2                 & Float       & {[}1e-10, 1e-6{]} & log   \\
Learning Rate      & Float       & {[}1e-4, 1e-2{]}  & log   \\
Epochs             & Integer     & {[}50, 400{]}     &       \\
Loss Function      & Categorical & {[}MSE, MAE{]}    &       \\ \bottomrule
\end{tabular}}
\vskip -0.2cm
\caption{HPO space used in SMAC.}
\label{tab:hpo_space}
\end{table}

\begin{table*}[htbp]
	\robustify\bfseries
	\centering
	\begin{sc}
	\resizebox{1\textwidth}{!}{
    \begin{tabular}{crrrrcccccc}
    \toprule
         &  & & & &  \multicolumn{2}{c}{\textbf{Efficiency Loss in \%\,\,\textdownarrow}} & \multicolumn{2}{c}{\textbf{Revenue in \%\,\,\textuparrow}}  &\multicolumn{2}{c}{\textbf{Runtime in Hrs.}}   \\
        \cmidrule(l{2pt}r{2pt}){6-7}
        \cmidrule(l{2pt}r{2pt}){8-9}
        \cmidrule(l{2pt}r{2pt}){10-11}
    \textbf{Domain} & \textbf{T}&$\boldsymbol{\Qinit}$&$\boldsymbol{\Qround}$&$\boldsymbol{\Qmax}$ &    \multicolumn{1}{c}{MVNN}  &   \multicolumn{1}{c}{NN}    & \multicolumn{1}{c}{MVNN} & \multicolumn{1}{c}{NN}    & \multicolumn{1}{c}{MVNN} & \multicolumn{1}{c}{NN}\\
        \cmidrule(l{2pt}r{2pt}){1-5}
        \cmidrule(l{2pt}r{2pt}){6-7}
        \cmidrule(l{2pt}r{2pt}){8-9}
        \cmidrule(l{2pt}r{2pt}){10-11}
    GSVM & 10& 40 & 4&  100 & 00.00 $\pm$\scriptsize\, 0.00 & 00.01  $\pm$\scriptsize\, 0.02 &  60.11 $\pm$\scriptsize\, 3.86 & 58.59 $\pm$\scriptsize\, 4.35 & 00.14 $\pm$\scriptsize\, 0.03 & 00.15 $\pm$\scriptsize\, 0.05\\
         &  20& 40& 4& 100& 00.00 $\pm$\scriptsize\, 0.00 & $\llap{*}$00.00 $\pm$\scriptsize\, 0.00 &        59.07 $\pm$\scriptsize\, 3.84 & 55.71 $\pm$\scriptsize\, 4.46  &         00.08 $\pm$\scriptsize\, 0.01 &  00.06 $\pm$\scriptsize\, 0.01 \\
         &  50& 40& 4& 100& $\llap{*}$00.00 $\pm$\scriptsize\, 0.00 & 00.00 $\pm$\scriptsize\, 0.00 &        52.77 $\pm$\scriptsize\, 5.04 & 56.09 $\pm$\scriptsize\, 4.57  &         00.06 $\pm$\scriptsize\, 0.01 &  00.07 $\pm$\scriptsize\, 0.01 \\
         \cmidrule(l{2pt}r{2pt}){1-5}
        \cmidrule(l{2pt}r{2pt}){6-7}
        \cmidrule(l{2pt}r{2pt}){8-9}
        \cmidrule(l{2pt}r{2pt}){10-11}
    LSVM & 10& 40& 4& 100&  01.63 $\pm$\scriptsize\, 0.75 &   03.19 $\pm$\scriptsize\, 01.59 &   70.14 $\pm$\scriptsize\, 4.23 &   65.23 $\pm$\scriptsize\, 3.83 &00.93 $\pm$\scriptsize\, 0.36 &  01.16 $\pm$\scriptsize\, 0.23 \\
         & 50& 40& 4& 100&   $\llap{*}$00.70 $\pm$\scriptsize\, 0.40 & 03.11 $\pm$\scriptsize\, 01.52 &    70.70 $\pm$\scriptsize\, 4.63  & 64.07 $\pm$\scriptsize\, 4.24 & 00.54 $\pm$\scriptsize\, 0.15 &  01.28 $\pm$\scriptsize\, 0.32 \\
         &  100& 40& 4& 100&   01.27 $\pm$\scriptsize\, 0.55 &  $\llap{*}$02.91 $\pm$\scriptsize\, 01.44 &   71.09 $\pm$\scriptsize\, 4.35 & 65.10 $\pm$\scriptsize\, 3.90 & 00.52 $\pm$\scriptsize\, 0.12 &  00.44 $\pm$\scriptsize\, 0.10\\
         \cmidrule(l{2pt}r{2pt}){1-5}
        \cmidrule(l{2pt}r{2pt}){6-7}
        \cmidrule(l{2pt}r{2pt}){8-9}
        \cmidrule(l{2pt}r{2pt}){10-11}
    SRVM &  10& 40& 4& 100& 00.67 $\pm$\scriptsize\, 0.10 & 01.27 $\pm$\scriptsize\, 0.23   & 50.86 $\pm$\scriptsize\, 2.31     &  46.93 $\pm$\scriptsize\, 2.07 & 01.86 $\pm$\scriptsize\, 0.10& 01.59 $\pm$\scriptsize\, 0.09\\
         &  50& 40& 4& 100&  00.49 $\pm$\scriptsize\, 0.07 & 01.15 $\pm$\scriptsize\, 0.31 &   50.70 $\pm$\scriptsize\, 2.44   & 45.59 $\pm$\scriptsize\, 2.72& 20.72 $\pm$\scriptsize\, 0.52&02.17 $\pm$\scriptsize\, 0.21\\
         &  100& 40& 4& 100&   $\llap{*}$00.23 $\pm$\scriptsize\, 0.06  & $\llap{*}$01.13 $\pm$\scriptsize\, 0.22 &  51.18 $\pm$\scriptsize\, 2.52   & 46.30 $\pm$\scriptsize\, 2.87& 20.12 $\pm$\scriptsize\, 1.77&00.68 $\pm$\scriptsize\, 0.06\\
          \cmidrule(l{2pt}r{2pt}){1-5}
        \cmidrule(l{2pt}r{2pt}){6-7}
        \cmidrule(l{2pt}r{2pt}){8-9}
        \cmidrule(l{2pt}r{2pt}){10-11}
    MRVM &  10& 40& 4& 100&  $\llap{*}$08.16 $\pm$\scriptsize\, 0.41 & 10.96 $\pm$\scriptsize\, 0.76 &  34.45 $\pm$\scriptsize\, 2.06   & 41.68 $\pm$\scriptsize\, 1.34& 02.59 $\pm$\scriptsize\, 0.13&06.08 $\pm$\scriptsize\, 0.40\\
         &  100& 40& 4& 100&    08.67 $\pm$\scriptsize\, 0.43 & 10.19 $\pm$\scriptsize\, 0.89 & 35.04 $\pm$\scriptsize\, 1.91  & 41.91 $\pm$\scriptsize\, 1.39& 01.96 $\pm$\scriptsize\, 0.12 &03.33 $\pm$\scriptsize\, 0.37\\
         &  300& 40& 4& 100&    09.96 $\pm$\scriptsize\, 0.49 &  $\llap{*}$09.05 $\pm$\scriptsize\, 0.53 &  32.72 $\pm$\scriptsize\, 2.17 & 39.66 $\pm$\scriptsize\, 1.20 &00.96 $\pm$\scriptsize\, 0.04 &01.84 $\pm$\scriptsize\, 0.24\\
    \bottomrule
    \end{tabular}
}
    \end{sc}
    \vskip -0.2cm
    \caption{Efficiency loss, relative revenue and runtime of MLCA with MVNNs vs MLCA with plain NNs. Shown are averages including a 95\% CI on a test set of $50$ auction instances in all four SATS domains. The best MVNN and plain NN per domain based on the lowest efficiency loss are marked with a star (if the final efficiency loss is the same for multiple incumbents we selected the incumbent that reached 0\% with the fewest number of queries).}
    \label{tab:efficiency_loss_mlca_appendix}
\end{table*}

\subsection{Detailed Results - Prediction Performance}\label{subec:appendix_pred_perf_detailed_results}
Table~\ref{tab:full_pred_performance_table_appendix} shows the detailed prediction performance results including all optimized hyperparameters for the \textsc{MVNN-Abs} and the \textsc{MVNN-ReLU-Projected} implementation of MVNNs. The \textsc{MVNN-ReLU} implementation led to similar results as the \textsc{MVNN-ReLU-Projected} implementation, and therefore we do not present them in this table.

In Table~\ref{tab:pred_perf_plots_appendix}, we visualize for all other SATS domains the prediction performance capabilities of MVNNs vs plain NNs. Overall, Table~\ref{tab:full_pred_performance_table_appendix} and \ref{tab:pred_perf_plots_appendix} show that MVNNs have a superior generalization performance across all SATS domains and bidder types.

\section{Details MVNN-based Iterative CA}\label{sec:appendix_details_incorporating_MVNNs_into_MLCA}

\subsection{Details Experimental Setup -  MVNN-based Iterative CA}\label{subsec:appendix_details_setup_MLCA}
We used 50 auction instances for evaluation with seeds (for generating the SATS instances) 10001-10050 which do not intersect with the seeds used in \Cref{subec:appendix_pred_perf_data_gen} for prediction performance.
All experiments were conducted on the same compute cluster as in \Cref{subec:appendix_pred_perf_hpo}.

Following prior work \cite{brero2021workingpaper} we set $\Qround=4$ (see \Cref{MLCA}), i.e., in each iteration of MLCA we ask each bidder $1$ query in the main economy (including all bidders) and $3$ queries from randomly sampled marginal economies (excluding one bidder). This choice ensures a trade-off between efficiency (more main economy queries decrease the efficiency loss) and revenue (more marginal queries increase revenue). Note that we add the empty bundle to the initial elicited bundles $\Qinit$, as we know its value to be $0$ a priori. This has no impact for MVNNs as they estimate empty bundles to have zero value by definition but adds extra prior information for plain NNs.

For all SATS domains we specified a minimum relative gap of 1e-2 and a timeout of 300s.

\subsection{Detailed Results -  MVNN-based Iterative CA}\label{subsec:appendix_detailed_results_MLCA}
In Table~\ref{tab:efficiency_loss_mlca_appendix}, we present detailed results of MLCA with MVNNs vs MLCA with plain NNs.
The revenue and runtime should be considered with care.
The revenue is influenced by our choice to use early stopping whenever we already found an efficient allocation at an intermediary iteration as discussed in the main paper.
The shown runtime is not comparable between MVNNs and NNs as they do not use the same architecture, but the one found during HPO. See \Cref{subsec:mip_runtime} in the main paper for a fair runtime comparison. 

For the presentation in the main paper (i.e., \Cref{tab:efficiency_loss_mlca} in the main paper), we selected the best MVNN and plain NN per domain, i.e. the ones marked with a star. In Figures \ref{fig:efficiency_loss_path_GSVM}, \ref{fig:efficiency_loss_path_LSVM}, \ref{fig:efficiency_loss_path_SRVM} and \ref{fig:efficiency_loss_path_MRVM} we present a detailed boxplot version of the efficiency loss path plots for these best models per domain.
\begin{figure}[b!]
    \centering
    \includegraphics[width=1\linewidth]{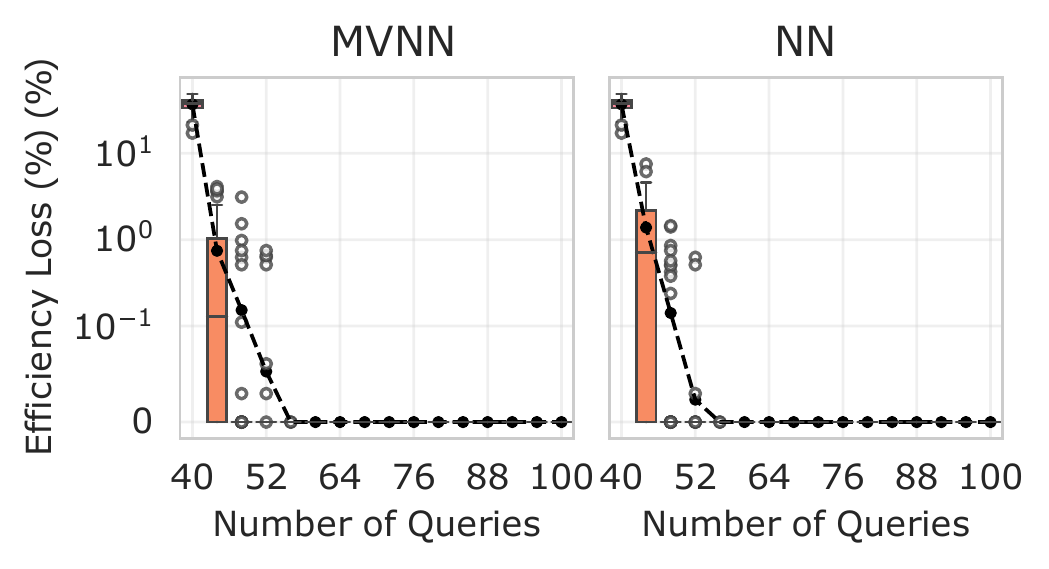}
    \vskip -0.2cm
    \caption{Efficiency loss path of MVNN vs plain NN in GSVM with the corresponding best MVNN ($T=50$) and plain NN ($T=20$) from Table~\ref{tab:efficiency_loss_mlca_appendix}. Averages are shown as black dots. We use a semi-logarithmic scale with linear range [0,1e-2].}
    \label{fig:efficiency_loss_path_GSVM}
\end{figure}
\begin{figure}[t!]
    \centering
    \includegraphics[width=1\linewidth]{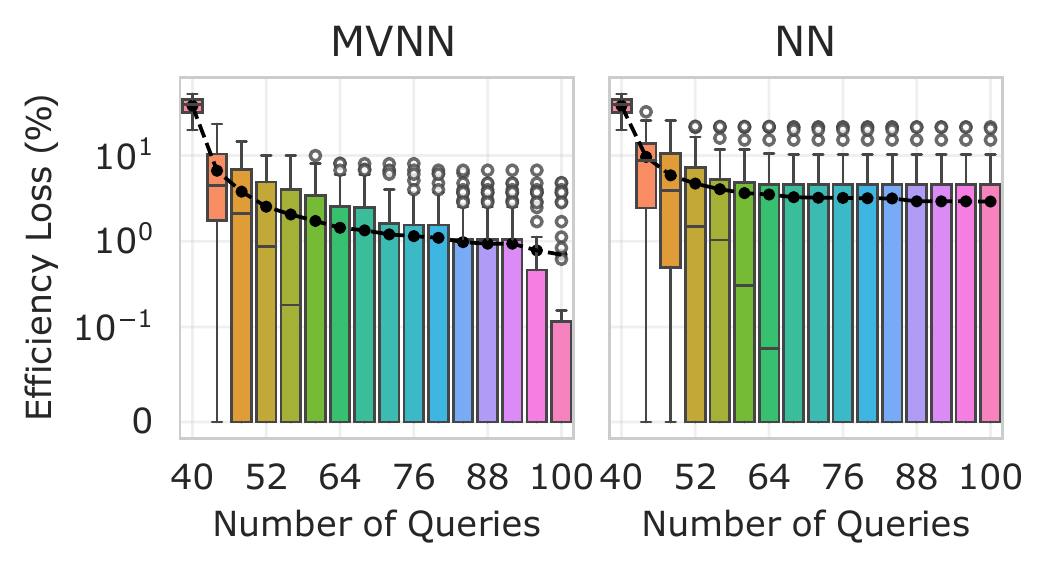}
    \vskip -0.2cm
    \caption{Efficiency loss path of MVNN vs plain NN in LSVM with the corresponding best MVNN ($T=50$) and plain NN ($T=100$) from Table~\ref{tab:efficiency_loss_mlca_appendix}. Averages are shown as black dots. We use a semi-logarithmic scale with linear range [0,1e-1].}
    \label{fig:efficiency_loss_path_LSVM}
\end{figure}
\begin{figure}[t!]
    \centering
    \includegraphics[width=1\linewidth]{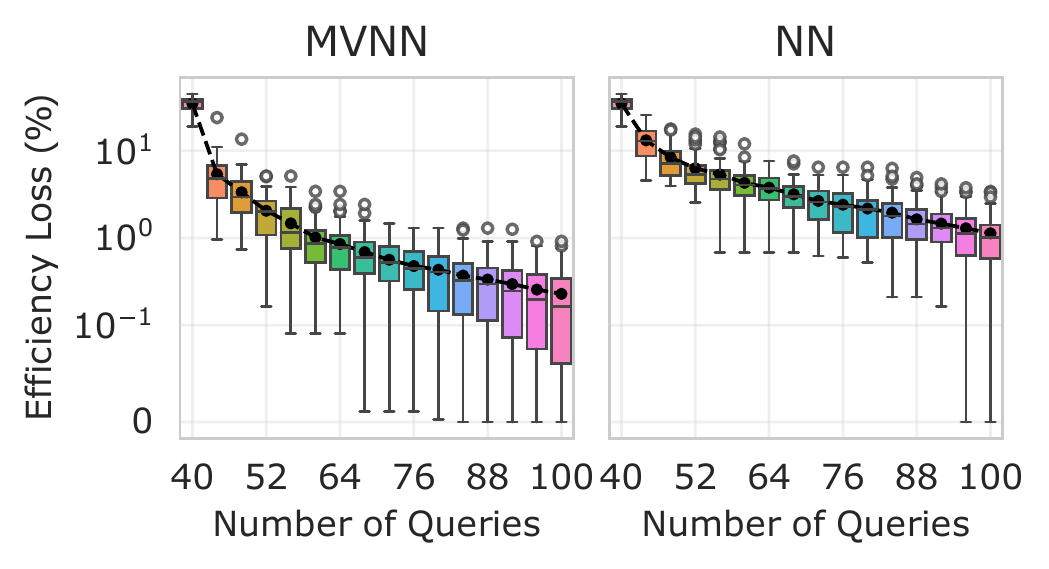}
    \vskip -0.2cm
    \caption{Efficiency loss path of MVNN vs plain NN in SRVM with the corresponding best MVNN ($T=100$) and best plain NN ($T=100$) from Table~\ref{tab:efficiency_loss_mlca_appendix}. Averages are shown as black dots. We use a semi-logarithmic scale with linear range [0,1e-1].}
    \label{fig:efficiency_loss_path_SRVM}
\end{figure}
\begin{figure}[t!]
    \centering
    \includegraphics[width=1.2\linewidth]{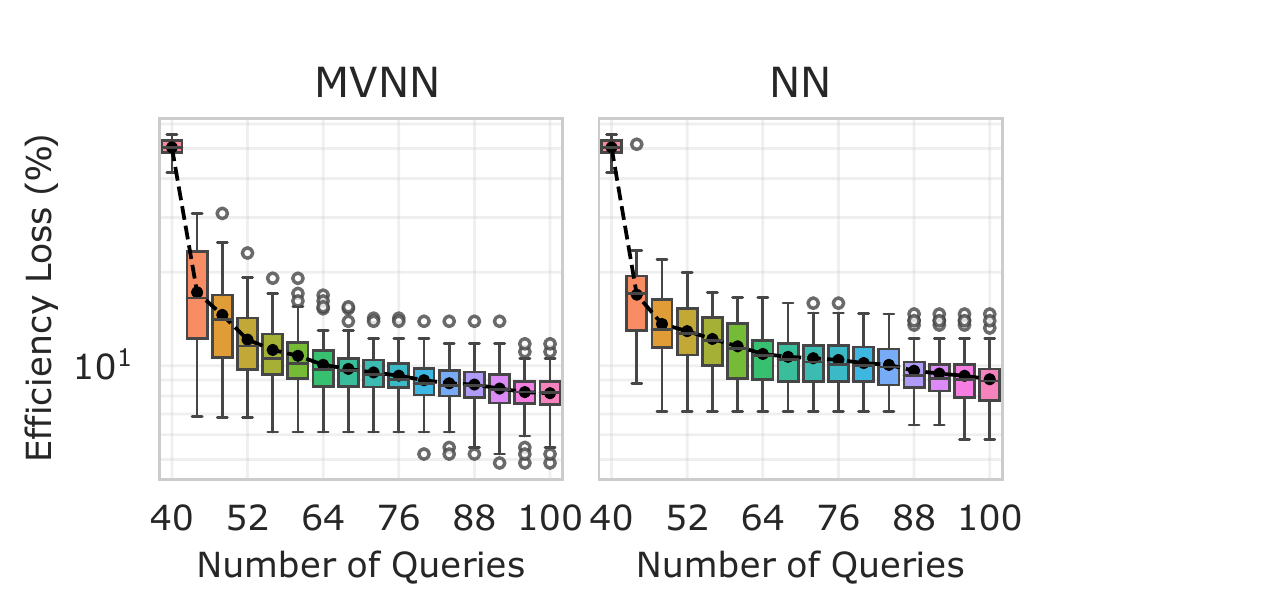}
    \vskip -0.2cm
    \caption{Efficiency loss path of MVNN vs plain NN in MRVM with the corresponding best MVNN ($T=10$) and plain NN ($T=300$) from Table~\ref{tab:efficiency_loss_mlca_appendix}. Averages are shown as black dots. Only logarithmic scale.}
    \label{fig:efficiency_loss_path_MRVM}
\end{figure}

\begin{table*}[ht!]
	\robustify\bfseries
	\centering
	\begin{sc}
	\resizebox{1\textwidth}{!}{
    \begin{tabular}{crrrrccccccccc}
    \toprule
         &  & \multicolumn{3}{c}{\textbf{Number of Queries}}  &\multicolumn{3}{c}{\textbf{Avg. Marg. SCWs (\%)}} & \multicolumn{3}{c}{\textbf{Avg. Main SCWs (\%)}}& \multicolumn{3}{c}{\textbf{Revenue in \%\,\,\textuparrow}}\\
        \cmidrule(l{2pt}r{2pt}){1-2}
        \cmidrule(l{2pt}r{2pt}){3-5}
        \cmidrule(l{2pt}r{2pt}){6-8}
        \cmidrule(l{2pt}r{2pt}){9-11}
        \cmidrule(l{2pt}r{2pt}){12-14}
    \textbf{Domain} & \textbf{T}&  \multicolumn{1}{c}{MVNN}  &   \multicolumn{1}{c}{NN}  &  \multicolumn{1}{c}{RS} & \multicolumn{1}{c}{MVNN}  &   \multicolumn{1}{c}{NN}  &  \multicolumn{1}{c}{RS} & \multicolumn{1}{c}{MVNN} & \multicolumn{1}{c}{NN}&  \multicolumn{1}{c}{RS}&\multicolumn{1}{c}{MVNN} & \multicolumn{1}{c}{NN}&  \multicolumn{1}{c}{RS}\\
        \cmidrule(l{2pt}r{2pt}){1-2}
     \cmidrule(l{2pt}r{2pt}){3-5}
        \cmidrule(l{2pt}r{2pt}){6-8}
        \cmidrule(l{2pt}r{2pt}){9-11}
        \cmidrule(l{2pt}r{2pt}){12-14}
    GSVM & 10 &  59.1& 52.6& 100.0&94.31 & 94.09& 67.05 & 85.71 & 85.71& 59.90& 60.11 $\pm$\scriptsize\, 3.86& 58.59 $\pm$\scriptsize\, 4.35& 52.19 $\pm$\scriptsize\,  2.39\\
         &  20&  54.3& 47.8& 100.0&94.16  &  93.69& 67.05 & 85.71 & 85.71& 59.90&59.07 $\pm$\scriptsize\, 3.84 & 55.71 $\pm$\scriptsize\, 4.46  & 52.19 $\pm$\scriptsize\,  2.39\\
         &  50&  47.2& 48.3& 100.0&93.25 & 93.74& 67.05 & 85.71 & 85.71& 59.90&52.77 $\pm$\scriptsize\, 5.04 & 56.09 $\pm$\scriptsize\, 4.57 & 52.19 $\pm$\scriptsize\,  2.39\\
         \cmidrule(l{2pt}r{2pt}){1-2}
       \cmidrule(l{2pt}r{2pt}){3-5}
        \cmidrule(l{2pt}r{2pt}){6-8}
        \cmidrule(l{2pt}r{2pt}){9-11}
        \cmidrule(l{2pt}r{2pt}){12-14}
    LSVM & 10&  74.2& 75.5& 100.0&93.58& 91.41& 64.57& 81.95& 80.62& 55.67&70.14 $\pm$\scriptsize\, 4.23 &   65.23 $\pm$\scriptsize\, 3.83 &53.58 $\pm$\scriptsize\,  1.84\\
         & 50&  68.6& 73.2& 100.0&94.48&91.28& 64.57& 82.74& 80.69& 55.67&70.70 $\pm$\scriptsize\, 4.63  & 64.07 $\pm$\scriptsize\, 4.24 &53.58 $\pm$\scriptsize\,  1.84 \\
         &  100&  73.2& 73.0& 100.0&94.04 & 91.63& 64.57 & 82.26&  80.86& 55.67&71.09 $\pm$\scriptsize\, 4.35 & 65.10 $\pm$\scriptsize\, 3.90 &53.58 $\pm$\scriptsize\,  1.84\\
         \cmidrule(l{2pt}r{2pt}){1-2}
    \cmidrule(l{2pt}r{2pt}){3-5}
        \cmidrule(l{2pt}r{2pt}){6-8}
        \cmidrule(l{2pt}r{2pt}){9-11}
        \cmidrule(l{2pt}r{2pt}){12-14}
        SRVM &  10&100.0  & 100.0& 100.0&92.53 & 91.34& 69.43 & 85.13&84.62& 62.12&50.86 $\pm$\scriptsize\, 2.31     &  46.93 $\pm$\scriptsize\, 2.07 &51.56 $ \pm$\scriptsize\,  2.07\\
         &  50&  99.4 & 99.9& 100.0&92.66& 91.35& 69.43 & 85.29&84.72& 62.12&50.70 $\pm$\scriptsize\, 2.44   & 45.59 $\pm$\scriptsize\, 2.72&51.56 $ \pm$\scriptsize\,  2.07\\
         &  100&  98.9 & 99.8&  100.0&92.98 & 91.47& 69.43 & 85.52& 84.74& 62.12&51.18 $\pm$\scriptsize\, 2.52   & 46.30 $\pm$\scriptsize\, 2.87&51.56 $ \pm$\scriptsize\,  2.07\\
          \cmidrule(l{2pt}r{2pt}){1-2}
        \cmidrule(l{2pt}r{2pt}){3-5}
        \cmidrule(l{2pt}r{2pt}){6-8}
        \cmidrule(l{2pt}r{2pt}){9-11}
        \cmidrule(l{2pt}r{2pt}){12-14}
     MRVM &  10&  100.0& 100.0& 100.0&86.12& 84.39& 50.39 & 82.68 &80.22& 46.03&34.45 $\pm$\scriptsize\, 2.06   & 41.68 $\pm$\scriptsize\, 1.34&43.58 $\pm$\scriptsize\,  0.65 \\
         &  100&   100.0& 100.0& 100.0&85.70 & 85.04& 50.39 & 82.22 &80.83&46.03&35.04 $\pm$\scriptsize\, 1.91  & 41.91 $\pm$\scriptsize\, 1.39&43.58 $\pm$\scriptsize\,  0.65 \\
         &  300&   100.0& 100.0& 100.0&84.29& 85.86& 50.39 & 81.03&81.87& 46.03&32.72 $\pm$\scriptsize\, 2.17 & 39.66 $\pm$\scriptsize\, 1.20 &43.58 $\pm$\scriptsize\,  0.65\\
    \bottomrule
    \end{tabular}
}
    \end{sc}
    \vskip -0.2cm
    \caption{Normalized average social welfare in the marginal economies and main economies when excluding one bidder for MVNNs, NNs and random search (RS). Additionally, we print the average number of queries, that can differ between MVNN and NN incumbents due to early termination of MLCA. Shown are averages over the same test set of 50 auction instances as in \Cref{tab:efficiency_loss_mlca_appendix}.}
    \label{tab:detailed_revenue_mlca_appendix}
\end{table*}

\subsection{Detailed Revenue Analysis -  MVNN-based Iterative CA}\label{subsec:appendix_detailed_revenue_analysis}
Recall the VCG-payments from \Cref{def:vcg_payments}, i.e., for a set of elicited reports $R$ bidder $i$'s VCG-payment is given as:
 \begin{align*}
&p_i(R) \coloneqq \hspace{-0.2cm}\underbrace{\sum_{j \in N \setminus \{i\}} \hvj{}{}\left(\left(a^*_{R_{-i}}\right)_j\right)}_{=:\textrm{Marg}_{-i}} - \hspace{-0.2cm}\underbrace{\sum_{j \in N \setminus \{i\}}\hvj{}{}\left(\left(a^*_{R}\right)_j\right)}_{=:\textrm{Main}_{-i}},
\end{align*}
 where $\textrm{Marg}_{-i}$ is for a given set of reports $R$ the optimal social welfare (SCW) in the \emph{marginal} economy when excluding bidder $i\in N$ and $\textrm{Main}_{-i}$ denotes the SCW of the \emph{main} economy when excluding bidder $i$. The larger the differences $\textrm{Marg}_{-i}-\textrm{Main}_{-i}, i\in N$ the more revenue is generated in the auction.
 In \Cref{tab:detailed_revenue_mlca_appendix}, we print the normalized average marginal SCW $(\frac{1}{n}\sum_{i\in N}\textrm{Marg}_{-i})/V(a^*)$ (\textsc{\textbf{Avg. Marg. SCWs (\%)}}) and the normalized average main SCW when excluding one bidder $(\frac{1}{n}\sum_{i\in N}\textrm{Main}_{-i})/V(a^*)$ (\textsc{\textbf{Avg. Main SCWs (\%)}})  averaged over the $50$ auction instances corresponding to \Cref{tab:efficiency_loss_mlca_appendix}.
 Using this notation, the average total relative revenue of the auction is then the difference of the normalized average marginal SCW and the normalized average main SCW when excluding one bidder times the number of bidders, i.e.,
 {\small\begin{align*}
     n\left((\frac{1}{n}\sum_{i\in N}\textrm{Marg}_{-i})/V(a^*)-(\frac{1}{n}\sum_{i\in N}\textrm{Main}_{-i})/V(a^*)\right).
 \end{align*}}

Recall, that we terminate MLCA in an intermediate iteration, if it already found an efficient allocation (to save computational costs), i.e., incurred no efficiency loss. Due to this early termination the revenue can be worse off since fewer bundles are elicited in the marginal economies.
If one runs a full auction without early termination, the revenue can only improve after the efficient allocation was already found (because $\textrm{Main}_{-i}$ cannot increase anymore after the efficient allocation was already found, but $\textrm{Marg}_{-i}$ can still improve).

In \Cref{tab:detailed_revenue_mlca_appendix}, we present a detailed revenue analysis for all domains. \Cref{tab:detailed_revenue_mlca_appendix} shows that overall the MVNN's achieved (normalized) SCWs are larger both in the marginal economies as well as in the main economies when excluding one bidder.

In GSVM, we find the efficient allocation very early (see \Cref{fig:efficiency_loss_path_GSVM} or \Cref{tab:detailed_revenue_mlca_appendix}), so the revenue is mainly determined by the number of queries. Without early determination, $60\%$ revenue should be easily achievable for MVNNs.
For LSVM and SRVM, we see that even with less queries MVNNs can outperform plain NNs consistently in terms of revenue. Without early termination, we expect this margin to be even higher (because this would add more queries for MVNNs).

In MRVM, we see that the best MVNN ($T=10$) achieves an average difference of $86.12-82.68=3.44$ while the best plain NN (T=300) achieves an average difference of $85.86-81.87=3.98$. This implies the larger revenue ($\approx 5\% = 10\cdot0.5\%$) of plain NNs in MRVM from \Cref{tab:efficiency_loss_mlca} in the main paper. A possible explanation for the larger revenue of plain NNs in MRVM is that, with each query MVNNs much more exploit the economy for that we solve the WDP, while plain NNs are more random and thus the query that solves the WDP for the main economy is not perfectly specialized for the main economy, but only slightly more helpful for the main economy than for the other economies. Recall that MLCA asks in each iteration each bidder $\Qround$-many queries: one main economy query and $\Qround-1$ marginal economy queries.
Thus, if $\Qround<n$, MLCA generates more queries for the main economy than for each marginal economy and MVNNs improve the main economy even more than they improve the marginal economies, since their queries are highly specialized and exploiting. This effect is the strongest for MRVM as it has the largest ratio of the number of bidders $n$ and $\Qround$.

However, if the objective would be to maximize revenue rather than efficiency, MVNNs could achieve as good or better revenue than plain NNs when we set $\Qround=n$, since then both main and marginal economies would be equally improved by the advantages of MVNNs (again with $\Qround=4$ the main economy profits more from the advantages of the MVNNs than the marginal economies; see \Cref{tab:detailed_revenue_mlca_appendix}).

As expected, for random search (RS) the (normalized) SCWs in both the marginal economies and the main economies when excluding one bidder are much lower compared to MVNNs and NNs. 
However, since \emph{each} economies' SCW is bad, their differences and thus the revenue does not have to be worse.
Moreover,  since RS treats all economies equally and particularly is not specialized towards the main economies (in contrast to MVNN-MLCA or NN-MLCA with $\Qround<n$), RS's revenue can be even higher compared to MVNNs and NNs (e.g., in SRVM and MRVM).

\clearpage
\newgeometry{top = 0.3cm,bottom=0.3cm}

\thispagestyle{empty} 
\atxy{\dimexpr\paperwidth-1in}{.5\paperheight}{\rotatebox[origin=center]{90}{\thepage}}

\begin{sidewaystable*}
    \renewcommand\arraystretch{1.5}
	\centering
	\begin{sc}
    \resizebox{1\textwidth}{!}{%
    \tabcolsep=4pt
    \begin{tabular}{cccllllllllllllllllllllllllllllll}
    \toprule
     &     & {} & \multicolumn{3}{c}{\LARGE$\boldsymbol{r_{xy}}$} & \multicolumn{3}{c}{\LARGE\textbf{Kt}} & \multicolumn{3}{c}{\LARGE\textbf{MAE}} & \multicolumn{3}{c}{\LARGE\textbf{Architecture}} & \multicolumn{3}{c}{\LARGE\textbf{Batch Size}} & \multicolumn{3}{c}{\LARGE\textbf{L2-Reg.}} & \multicolumn{3}{c}{\LARGE\textbf{Optimizer}} & \multicolumn{3}{c}{\LARGE\textbf{Learning Rate}} & \multicolumn{3}{c}{\LARGE\textbf{Loss}} & \multicolumn{3}{c}{\LARGE\textbf{Epochs}} \\
     \cmidrule(lr){4-6}
     \cmidrule(lr){7-9}
     \cmidrule(lr){10-12}
     \cmidrule(lr){13-15}
     \cmidrule(lr){16-18}
     \cmidrule(lr){19-21}
     \cmidrule(lr){22-24}
     \cmidrule(lr){25-27}
     \cmidrule(lr){28-30}
     \cmidrule(lr){31-33}
     &  &  &             \multicolumn{1}{c}{NN} &         \multicolumn{1}{c}{MVNN} &       \multicolumn{1}{c}{MVNN} &              \multicolumn{1}{c}{NN} &         \multicolumn{1}{c}{MVNN} &       \multicolumn{1}{c}{MVNN} &                      \multicolumn{1}{c}{NN} &                 \multicolumn{1}{c}{MVNN} &               \multicolumn{1}{c}{MVNN} &            \multicolumn{1}{c}{NN} &       \multicolumn{1}{c}{MVNN} &     \multicolumn{1}{c}{MVNN} &         \multicolumn{1}{c}{NN} & \multicolumn{1}{c}{MVNN} & \multicolumn{1}{c}{MVNN} &         \multicolumn{1}{c}{NN} &    \multicolumn{1}{c}{MVNN} &  \multicolumn{1}{c}{MVNN}&        \multicolumn{1}{c}{NN} & \multicolumn{1}{c}{MVNN} & \multicolumn{1}{c}{MVNN} &            \multicolumn{1}{c}{NN} &    \multicolumn{1}{c}{MVNN} &  \multicolumn{1}{c}{MVNN} &            \multicolumn{1}{c}{NN} & \multicolumn{1}{c}{MVNN} & \multicolumn{1}{c}{MVNN} &     \multicolumn{1}{c}{NN} & \multicolumn{1}{c}{MVNN} & \multicolumn{1}{c}{MVNN} \\
      \multicolumn{1}{c}{\LARGE\textbf{Domain}} & \multicolumn{1}{c}{\LARGE\textbf{T}} & \multicolumn{1}{c}{\LARGE\textbf{Bidder}} &             \multicolumn{1}{c}{Plain} &         \multicolumn{1}{c}{Abs} &       \multicolumn{1}{c}{ReLU-Proj..} &              \multicolumn{1}{c}{Plain} &         \multicolumn{1}{c}{Abs} &       \multicolumn{1}{c}{ReLU-Proj.} &                      \multicolumn{1}{c}{Plain} &                 \multicolumn{1}{c}{Abs} &               \multicolumn{1}{c}{ReLU-Proj.} &            \multicolumn{1}{c}{Plain} &       \multicolumn{1}{c}{Abs} &     \multicolumn{1}{c}{ReLU-Proj.} &         \multicolumn{1}{c}{Plain} & \multicolumn{1}{c}{Abs} & \multicolumn{1}{c}{ReLU-Proj.} &         \multicolumn{1}{c}{Plain} &    \multicolumn{1}{c}{Abs} &  \multicolumn{1}{c}{ReLU-Proj.}&        \multicolumn{1}{c}{Plain} & \multicolumn{1}{c}{Abs} & \multicolumn{1}{c}{ReLU} &            \multicolumn{1}{c}{Plain} &    \multicolumn{1}{c}{Abs} &  \multicolumn{1}{c}{ReLU-Proj.} &            \multicolumn{1}{c}{Plain} & \multicolumn{1}{c}{Abs} & \multicolumn{1}{c}{ReLU-Proj.} &     \multicolumn{1}{c}{Plain} & \multicolumn{1}{c}{Abs} & \multicolumn{1}{c}{ReLU-Proj.} \\
     \cmidrule(lr){1-3}
     \cmidrule(lr){4-6}
     \cmidrule(lr){7-9}
     \cmidrule(lr){10-12}
     \cmidrule(lr){13-15}
     \cmidrule(lr){16-18}
     \cmidrule(lr){19-21}
     \cmidrule(lr){22-24}
     \cmidrule(lr){25-27}
     \cmidrule(lr){28-30}
     \cmidrule(lr){31-33}
GSVM & 10  & Nat &  0.782 $\pm$\scriptsize 0.020 &  0.861 $\pm$\scriptsize 0.017 &  0.856 $\pm$\scriptsize 0.022 &  0.583 $\pm$\scriptsize 0.021 &  0.672 $\pm$\scriptsize 0.022 &  0.668 $\pm$\scriptsize 0.027 &  2.197$\mathrm{e}{+1}$ $\pm$\scriptsize 1.713$\mathrm{e}{+0}$ &  1.731$\mathrm{e}{+1}$ $\pm$\scriptsize 1.605$\mathrm{e}{+0}$ &  1.776$\mathrm{e}{+1}$ $\pm$\scriptsize 1.960$\mathrm{e}{+0}$ &      [16, 16] &          [62] &          [63] &          1 &       4 &         1 &  4.844$\mathrm{e}{-10}$ &  1.160$\mathrm{e}{-7}$ &  1.002$\mathrm{e}{-10}$ &      Adam &    Adam &      Adam &      3.558$\mathrm{e}{-3}$ &  9.945$\mathrm{e}{-3}$ &  4.297$\mathrm{e}{-3}$ &           MAE &     MAE &       MAE &    219 &      55 &        60 \\
     &     & Reg &  0.746 $\pm$\scriptsize 0.033 &  0.799 $\pm$\scriptsize 0.032 &  0.816 $\pm$\scriptsize 0.034 &  0.557 $\pm$\scriptsize 0.033 &  0.614 $\pm$\scriptsize 0.034 &  0.633 $\pm$\scriptsize 0.038 &  1.376$\mathrm{e}{+1}$ $\pm$\scriptsize 9.002$\mathrm{e}{-1}$ &  1.268$\mathrm{e}{+1}$ $\pm$\scriptsize 1.077$\mathrm{e}{+0}$ &  1.185$\mathrm{e}{+1}$ $\pm$\scriptsize 1.095$\mathrm{e}{+0}$ &     [5, 5, 5] &          [11] &          [61] &          2 &       2 &         1 &  3.817$\mathrm{e}{-7}$ &  1.124$\mathrm{e}{-7}$ &  5.884$\mathrm{e}{-8}$ &      Adam &    Adam &      Adam &      9.973$\mathrm{e}{-3}$ &  6.599$\mathrm{e}{-3}$ &  6.587$\mathrm{e}{-3}$ &           MAE &     MAE &       MAE &    321 &     181 &       114 \\
     & 20  & Nat &  0.911 $\pm$\scriptsize 0.017 &  0.966 $\pm$\scriptsize 0.008 &  0.965 $\pm$\scriptsize 0.007 &  0.752 $\pm$\scriptsize 0.029 &  0.865 $\pm$\scriptsize 0.020 &  0.849 $\pm$\scriptsize 0.017 &  1.347$\mathrm{e}{+1}$ $\pm$\scriptsize 1.616$\mathrm{e}{+0}$ &  8.479$\mathrm{e}{+0}$ $\pm$\scriptsize 1.354$\mathrm{e}{+0}$ &  8.568$\mathrm{e}{+0}$ $\pm$\scriptsize 1.089$\mathrm{e}{+0}$ &          [44] &  [21, 21, 21] &          [53] &          3 &       1 &         3 &  3.864$\mathrm{e}{-9}$ &  8.866$\mathrm{e}{-8}$ &  1.344$\mathrm{e}{-7}$ &       SGD &    Adam &      Adam &      2.935$\mathrm{e}{-3}$ &  3.584$\mathrm{e}{-3}$ &  5.982$\mathrm{e}{-3}$ &           MAE &     MSE &       MAE &    398 &     338 &       117 \\
     &     & Reg &  0.944 $\pm$\scriptsize 0.011 &  0.965 $\pm$\scriptsize 0.009 &  0.973 $\pm$\scriptsize 0.009 &  0.815 $\pm$\scriptsize 0.021 &  0.878 $\pm$\scriptsize 0.018 &  0.882 $\pm$\scriptsize 0.020 &  6.594$\mathrm{e}{+0}$ $\pm$\scriptsize 6.773$\mathrm{e}{-1}$ &  4.844$\mathrm{e}{+0}$ $\pm$\scriptsize 7.184$\mathrm{e}{-1}$ &  4.324$\mathrm{e}{+0}$ $\pm$\scriptsize 6.252$\mathrm{e}{-1}$ &          [57] &  [21, 21, 21] &          [61] &          4 &       1 &         2 &  2.300$\mathrm{e}{-10}$ &  1.137$\mathrm{e}{-8}$ &  4.712$\mathrm{e}{-7}$ &       SGD &    Adam &      Adam &      2.945$\mathrm{e}{-3}$ &  1.415$\mathrm{e}{-3}$ &  6.956$\mathrm{e}{-3}$ &           MAE &     MAE &       MAE &    390 &     381 &       138 \\
     & 50  & Nat &  0.995 $\pm$\scriptsize 0.000 &  0.998 $\pm$\scriptsize 0.001 &  0.997 $\pm$\scriptsize 0.000 &  0.953 $\pm$\scriptsize 0.003 &  0.978 $\pm$\scriptsize 0.003 &  0.962 $\pm$\scriptsize 0.003 &  3.294$\mathrm{e}{+0}$ $\pm$\scriptsize 2.693$\mathrm{e}{-1}$ &  1.540$\mathrm{e}{+0}$ $\pm$\scriptsize 1.774$\mathrm{e}{-1}$ &  2.506$\mathrm{e}{+0}$ $\pm$\scriptsize 2.261$\mathrm{e}{-1}$ &          [61] &  [20, 20, 20] &          [61] &          1 &       1 &         2 &  4.465$\mathrm{e}{-8}$ &  2.117$\mathrm{e}{-7}$ &  1.653$\mathrm{e}{-7}$ &       SGD &    Adam &      Adam &      8.275$\mathrm{e}{-3}$ &  7.401$\mathrm{e}{-4}$ &  2.246$\mathrm{e}{-3}$ &           MAE &     MAE &       MAE &    223 &     247 &       223 \\
     &     & Reg &  0.995 $\pm$\scriptsize 0.001 &  0.998 $\pm$\scriptsize 0.001 &  0.999 $\pm$\scriptsize 0.000 &  0.953 $\pm$\scriptsize 0.003 &  0.978 $\pm$\scriptsize 0.002 &  0.974 $\pm$\scriptsize 0.002 &  1.936$\mathrm{e}{+0}$ $\pm$\scriptsize 1.204$\mathrm{e}{-1}$ &  7.172$\mathrm{e}{-1}$ $\pm$\scriptsize 9.339$\mathrm{e}{-2}$ &  8.704$\mathrm{e}{-1}$ $\pm$\scriptsize 8.185$\mathrm{e}{-2}$ &          [48] &  [21, 21, 21] &          [62] &          4 &       3 &         2 &  3.013$\mathrm{e}{-7}$ &  5.333$\mathrm{e}{-7}$ &  3.351$\mathrm{e}{-7}$ &       SGD &    Adam &      Adam &      6.250$\mathrm{e}{-3}$ &  1.676$\mathrm{e}{-3}$ &  2.002$\mathrm{e}{-3}$ &           MAE &     MAE &       MAE &    348 &     392 &       355 \\
     \cmidrule(lr){1-3}
     \cmidrule(lr){4-6}
     \cmidrule(lr){7-9}
     \cmidrule(lr){10-12}
     \cmidrule(lr){13-15}
     \cmidrule(lr){16-18}
     \cmidrule(lr){19-21}
     \cmidrule(lr){22-24}
     \cmidrule(lr){25-27}
     \cmidrule(lr){28-30}
     \cmidrule(lr){31-33}
LSVM & 10  & Nat &  0.701 $\pm$\scriptsize 0.013 &  0.803 $\pm$\scriptsize 0.010 &  0.598 $\pm$\scriptsize 0.019 &  0.710 $\pm$\scriptsize 0.023 &  0.712 $\pm$\scriptsize 0.013 &  0.693 $\pm$\scriptsize 0.011 &  3.225$\mathrm{e}{+1}$ $\pm$\scriptsize 1.366$\mathrm{e}{+0}$ &  2.811$\mathrm{e}{+1}$ $\pm$\scriptsize 1.447$\mathrm{e}{+0}$ &  3.263$\mathrm{e}{+1}$ $\pm$\scriptsize 1.256$\mathrm{e}{+0}$ &          [24] &          [47] &      [29, 29] &          4 &       2 &         1 &  7.358$\mathrm{e}{-8}$ &  1.224$\mathrm{e}{-8}$ &  5.951$\mathrm{e}{-8}$ &      Adam &     SGD &       SGD &      3.919$\mathrm{e}{-3}$ &  9.634$\mathrm{e}{-3}$ &  2.398$\mathrm{e}{-4}$ &           MAE &     MSE &       MAE &    213 &     372 &       321 \\
     &     & Reg &  0.679 $\pm$\scriptsize 0.027 &  0.766 $\pm$\scriptsize 0.026 &  0.785 $\pm$\scriptsize 0.028 &  0.504 $\pm$\scriptsize 0.025 &  0.585 $\pm$\scriptsize 0.028 &  0.605 $\pm$\scriptsize 0.031 &  2.917$\mathrm{e}{+1}$ $\pm$\scriptsize 4.131$\mathrm{e}{+0}$ &  2.457$\mathrm{e}{+1}$ $\pm$\scriptsize 3.299$\mathrm{e}{+0}$ &  2.385$\mathrm{e}{+1}$ $\pm$\scriptsize 3.382$\mathrm{e}{+0}$ &          [59] &          [54] &          [44] &          1 &       4 &         1 &  6.406$\mathrm{e}{-7}$ &  2.702$\mathrm{e}{-7}$ &  1.287$\mathrm{e}{-8}$ &      Adam &    Adam &       SGD &      9.853$\mathrm{e}{-3}$ &  9.752$\mathrm{e}{-3}$ &  1.323$\mathrm{e}{-3}$ &           MAE &     MAE &       MAE &    113 &     121 &       273 \\
     & 50  & Nat &  0.681 $\pm$\scriptsize 0.024 &  0.793 $\pm$\scriptsize 0.011 &  0.796 $\pm$\scriptsize 0.012 &  0.678 $\pm$\scriptsize 0.035 &  0.765 $\pm$\scriptsize 0.011 &  0.753 $\pm$\scriptsize 0.009 &  3.052$\mathrm{e}{+1}$ $\pm$\scriptsize 1.116$\mathrm{e}{+0}$ &  2.327$\mathrm{e}{+1}$ $\pm$\scriptsize 7.769$\mathrm{e}{-1}$ &  2.292$\mathrm{e}{+1}$ $\pm$\scriptsize 6.508$\mathrm{e}{-1}$ &           [4] &  [11, 11, 11] &  [20, 20, 20] &          1 &       1 &         1 &  5.092$\mathrm{e}{-7}$ &  6.468$\mathrm{e}{-8}$ &  2.702$\mathrm{e}{-10}$ &      Adam &    Adam &       SGD &      8.593$\mathrm{e}{-3}$ &  1.084$\mathrm{e}{-3}$ &  3.355$\mathrm{e}{-3}$ &           MAE &     MAE &       MAE &     92 &     306 &       220 \\
     &     & Reg &  0.938 $\pm$\scriptsize 0.006 &  0.964 $\pm$\scriptsize 0.006 &  0.963 $\pm$\scriptsize 0.007 &  0.812 $\pm$\scriptsize 0.013 &  0.859 $\pm$\scriptsize 0.015 &  0.860 $\pm$\scriptsize 0.017 &  1.274$\mathrm{e}{+1}$ $\pm$\scriptsize 2.013$\mathrm{e}{+0}$ &  9.379$\mathrm{e}{+0}$ $\pm$\scriptsize 2.020$\mathrm{e}{+0}$ &  9.511$\mathrm{e}{+0}$ $\pm$\scriptsize 2.114$\mathrm{e}{+0}$ &          [26] &      [24, 24] &          [62] &          1 &       2 &         1 &  1.997$\mathrm{e}{-10}$ &  4.041$\mathrm{e}{-7}$ &  6.958$\mathrm{e}{-7}$ &       SGD &    Adam &      Adam &      1.067$\mathrm{e}{-3}$ &  1.994$\mathrm{e}{-3}$ &  1.917$\mathrm{e}{-3}$ &           MAE &     MAE &       MAE &    158 &     281 &       258 \\
     & 100 & Nat &  0.719 $\pm$\scriptsize 0.015 &  0.825 $\pm$\scriptsize 0.008 &  0.834 $\pm$\scriptsize 0.008 &  0.706 $\pm$\scriptsize 0.018 &  0.817 $\pm$\scriptsize 0.005 &  0.813 $\pm$\scriptsize 0.005 &  2.805$\mathrm{e}{+1}$ $\pm$\scriptsize 7.422$\mathrm{e}{-1}$ &  1.994$\mathrm{e}{+1}$ $\pm$\scriptsize 6.458$\mathrm{e}{-1}$ &  1.998$\mathrm{e}{+1}$ $\pm$\scriptsize 6.605$\mathrm{e}{-1}$ &     [9, 9, 9] &      [32, 32] &      [24, 24] &          4 &       3 &         1 &  9.546$\mathrm{e}{-10}$ &  7.747$\mathrm{e}{-7}$ &  8.972$\mathrm{e}{-10}$ &      Adam &    Adam &      Adam &      5.013$\mathrm{e}{-3}$ &  1.780$\mathrm{e}{-3}$ &  2.406$\mathrm{e}{-4}$ &           MAE &     MAE &       MAE &     79 &     250 &       391 \\
     &     & Reg &  0.969 $\pm$\scriptsize 0.005 &  0.983 $\pm$\scriptsize 0.004 &  0.983 $\pm$\scriptsize 0.005 &  0.857 $\pm$\scriptsize 0.012 &  0.912 $\pm$\scriptsize 0.014 &  0.918 $\pm$\scriptsize 0.015 &  9.082$\mathrm{e}{+0}$ $\pm$\scriptsize 1.764$\mathrm{e}{+0}$ &  6.003$\mathrm{e}{+0}$ $\pm$\scriptsize 1.649$\mathrm{e}{+0}$ &  5.952$\mathrm{e}{+0}$ $\pm$\scriptsize 1.719$\mathrm{e}{+0}$ &          [14] &      [22, 22] &          [55] &          4 &       2 &         4 &  6.641$\mathrm{e}{-10}$ &  7.392$\mathrm{e}{-9}$ &  2.273$\mathrm{e}{-9}$ &       SGD &    Adam &      Adam &      2.109$\mathrm{e}{-3}$ &  1.450$\mathrm{e}{-3}$ &  3.640$\mathrm{e}{-3}$ &           MAE &     MAE &       MAE &    392 &     387 &       204 \\
     \cmidrule(lr){1-3}
     \cmidrule(lr){4-6}
     \cmidrule(lr){7-9}
     \cmidrule(lr){10-12}
     \cmidrule(lr){13-15}
     \cmidrule(lr){16-18}
     \cmidrule(lr){19-21}
     \cmidrule(lr){22-24}
     \cmidrule(lr){25-27}
     \cmidrule(lr){28-30}
     \cmidrule(lr){31-33}
MRVM & 10  & Lo &  0.296 $\pm$\scriptsize 0.021 &  0.438 $\pm$\scriptsize 0.018 &  0.383 $\pm$\scriptsize 0.023 &  0.200 $\pm$\scriptsize 0.015 &  0.303 $\pm$\scriptsize 0.012 &  0.262 $\pm$\scriptsize 0.017 &  7.447$\mathrm{e}{+6}$ $\pm$\scriptsize 1.383$\mathrm{e}{+6}$ &  7.093$\mathrm{e}{+6}$ $\pm$\scriptsize 1.330$\mathrm{e}{+6}$ &  7.511$\mathrm{e}{+6}$ $\pm$\scriptsize 1.445$\mathrm{e}{+6}$ &           [4] &        [9, 9] &          [21] &          2 &       1 &         4 &  3.510$\mathrm{e}{-10}$ &  1.161$\mathrm{e}{-7}$ &  4.206$\mathrm{e}{-8}$ &       SGD &    Adam &      Adam &      1.222$\mathrm{e}{-4}$ &  5.775$\mathrm{e}{-4}$ &  2.028$\mathrm{e}{-3}$ &           MSE &     MSE &       MSE &    377 &     276 &       271 \\
     &     & Nat &  0.597 $\pm$\scriptsize 0.009 &  0.588 $\pm$\scriptsize 0.016 &  0.529 $\pm$\scriptsize 0.022 &  0.414 $\pm$\scriptsize 0.008 &  0.409 $\pm$\scriptsize 0.014 &  0.365 $\pm$\scriptsize 0.018 &  3.708$\mathrm{e}{+8}$ $\pm$\scriptsize 1.415$\mathrm{e}{+8}$ &  2.754$\mathrm{e}{+8}$ $\pm$\scriptsize 2.316$\mathrm{e}{+7}$ &  2.863$\mathrm{e}{+8}$ $\pm$\scriptsize 2.287$\mathrm{e}{+7}$ &        [2, 2] &          [29] &          [32] &          4 &       3 &         1 &  4.675$\mathrm{e}{-8}$ &  1.400$\mathrm{e}{-7}$ &  3.966$\mathrm{e}{-8}$ &      Adam &    Adam &      Adam &      1.681$\mathrm{e}{-3}$ &  1.320$\mathrm{e}{-3}$ &  2.529$\mathrm{e}{-4}$ &           MSE &     MAE &       MAE &    330 &     121 &       352 \\
     &     & Reg &  0.368 $\pm$\scriptsize 0.052 &  0.482 $\pm$\scriptsize 0.028 &  0.459 $\pm$\scriptsize 0.031 &  0.255 $\pm$\scriptsize 0.038 &  0.340 $\pm$\scriptsize 0.020 &  0.322 $\pm$\scriptsize 0.022 &  2.809$\mathrm{e}{+7}$ $\pm$\scriptsize 6.838$\mathrm{e}{+6}$ &  2.455$\mathrm{e}{+7}$ $\pm$\scriptsize 6.173$\mathrm{e}{+6}$ &  2.599$\mathrm{e}{+7}$ $\pm$\scriptsize 6.790$\mathrm{e}{+6}$ &           [6] &          [13] &          [28] &          2 &       1 &         2 &  6.042$\mathrm{e}{-9}$ &  4.163$\mathrm{e}{-8}$ &  1.333$\mathrm{e}{-7}$ &       SGD &    Adam &      Adam &      1.360$\mathrm{e}{-4}$ &  7.934$\mathrm{e}{-4}$ &  2.344$\mathrm{e}{-3}$ &           MSE &     MSE &       MAE &     86 &     285 &       184 \\
     & 100 & Lo &  0.729 $\pm$\scriptsize 0.012 &  0.928 $\pm$\scriptsize 0.010 &  0.918 $\pm$\scriptsize 0.012 &  0.545 $\pm$\scriptsize 0.012 &  0.793 $\pm$\scriptsize 0.017 &  0.786 $\pm$\scriptsize 0.019 &  5.177$\mathrm{e}{+6}$ $\pm$\scriptsize 9.344$\mathrm{e}{+5}$ &  2.582$\mathrm{e}{+6}$ $\pm$\scriptsize 5.289$\mathrm{e}{+5}$ &  2.735$\mathrm{e}{+6}$ $\pm$\scriptsize 5.360$\mathrm{e}{+5}$ &        [1, 1] &          [32] &      [13, 13] &          1 &       1 &         1 &  1.515$\mathrm{e}{-8}$ &  6.048$\mathrm{e}{-8}$ &  3.342$\mathrm{e}{-7}$ &      Adam &    Adam &      Adam &      1.405$\mathrm{e}{-3}$ &  9.018$\mathrm{e}{-4}$ &  4.855$\mathrm{e}{-3}$ &           MAE &     MAE &       MSE &    225 &     157 &        61 \\
     &     & Nat &  0.776 $\pm$\scriptsize 0.010 &  0.894 $\pm$\scriptsize 0.012 &  0.887 $\pm$\scriptsize 0.011 &  0.581 $\pm$\scriptsize 0.010 &  0.737 $\pm$\scriptsize 0.015 &  0.726 $\pm$\scriptsize 0.014 &  2.094$\mathrm{e}{+8}$ $\pm$\scriptsize 1.582$\mathrm{e}{+7}$ &  1.338$\mathrm{e}{+8}$ $\pm$\scriptsize 1.393$\mathrm{e}{+7}$ &  1.390$\mathrm{e}{+8}$ $\pm$\scriptsize 1.339$\mathrm{e}{+7}$ &        [6, 6] &      [13, 13] &          [23] &          2 &       2 &         4 &  6.091$\mathrm{e}{-7}$ &  1.455$\mathrm{e}{-8}$ &  2.330$\mathrm{e}{-8}$ &      Adam &    Adam &      Adam &      2.654$\mathrm{e}{-4}$ &  2.368$\mathrm{e}{-4}$ &  5.983$\mathrm{e}{-4}$ &           MAE &     MAE &       MAE &    393 &     399 &       245 \\
     &     & Reg &  0.747 $\pm$\scriptsize 0.016 &  0.915 $\pm$\scriptsize 0.013 &  0.917 $\pm$\scriptsize 0.015 &  0.572 $\pm$\scriptsize 0.018 &  0.777 $\pm$\scriptsize 0.023 &  0.779 $\pm$\scriptsize 0.027 &  1.880$\mathrm{e}{+7}$ $\pm$\scriptsize 3.774$\mathrm{e}{+6}$ &  8.807$\mathrm{e}{+6}$ $\pm$\scriptsize 1.401$\mathrm{e}{+6}$ &  8.504$\mathrm{e}{+6}$ $\pm$\scriptsize 1.121$\mathrm{e}{+6}$ &        [1, 1] &      [15, 15] &      [13, 13] &          1 &       1 &         1 &  5.212$\mathrm{e}{-7}$ &  5.562$\mathrm{e}{-8}$ &  1.336$\mathrm{e}{-8}$ &      Adam &    Adam &      Adam &      2.176$\mathrm{e}{-3}$ &  3.952$\mathrm{e}{-4}$ &  9.247$\mathrm{e}{-4}$ &           MAE &     MAE &       MAE &    117 &     130 &       189 \\
     & 300 & Lo &  0.934 $\pm$\scriptsize 0.005 &  0.968 $\pm$\scriptsize 0.004 &  0.972 $\pm$\scriptsize 0.003 &  0.819 $\pm$\scriptsize 0.009 &  0.874 $\pm$\scriptsize 0.011 &  0.883 $\pm$\scriptsize 0.010 &  2.417$\mathrm{e}{+6}$ $\pm$\scriptsize 4.241$\mathrm{e}{+5}$ &  1.616$\mathrm{e}{+6}$ $\pm$\scriptsize 3.194$\mathrm{e}{+5}$ &  1.484$\mathrm{e}{+6}$ $\pm$\scriptsize 2.932$\mathrm{e}{+5}$ &      [11, 11] &          [30] &      [16, 16] &          4 &       4 &         2 &  7.798$\mathrm{e}{-9}$ &  2.857$\mathrm{e}{-8}$ &  3.990$\mathrm{e}{-8}$ &       SGD &    Adam &      Adam &      2.904$\mathrm{e}{-3}$ &  1.571$\mathrm{e}{-3}$ &  1.195$\mathrm{e}{-3}$ &           MAE &     MSE &       MAE &     93 &     228 &       161 \\
     &     & Nat &  0.927 $\pm$\scriptsize 0.014 &  0.933 $\pm$\scriptsize 0.008 &  0.933 $\pm$\scriptsize 0.008 &  0.808 $\pm$\scriptsize 0.013 &  0.818 $\pm$\scriptsize 0.008 &  0.814 $\pm$\scriptsize 0.009 &  9.967$\mathrm{e}{+7}$ $\pm$\scriptsize 1.006$\mathrm{e}{+7}$ &  9.519$\mathrm{e}{+7}$ $\pm$\scriptsize 8.113$\mathrm{e}{+6}$ &  9.584$\mathrm{e}{+7}$ $\pm$\scriptsize 8.548$\mathrm{e}{+6}$ &        [1, 1] &        [4, 4] &           [9] &          3 &       2 &         4 &  5.880$\mathrm{e}{-9}$ &  3.681$\mathrm{e}{-9}$ &  6.226$\mathrm{e}{-9}$ &       SGD &    Adam &      Adam &      3.670$\mathrm{e}{-4}$ &  1.346$\mathrm{e}{-3}$ &  6.650$\mathrm{e}{-4}$ &           MAE &     MAE &       MAE &    292 &     149 &       191 \\
     &     & Reg &  0.923 $\pm$\scriptsize 0.005 &  0.963 $\pm$\scriptsize 0.007 &  0.958 $\pm$\scriptsize 0.009 &  0.809 $\pm$\scriptsize 0.012 &  0.862 $\pm$\scriptsize 0.018 &  0.851 $\pm$\scriptsize 0.019 &  9.405$\mathrm{e}{+6}$ $\pm$\scriptsize 1.883$\mathrm{e}{+6}$ &  5.257$\mathrm{e}{+6}$ $\pm$\scriptsize 6.323$\mathrm{e}{+5}$ &  5.501$\mathrm{e}{+6}$ $\pm$\scriptsize 5.956$\mathrm{e}{+5}$ &          [16] &          [32] &        [9, 9] &          1 &       1 &         3 &  1.246$\mathrm{e}{-10}$ &  7.778$\mathrm{e}{-7}$ &  8.111$\mathrm{e}{-8}$ &       SGD &    Adam &      Adam &      4.827$\mathrm{e}{-3}$ &  5.867$\mathrm{e}{-4}$ &  1.064$\mathrm{e}{-3}$ &           MAE &     MAE &       MAE &    103 &     292 &       298 \\
     \cmidrule(lr){1-3}
     \cmidrule(lr){4-6}
     \cmidrule(lr){7-9}
     \cmidrule(lr){10-12}
     \cmidrule(lr){13-15}
     \cmidrule(lr){16-18}
     \cmidrule(lr){19-21}
     \cmidrule(lr){22-24}
     \cmidrule(lr){25-27}
     \cmidrule(lr){28-30}
     \cmidrule(lr){31-33}
SRVM & 10  & H.F. &  0.762 $\pm$\scriptsize 0.011 &  0.805 $\pm$\scriptsize 0.010 &  0.795 $\pm$\scriptsize 0.018 &  0.607 $\pm$\scriptsize 0.012 &  0.642 $\pm$\scriptsize 0.014 &  0.626 $\pm$\scriptsize 0.020 &  4.268$\mathrm{e}{+1}$ $\pm$\scriptsize 2.940$\mathrm{e}{+0}$ &  1.359$\mathrm{e}{+1}$ $\pm$\scriptsize 9.009$\mathrm{e}{-1}$ &  1.436$\mathrm{e}{+1}$ $\pm$\scriptsize 9.693$\mathrm{e}{-1}$ &     [2, 2, 2] &          [29] &      [30, 30] &          4 &       2 &         2 &  5.455$\mathrm{e}{-9}$ &  1.884$\mathrm{e}{-7}$ &  3.639$\mathrm{e}{-7}$ &      Adam &    Adam &      Adam &      3.995$\mathrm{e}{-3}$ &  9.079$\mathrm{e}{-3}$ &  9.808$\mathrm{e}{-3}$ &           MSE &     MAE &       MSE &    394 &     206 &       253 \\
     &     & Lo &  0.570 $\pm$\scriptsize 0.030 &  0.678 $\pm$\scriptsize 0.027 &  0.660 $\pm$\scriptsize 0.029 &  0.489 $\pm$\scriptsize 0.032 &  0.571 $\pm$\scriptsize 0.027 &  0.559 $\pm$\scriptsize 0.030 &  3.298$\mathrm{e}{+0}$ $\pm$\scriptsize 3.677$\mathrm{e}{-1}$ &  2.708$\mathrm{e}{+0}$ $\pm$\scriptsize 2.180$\mathrm{e}{-1}$ &  2.756$\mathrm{e}{+0}$ $\pm$\scriptsize 2.047$\mathrm{e}{-1}$ &          [18] &  [18, 18, 18] &      [27, 27] &          1 &       2 &         3 &  2.506$\mathrm{e}{-9}$ &  1.564$\mathrm{e}{-7}$ &  9.703$\mathrm{e}{-7}$ &       SGD &    Adam &      Adam &      1.369$\mathrm{e}{-4}$ &  3.551$\mathrm{e}{-3}$ &  5.858$\mathrm{e}{-3}$ &           MSE &     MAE &       MAE &     89 &     203 &       290 \\
     &     & Nat &  0.636 $\pm$\scriptsize 0.018 &  0.692 $\pm$\scriptsize 0.028 &  0.676 $\pm$\scriptsize 0.032 &  0.535 $\pm$\scriptsize 0.012 &  0.579 $\pm$\scriptsize 0.023 &  0.560 $\pm$\scriptsize 0.026 &  4.082$\mathrm{e}{+2}$ $\pm$\scriptsize 2.859$\mathrm{e}{+1}$ &  3.582$\mathrm{e}{+2}$ $\pm$\scriptsize 2.826$\mathrm{e}{+1}$ &  3.691$\mathrm{e}{+2}$ $\pm$\scriptsize 2.935$\mathrm{e}{+1}$ &     [2, 2, 2] &      [25, 25] &          [55] &          3 &       2 &         1 &  5.466$\mathrm{e}{-9}$ &  1.471$\mathrm{e}{-7}$ &  1.295$\mathrm{e}{-10}$ &      Adam &    Adam &      Adam &      4.268$\mathrm{e}{-3}$ &  7.093$\mathrm{e}{-3}$ &  5.177$\mathrm{e}{-3}$ &           MAE &     MAE &       MAE &    400 &      59 &       135 \\
     &     & Reg &  0.675 $\pm$\scriptsize 0.020 &  0.704 $\pm$\scriptsize 0.029 &  0.693 $\pm$\scriptsize 0.027 &  0.544 $\pm$\scriptsize 0.014 &  0.572 $\pm$\scriptsize 0.024 &  0.562 $\pm$\scriptsize 0.023 &  2.234$\mathrm{e}{+2}$ $\pm$\scriptsize 1.753$\mathrm{e}{+1}$ &  1.994$\mathrm{e}{+2}$ $\pm$\scriptsize 1.590$\mathrm{e}{+1}$ &  2.042$\mathrm{e}{+2}$ $\pm$\scriptsize 1.624$\mathrm{e}{+1}$ &  [10, 10, 10] &          [35] &          [44] &          2 &       2 &         1 &  3.237$\mathrm{e}{-9}$ &  2.667$\mathrm{e}{-8}$ &  6.439$\mathrm{e}{-9}$ &      Adam &    Adam &      Adam &      9.590$\mathrm{e}{-3}$ &  3.521$\mathrm{e}{-3}$ &  3.023$\mathrm{e}{-3}$ &           MSE &     MAE &       MAE &    320 &     219 &       396 \\
     & 50  & H.F. &  0.895 $\pm$\scriptsize 0.012 &  0.945 $\pm$\scriptsize 0.006 &  0.933 $\pm$\scriptsize 0.007 &  0.803 $\pm$\scriptsize 0.020 &  0.870 $\pm$\scriptsize 0.011 &  0.853 $\pm$\scriptsize 0.013 &  9.761$\mathrm{e}{+0}$ $\pm$\scriptsize 1.216$\mathrm{e}{+0}$ &  5.789$\mathrm{e}{+0}$ $\pm$\scriptsize 5.648$\mathrm{e}{-1}$ &  6.451$\mathrm{e}{+0}$ $\pm$\scriptsize 6.152$\mathrm{e}{-1}$ &        [2, 2] &      [21, 21] &  [21, 21, 21] &          1 &       4 &         2 &  3.989$\mathrm{e}{-8}$ &  8.670$\mathrm{e}{-9}$ &  8.080$\mathrm{e}{-9}$ &       SGD &    Adam &      Adam &      7.067$\mathrm{e}{-4}$ &  1.847$\mathrm{e}{-3}$ &  2.820$\mathrm{e}{-3}$ &           MAE &     MAE &       MAE &    256 &     211 &       204 \\
     &     & Lo &  0.794 $\pm$\scriptsize 0.019 &  0.992 $\pm$\scriptsize 0.002 &  0.955 $\pm$\scriptsize 0.008 &  0.771 $\pm$\scriptsize 0.030 &  0.903 $\pm$\scriptsize 0.000 &  0.902 $\pm$\scriptsize 0.000 &  2.128$\mathrm{e}{+0}$ $\pm$\scriptsize 2.477$\mathrm{e}{-1}$ &  2.976$\mathrm{e}{-1}$ $\pm$\scriptsize 4.422$\mathrm{e}{-2}$ &  5.608$\mathrm{e}{-1}$ $\pm$\scriptsize 8.987$\mathrm{e}{-2}$ &           [2] &  [20, 20, 20] &  [20, 20, 20] &          3 &       4 &         4 &  8.099$\mathrm{e}{-7}$ &  7.348$\mathrm{e}{-8}$ &  5.486$\mathrm{e}{-8}$ &       SGD &    Adam &      Adam &      1.196$\mathrm{e}{-4}$ &  2.398$\mathrm{e}{-3}$ &  3.818$\mathrm{e}{-3}$ &           MAE &     MAE &       MAE &    224 &     245 &       333 \\
     &     & Nat &  0.916 $\pm$\scriptsize 0.008 &  0.996 $\pm$\scriptsize 0.001 &  0.995 $\pm$\scriptsize 0.002 &  0.801 $\pm$\scriptsize 0.009 &  0.925 $\pm$\scriptsize 0.007 &  0.918 $\pm$\scriptsize 0.005 &  2.014$\mathrm{e}{+2}$ $\pm$\scriptsize 1.438$\mathrm{e}{+1}$ &  3.280$\mathrm{e}{+1}$ $\pm$\scriptsize 5.011$\mathrm{e}{+0}$ &  3.528$\mathrm{e}{+1}$ $\pm$\scriptsize 4.707$\mathrm{e}{+0}$ &           [2] &  [19, 19, 19] &  [19, 19, 19] &          1 &       2 &         2 &  9.253$\mathrm{e}{-7}$ &  6.578$\mathrm{e}{-8}$ &  4.919$\mathrm{e}{-7}$ &       SGD &    Adam &      Adam &      2.687$\mathrm{e}{-4}$ &  2.340$\mathrm{e}{-3}$ &  4.937$\mathrm{e}{-3}$ &           MAE &     MAE &       MAE &    112 &     220 &       225 \\
     &     & Reg &  0.945 $\pm$\scriptsize 0.017 &  0.995 $\pm$\scriptsize 0.001 &  0.995 $\pm$\scriptsize 0.001 &  0.823 $\pm$\scriptsize 0.022 &  0.939 $\pm$\scriptsize 0.006 &  0.931 $\pm$\scriptsize 0.004 &  9.217$\mathrm{e}{+1}$ $\pm$\scriptsize 1.343$\mathrm{e}{+1}$ &  2.119$\mathrm{e}{+1}$ $\pm$\scriptsize 2.486$\mathrm{e}{+0}$ &  2.206$\mathrm{e}{+1}$ $\pm$\scriptsize 2.423$\mathrm{e}{+0}$ &     [2, 2, 2] &      [30, 30] &  [19, 19, 19] &          1 &       1 &         3 &  5.244$\mathrm{e}{-10}$ &  2.352$\mathrm{e}{-9}$ &  6.164$\mathrm{e}{-8}$ &      Adam &    Adam &      Adam &      6.539$\mathrm{e}{-3}$ &  1.920$\mathrm{e}{-3}$ &  4.493$\mathrm{e}{-3}$ &           MAE &     MAE &       MAE &    230 &     208 &       297 \\
     & 100 & H.F. &  0.927 $\pm$\scriptsize 0.005 &  0.964 $\pm$\scriptsize 0.003 &  0.956 $\pm$\scriptsize 0.004 &  0.896 $\pm$\scriptsize 0.006 &  0.917 $\pm$\scriptsize 0.005 &  0.908 $\pm$\scriptsize 0.006 &  6.204$\mathrm{e}{+0}$ $\pm$\scriptsize 3.830$\mathrm{e}{-1}$ &  3.867$\mathrm{e}{+0}$ $\pm$\scriptsize 2.865$\mathrm{e}{-1}$ &  4.337$\mathrm{e}{+0}$ $\pm$\scriptsize 3.420$\mathrm{e}{-1}$ &           [1] &      [32, 32] &  [20, 20, 20] &          4 &       4 &         1 &  8.548$\mathrm{e}{-8}$ &  2.419$\mathrm{e}{-8}$ &  3.730$\mathrm{e}{-9}$ &       SGD &    Adam &      Adam &      5.262$\mathrm{e}{-3}$ &  1.822$\mathrm{e}{-3}$ &  7.410$\mathrm{e}{-4}$ &           MAE &     MAE &       MAE &    290 &      86 &       186 \\
     &     & Lo &  0.872 $\pm$\scriptsize 0.003 &  0.998 $\pm$\scriptsize 0.001 &  0.978 $\pm$\scriptsize 0.006 &  0.900 $\pm$\scriptsize 0.002 &  0.903 $\pm$\scriptsize 0.000 &  0.903 $\pm$\scriptsize 0.000 &  1.313$\mathrm{e}{+0}$ $\pm$\scriptsize 1.255$\mathrm{e}{-1}$ &  1.173$\mathrm{e}{-1}$ $\pm$\scriptsize 1.749$\mathrm{e}{-2}$ &  2.839$\mathrm{e}{-1}$ $\pm$\scriptsize 4.946$\mathrm{e}{-2}$ &           [1] &  [21, 21, 21] &  [20, 20, 20] &          4 &       4 &         3 &  1.034$\mathrm{e}{-9}$ &  6.892$\mathrm{e}{-8}$ &  2.486$\mathrm{e}{-8}$ &       SGD &    Adam &      Adam &      5.180$\mathrm{e}{-4}$ &  1.882$\mathrm{e}{-3}$ &  1.684$\mathrm{e}{-3}$ &           MAE &     MAE &       MAE &    376 &     125 &       372 \\
     &     & Nat &  0.959 $\pm$\scriptsize 0.004 &  0.999 $\pm$\scriptsize 0.000 &  0.999 $\pm$\scriptsize 0.000 &  0.841 $\pm$\scriptsize 0.008 &  0.958 $\pm$\scriptsize 0.002 &  0.952 $\pm$\scriptsize 0.003 &  1.293$\mathrm{e}{+2}$ $\pm$\scriptsize 1.099$\mathrm{e}{+1}$ &  1.342$\mathrm{e}{+1}$ $\pm$\scriptsize 1.168$\mathrm{e}{+0}$ &  1.502$\mathrm{e}{+1}$ $\pm$\scriptsize 1.181$\mathrm{e}{+0}$ &          [34] &  [14, 14, 14] &  [20, 20, 20] &          1 &       2 &         4 &  6.302$\mathrm{e}{-8}$ &  1.520$\mathrm{e}{-7}$ &  3.698$\mathrm{e}{-7}$ &      Adam &    Adam &      Adam &      5.768$\mathrm{e}{-3}$ &  2.262$\mathrm{e}{-3}$ &  4.292$\mathrm{e}{-3}$ &           MAE &     MAE &       MAE &    221 &     289 &       244 \\
     &     & Reg &  0.975 $\pm$\scriptsize 0.002 &  0.998 $\pm$\scriptsize 0.000 &  0.981 $\pm$\scriptsize 0.033 &  0.895 $\pm$\scriptsize 0.012 &  0.962 $\pm$\scriptsize 0.003 &  0.948 $\pm$\scriptsize 0.021 &  5.510$\mathrm{e}{+1}$ $\pm$\scriptsize 4.958$\mathrm{e}{+0}$ &  1.206$\mathrm{e}{+1}$ $\pm$\scriptsize 1.335$\mathrm{e}{+0}$ &  1.910$\mathrm{e}{+1}$ $\pm$\scriptsize 1.299$\mathrm{e}{+1}$ &           [5] &      [23, 23] &  [21, 21, 21] &          1 &       3 &         2 &  1.617$\mathrm{e}{-8}$ &  2.641$\mathrm{e}{-9}$ &  3.085$\mathrm{e}{-8}$ &       SGD &    Adam &      Adam &      2.324$\mathrm{e}{-4}$ &  1.790$\mathrm{e}{-3}$ &  3.954$\mathrm{e}{-3}$ &           MAE &     MAE &       MAE &    330 &     381 &        76 \\
\bottomrule
\end{tabular}
}
\end{sc}
\vskip -0.2cm
\caption{Prediction performance of the MVNN implementations \textsc{MVNN-Abs} and \textsc{MVNN-ReLU-Projected} measured via Pearson correlation coefficient ($\boldsymbol{r_{xy}}$), Kendall tau (\textbf{\textsc{Kt}}) and mean absolute error \textbf{(MAE)} with a $95\%$-CI in four SATS domains with corresponding bidder types: high frequency (\textsc{H.F.}), local (\textsc{Lo}), regional (\textsc{Reg}) and national (\textsc{Nat}). Both MVNNs and plain NNs are trained on $T$, validated on $0.2\cdot262,144 $ and evaluated on $262,144-T-0.2\cdot262,144$ random bundles.}
\label{tab:full_pred_performance_table_appendix}
\end{sidewaystable*}
\vspace{-10cm}
\restoregeometry
\clearpage

\newgeometry{top = 2cm,bottom=3cm,left=0.2cm, right=0.2cm}
\begin{table*}[t!]
\renewcommand\arraystretch{0.2}
\centering
\tabcolsep=0pt
	\begin{sc}
    \resizebox{1\textwidth}{!}{%
\begin{tabular}{@{}cccc@{}}
\toprule
Regional Bidder & National Bidder & Local Bidder & High Frequency Bidder \\\midrule
\multicolumn{4}{c}{GSVM} \\
\includegraphics[width=0.2\textwidth]{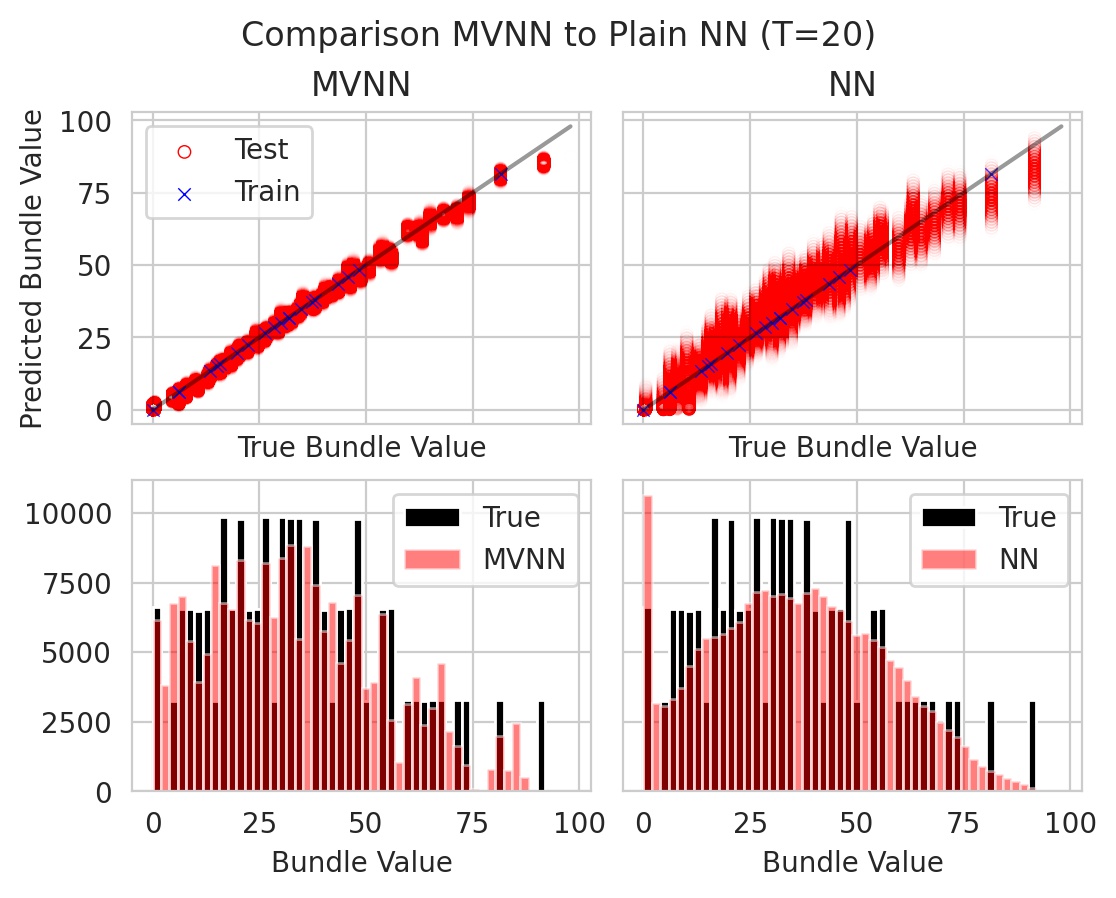} & \includegraphics[width=0.2\textwidth]{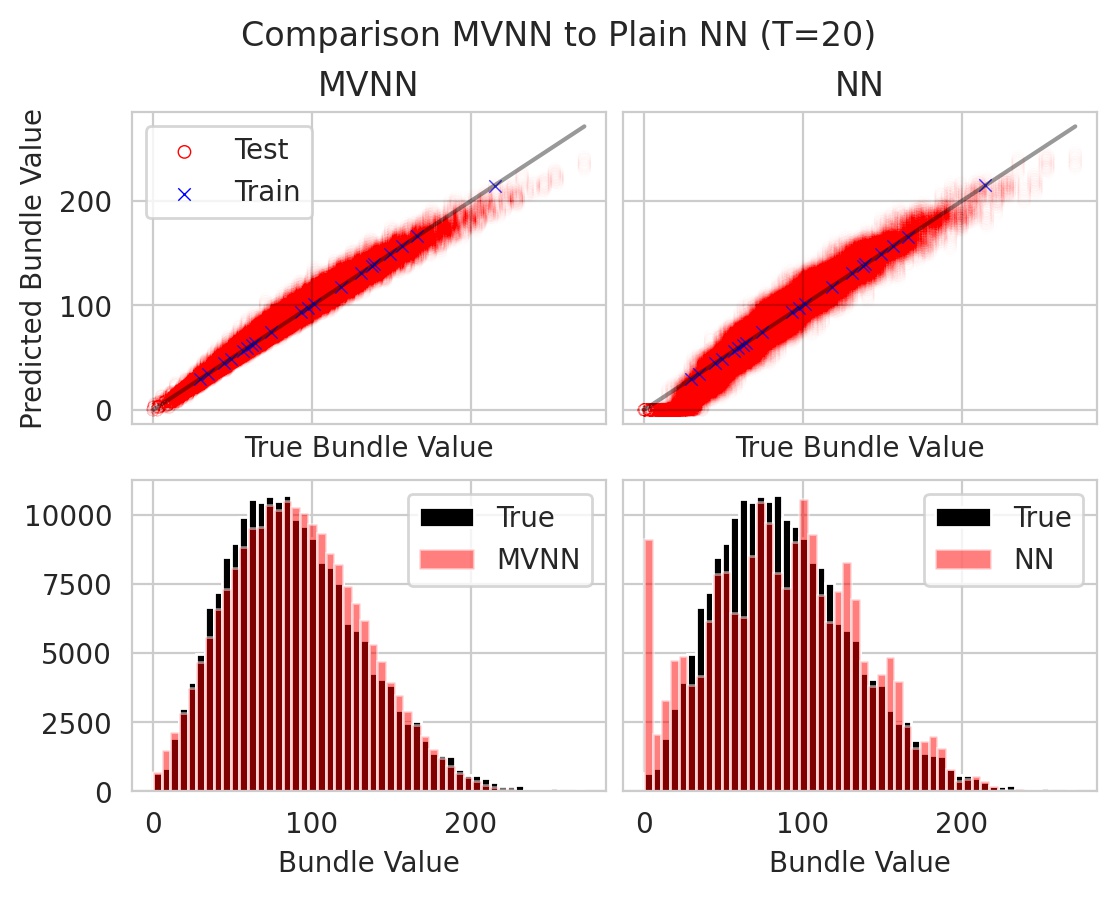} &       &                \\ \midrule
\multicolumn{4}{c}{LSVM} \\
\includegraphics[width=0.2\textwidth]{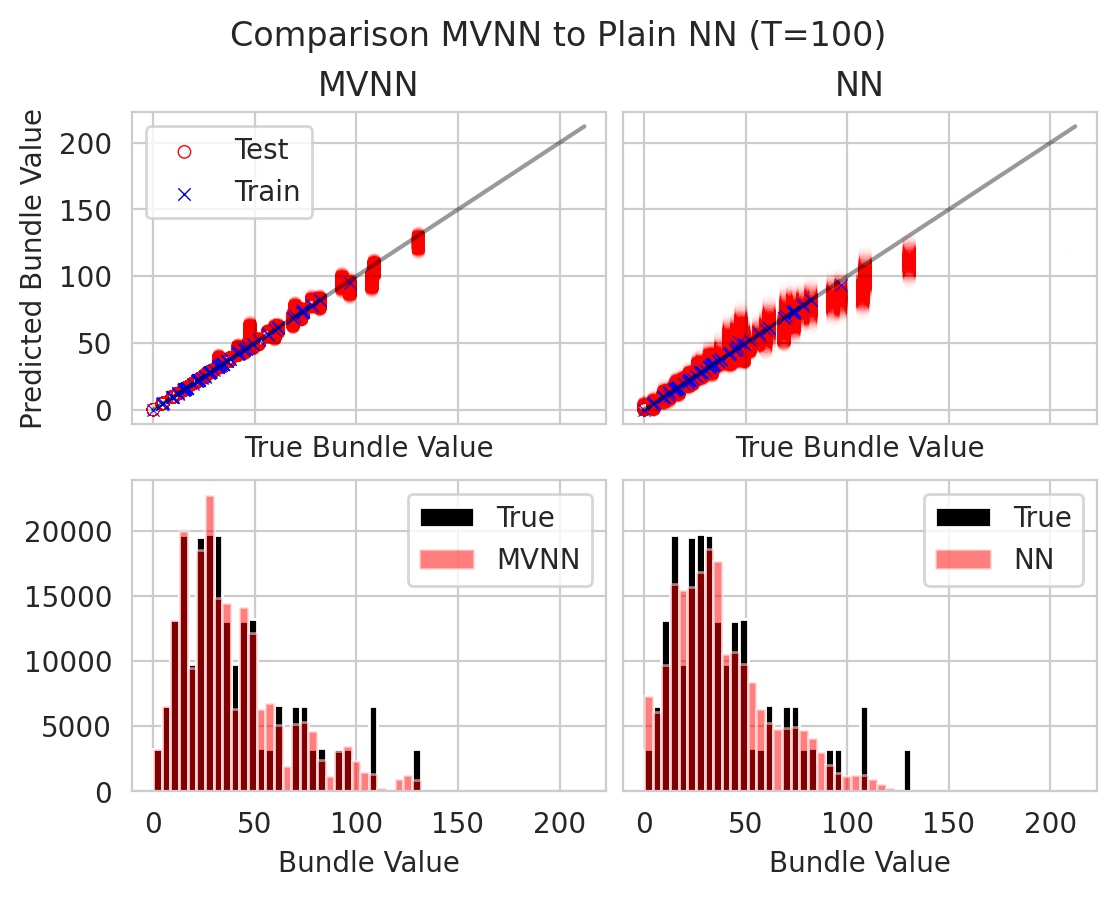} & \includegraphics[width=0.2\textwidth]{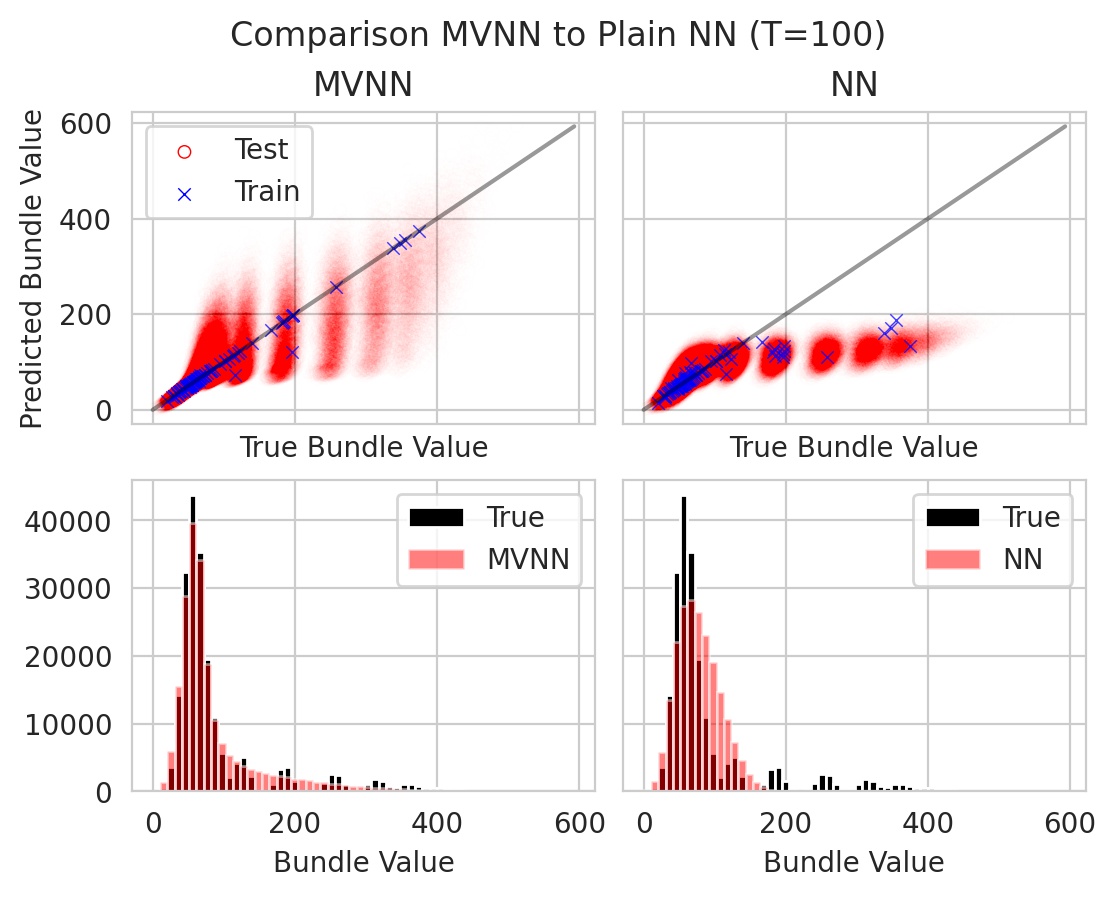}  &       &                \\\midrule
\multicolumn{4}{c}{SRVM} \\
   \includegraphics[width=0.2\textwidth]{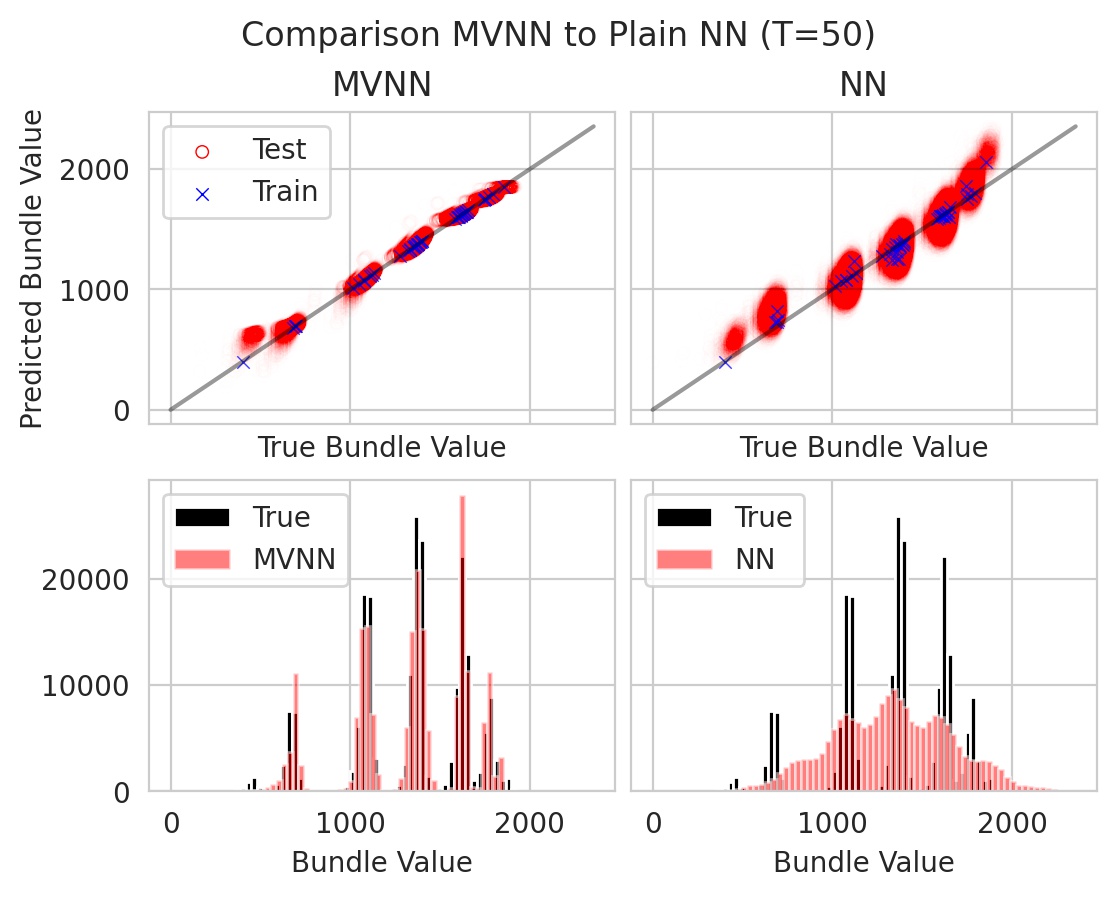} & \includegraphics[width=0.2\textwidth]{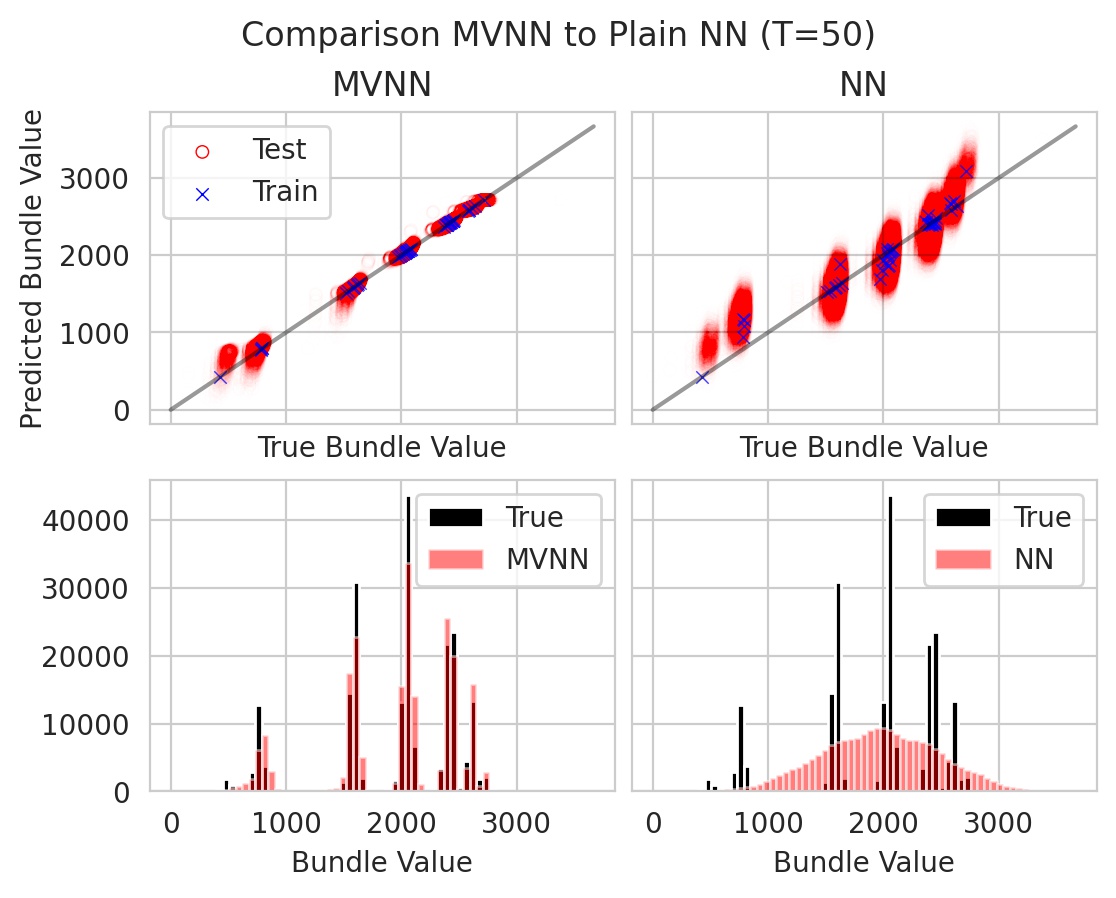}  & \includegraphics[width=0.2\textwidth]{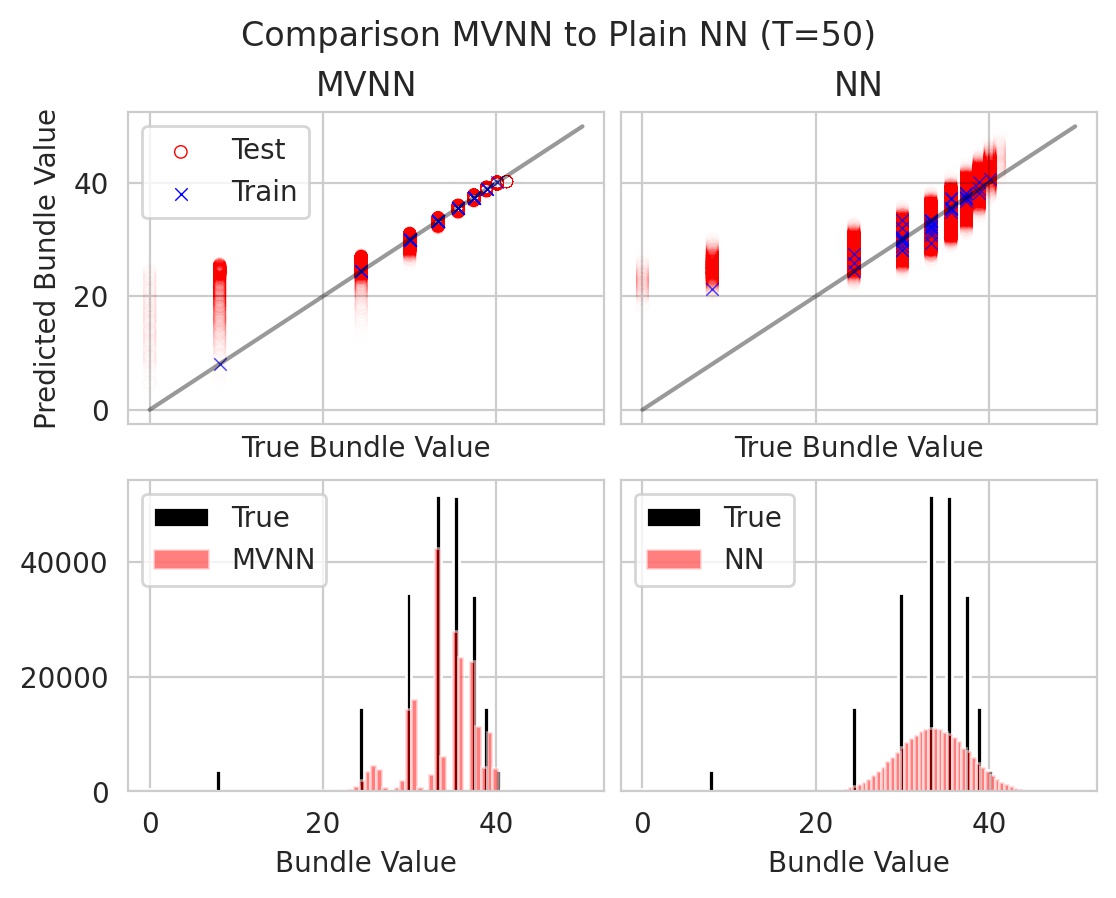}  & \includegraphics[width=0.2\textwidth]{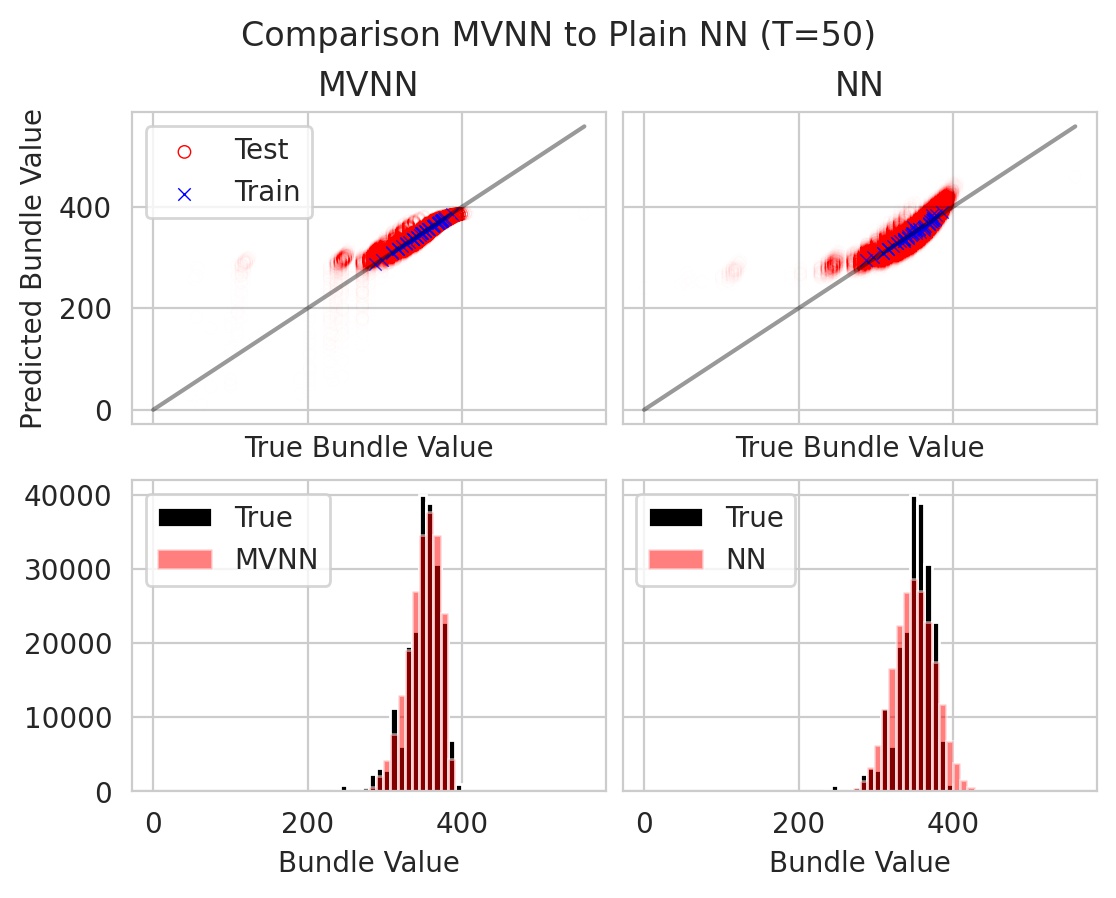}
   \\\midrule
\multicolumn{4}{c}{MRVM} \\
   \includegraphics[width=0.2\textwidth]{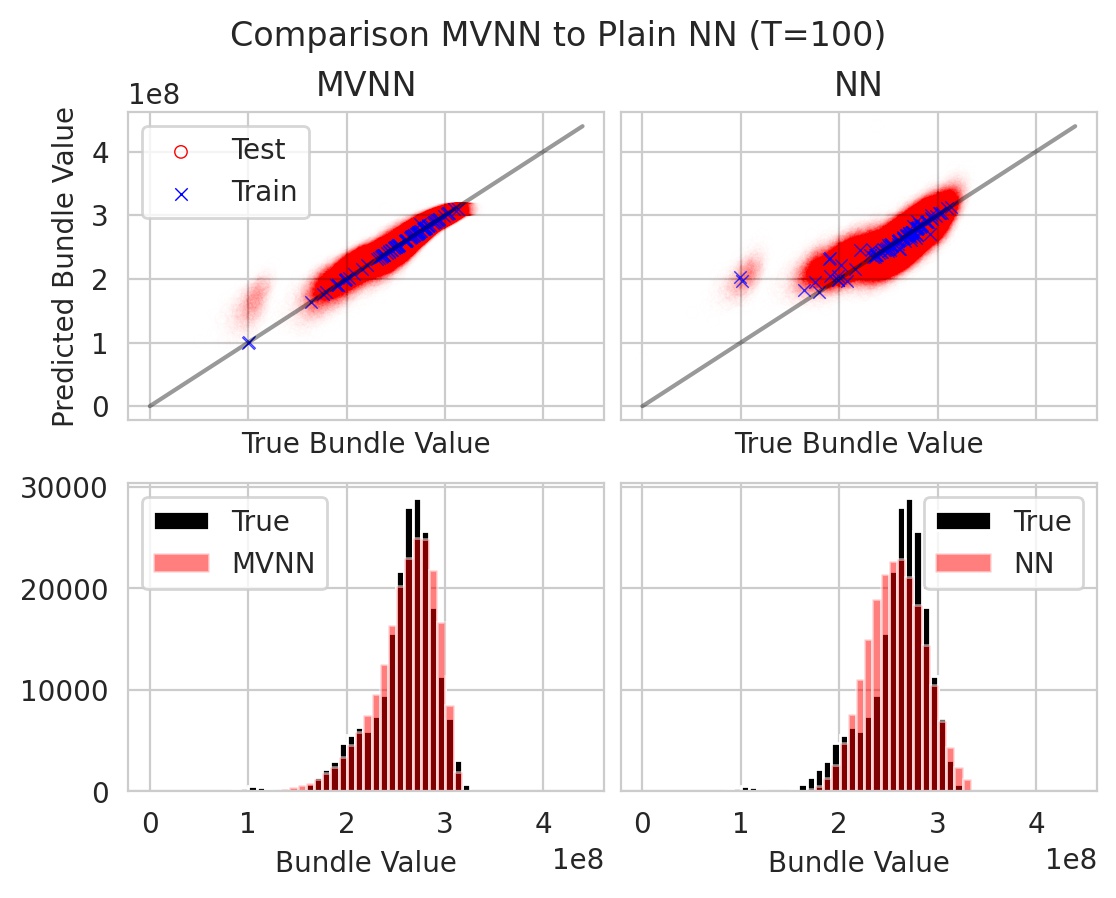} & \includegraphics[width=0.2\textwidth]{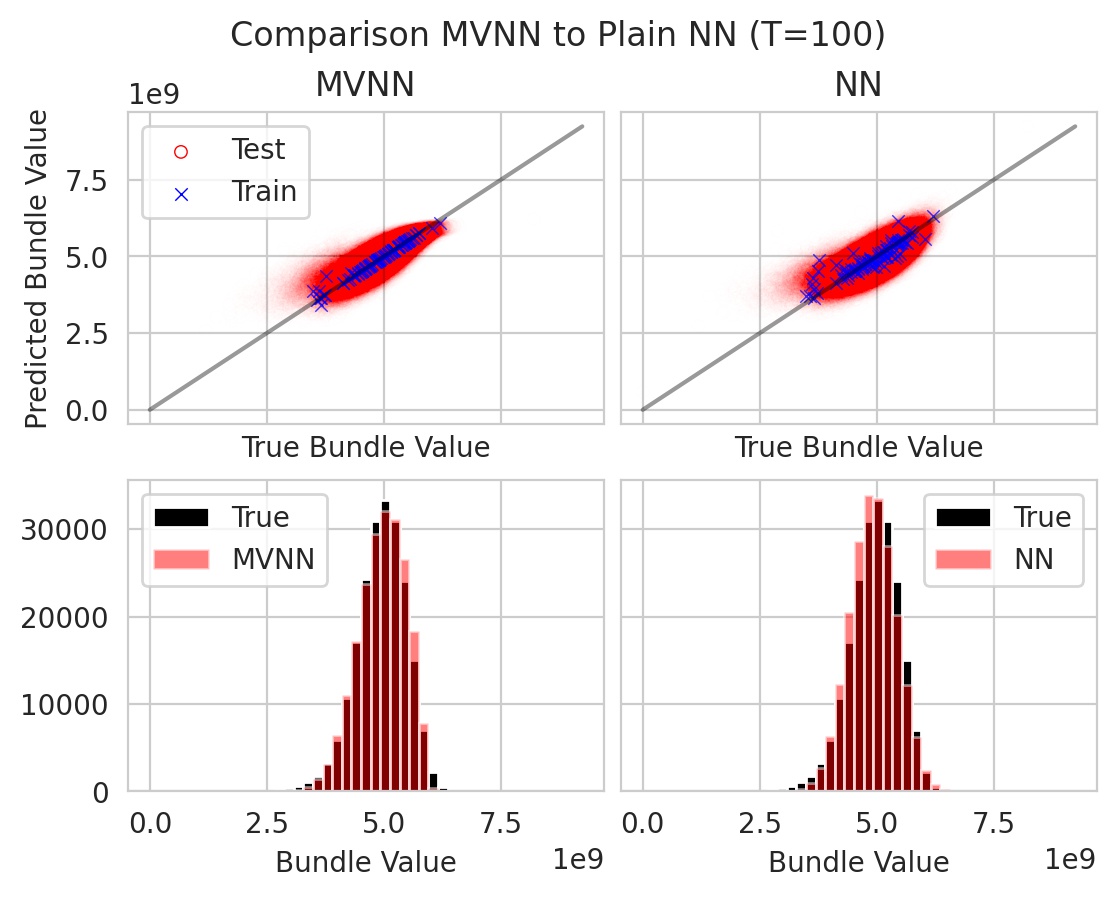}  & \includegraphics[width=0.2\textwidth]{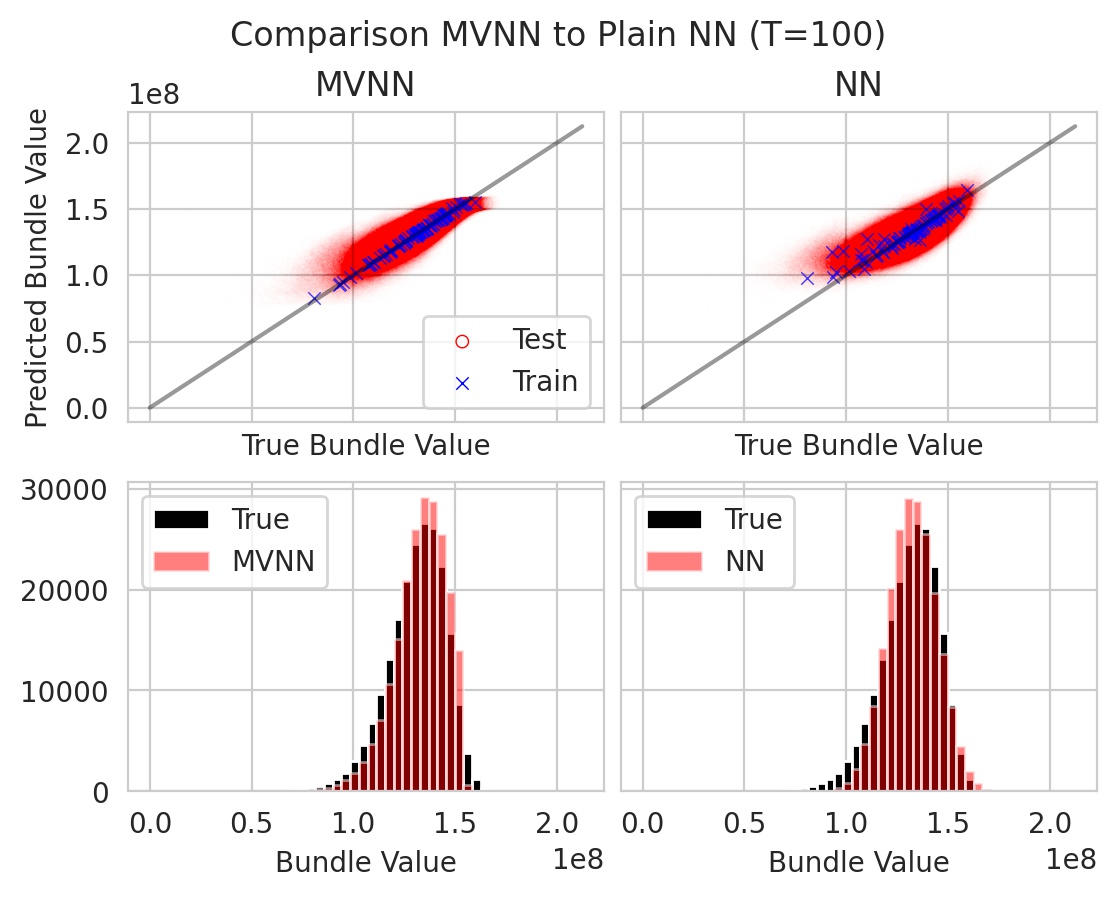} &   \\
 \bottomrule
\end{tabular}
}
\end{sc}
\vskip -0.2cm
\caption{Prediction performance comparison for selected MVNNs (\textsc{MVNN-ReLU-Projected}) and plain NNs from Table~\ref{tab:full_pred_performance_table} across all domains and bidder types. The top plots compare the quality of the prediction (training data as blue crosses), while the bottom plots focus on how well the models capture the overall value distribution.}
\label{tab:pred_perf_plots_appendix}
\end{table*}
\vspace{-10cm}
\restoregeometry

\end{document}